\newcommand{\new}[1]{{#1}}
\newcommand{\comment}[1]{}
\newcommand{\Aa}{{\cal{A}}}
\newcommand{\Bb}{{\cal{B}}}
\newcommand{\Ee}{{\cal{E}}}
\newcommand{\Ff}{{\cal{F}}}
\newcommand{\Ii}{{\cal{I}}}
\newcommand{\Jj}{{\cal{J}}}
\newcommand{\Kk}{{\cal{K}}}
\newcommand{\Mm}{{\cal{M}}}
\newcommand{\Tt}{{\cal{T}}}
\newcommand{\pp}{\mathbf{p}}
\newcommand{\fentails}{\models_{\mathsf{fin}}}
\newcommand{\poly}{\mathrm{poly}}
\newcommand{\ALC}{\ensuremath{{\cal{ALC}}\xspace}}
\newcommand{\mn}[1]{\mathsf{#1}}
\newcommand{\concepts}{\ensuremath{\mn{N_{\mn{C}}}}}
\newcommand{\conceptsBb}{\ensuremath{\CN(\widehat\Bb)}}
\newcommand{\roles}{\ensuremath{\mn{N_{\mn{R}}}}}
\newcommand{\var}{\textit{var}}
\newcommand{\Ind}{\mn{ind}}
\newcommand{\Rol}{\mn{rol}}
\newcommand{\CN}{\mn{CN}}
\newcommand{\tp}{\mn{tp}}
\definecolor{armygreen}{rgb}{0.29, 0.33, 0.13}
\newtheorem{theorem}{Theorem}
\declaretheorem{fact}% {Fact}
\title{Finite Entailment of UCRPQs over $\ALC$  Ontologies}
\author{%
V{\'i}ctor Guti{\'e}rrez-Basulto$^1$\and
Albert Gutowski$^2$\and
Yazm{\'i}n Ib{\'a}{\~n}ez-Garc{\'i}a$^1$\and
Filip Murlak$^2$ \\
\affiliations
$^1$Cardiff University, UK\\
$^2$University of Warsaw, Poland\\
\emails
  \emails
\{a.gutowski, f.murlak $\!\!$\}@mimuw.edu.pl, \{gutierrezbasultov, ibanezgarciay $\!\!$\}@cardiff.ac.uk
}
\begin{document}

\maketitle

\begin{abstract}
We investigate the problem of \emph{finite} entailment of ontology-mediated queries. We consider the expressive query language, unions of conjunctive regular path queries (UCRPQs), extending  the well-known class of  union of conjunctive queries, with regular expressions over roles. We look at ontologies formulated using the description logic \ALC, and show a tight  \textsc{2ExpTime} upper bound for entailment of UCRPQs. At the core of  our decision procedure, there  is  a novel automata-based technique introducing a stratification of interpretations induced by the deterministic finite automaton underlying the input UCRPQ. 
\end{abstract}

\section{Introduction}
\label{sec:introduction}

At the intersection of knowledge representation  and database theory lies the fundamental problem  of \emph{ontology-mediated query entailment (OMQE)}, where  the background knowledge provided by an ontology  is used to enrich the answers to queries posed to databases. 
%Laying at the intersection of knowledge representation  and database theory research, it is the fundamental problem  of ontology-mediated query entailment (OMQE), where  the background knowledge provided by an ontology  is used to enrich the answers to queries posed to databases. 
In this context, description logics (DLs) are a widely accepted family of logics used to formulate ontologies.  By now, the OMQE problem under the unrestricted semantics (reasoning over arbitrary models)  is well understood for various query languages and DLs~\cite{DBLP:journals/ki/SchneiderS20a}. In contrast, for the \emph{finite} OMQE problem, where one is interested in reasoning over finite models only, the overall landscape is rather incomplete. However, in recent years, the study of finite OMQE has been gaining traction, considering both lightweight and expressive DLs and (mostly) unions of conjunctive queries~\cite{DBLP:conf/esws/Rosati08,DBLP:conf/kr/GarciaLS14,DBLP:conf/kr/Rudolph16,GogaczIM18,DBLP:conf/ijcai/GogaczGIJM19,DBLP:conf/mfcs/DanielskiK19,GogaczGGIM20,DBLP:conf/dlog/BednarczykK21}.

%In this paper we investigate the problem of  finite entailment of (a variant of) unions of conjunctive regular path queries (UCRPQs) mediated by \ALC{} ontologies.
In this paper we consider the problem of  finite OMQE with unions of conjunctive regular path queries (UCRPQs) as the query language. %mediated by \ALC{} ontologies.
UCRPQs~\cite{DBLP:conf/pods/FlorescuLS98,DBLP:conf/kr/CalvaneseGLV00} are a powerful navigational query language  for  graph databases in which one can express that two entities are related by a path of edges that can be specified by a regular language over  binary relations. So, UCRPQs extend unions of conjunctive queries (UCQs) with atoms that might contain regular expressions that traverse the edges of the database. Indeed, path navigation is included  in the  query language XPath 2.0 for XML data, and it is also present in the  SPARQL 1.1 query language for RDF data through the property path feature.  Given the resemblance of instance data stored in ABoxes in DLs  to graph-like data, several investigations on unrestricted entailment of various types of navigational query languages mediated by DL ontologies have been carried out~\cite{DBLP:journals/jair/StefanoniMKR14,DBLP:journals/iandc/CalvaneseEO14,DBLP:journals/jair/BienvenuOS15,DBLP:conf/aaai/Gutierrez-Basulto18,DBLP:conf/ijcai/GogaczGIJM19,DBLP:conf/ijcai/BednarczykR19}, yielding algorithmic approaches and optimal complexity bounds. For finite entailment of regular path queries mediated by DL ontologies, there are only undecidability  results available~\cite{DBLP:conf/kr/Rudolph16}. The most relevant positive news are the decidability and  computational complexity results by \citeauthor{DBLP:conf/mfcs/DanielskiK19}~(\citeyear{DBLP:conf/mfcs/DanielskiK19}) and~\citeauthor{GogaczGGIM20}~(\citeyear{GogaczGGIM20}) on finite entailment of conjuctive queries with transitive closure over roles mediated by expressive DL ontologies.

We focus on ontologies formulated using the description logic \ALC{}.  Note that entailment of UCRPQs over \ALC{} ontologies is not \emph{finitely controllable}, i.e.\ finite and unrestricted entailment do not coincide as it is \emph{not} the case that  for any \ALC{} knowledge base $\Kk$ and any UCRPQ $\varphi$, it
holds that $\Kk$ entails  $\varphi$ over all (unrestricted) models iff $\Kk$ entails $\varphi$ over all finite models. By assuming that the represented world is finite, we can therefore not reuse existing complexity bounds or  algorithmic approaches to UCRPQ entailment. From a usability perspective, the suitability of this assumption depends on the potential applications. A particular interest for navigational queries comes from bioinformatics and cheminformatics
  \cite{Lysenko2016,Bio1,bio2,DBLP:conf/semweb/HuQD15,doi:10.1177/0165551519865495,bio3}. For instance, experts often need to find associations between entities in protein, cellular, drug, and disease networks (represented as graph databases), so that e.g.\  gene-disease-drug associations (corresponding to paths in the database) can be discovered for  developing new treatment methods. In this  type of applications, databases and the models they represent are clearly meant to be finite. Importantly, biochemical networks contain complex motifs involving e.g.\ \emph{cycles} or cliques. This type of  patterns can be described using UCRPQs, however,
   without the finiteness assumption   these patterns could be disregarded as the associated query might not be entailed when reasoning over all  models (including infinite ones).%, leading to  infer less relevant  associations.

%\smallskip \noindent 

\subsection*{Contribution} 

The  main technical contribution of our investigation is the development of a dedicated   automata-based method for entailment of UCRPQs over \ALC{} ontologies, providing an optimal upper bound. More precisely, we obtain  the following result, where the matching lower bound is inherited from~\cite{DBLP:conf/rr/OrtizS14}. 
\begin{theorem}~\label{thm:mainresult}
Finite entailment of UCRPQs over \ALC{} ontologies is \textsc{2ExpTime}-complete. 
\end{theorem}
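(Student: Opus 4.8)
\emph{Lower bound.} The \twoexp\ lower bound is inherited from the \twoexp-hardness of (unrestricted) entailment of conjunctive regular path queries over \ALC{} ontologies~\cite{DBLP:conf/rr/OrtizS14}; since the knowledge bases produced by that reduction are finitely controllable, the bound transfers verbatim to finite entailment. All the work is therefore in the matching upper bound, which I would obtain through a decision procedure for the complement problem: given an \ALC{} knowledge base $\Kk$ and a UCRPQ $\varphi$, decide whether there is a \emph{finite} interpretation $\Ii$ with $\Ii\models\Kk$ and $\Ii\not\models\varphi$, and do so in doubly exponential time.

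\emph{Regular abstraction of the query.} The first step is to fold the path-and-conjunction structure of $\varphi$ into a single automaton: take a deterministic finite automaton $\Aa$ over the role alphabet that simultaneously tracks all the regular role languages occurring in $\varphi$ (a product of the minimal DFAs of the individual languages). Following a role path from a fixed origin in an interpretation $\Ii$ induces a run of $\Aa$, so a match of $\varphi$ in $\Ii$ becomes a reachability pattern in the product $\Ii\times\Aa$, decorated with the bookkeeping needed for the existential variables shared among the atoms of a disjunct and for the union over disjuncts. For tree-shaped $\Ii$ this condition, and hence its negation ``$\Ii\not\models\varphi$'', is recognisable by a tree automaton, by the standard complementation of the nondeterministic (exponential-state) automaton that searches for a match. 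The obstacle — and the reason the unrestricted algorithm does not suffice — is that finite models of $\Kk$ need not be tree-shaped and, because the problem is not finitely controllable, one cannot replace a finite countermodel by its infinite unraveling.

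\emph{Stratified finite countermodels (the crux).} The heart of the argument is to show that whenever $\Kk$ has \emph{some} finite countermodel of $\varphi$, it has one of a controlled shape dictated by $\Aa$: the domain is partitioned into \emph{strata} organised by $\Aa$ (for instance by the strongly connected components of its transition graph, which form a DAG), a tree-like skeleton connects the strata acyclically while the $\Aa$-component progresses through this DAG along the skeleton, and all the cycles that a finite model unavoidably needs — both to close off the tree and to supply witnesses for the existential restrictions $A\sqsubseteq\exists r.B$ of $\Kk$ — are confined \emph{within} a single stratum, where the transitions of $\Aa$ can only idle or stay inside one component. Consequently any path that would witness a match of $\varphi$ either stays inside one stratum, where it is captured by a bounded ``local'' condition on that stratum (whose size one bounds by a finite-satisfiability argument for \ALC{} constrained by a polynomial-size product), or it crosses strata along the skeleton, where it is tracked exactly by the $\Aa$-component. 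This is the ``stratification of interpretations induced by the deterministic automaton'' of the abstract; it is precisely the device that reconciles the two conflicting pressures — finiteness and \ALC-witnessing push towards cycles, while query-avoidance forbids exactly the cycles that would complete a bad path. This step subsumes and generalises, from transitive closure to arbitrary regular expressions, the ``bag''-style finite-model analyses of~\cite{DBLP:conf/mfcs/DanielskiK19,GogaczGGIM20}, and proving it — reorganising an arbitrary finite countermodel into stratified form without ever creating a match of $\varphi$, and bounding the size of each stratum — is where I expect the real difficulty to lie.

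\emph{Automaton and complexity.} Given the stratified-countermodel property, the stratification lets one work with tree automata over \emph{finite} trees, so finiteness of the model is built into the acceptance and no infinitary condition is needed: encode a stratified countermodel as a finite labelled tree — the skeleton as the tree, each stratum as bounded-size ``bag'' data attached to a node — and build a tree automaton $\Bb$ accepting exactly such encodings that come from finite models of $\Kk$ avoiding $\varphi$. Its states carry (i) the \ALC-type information certifying $\Ii\models\Kk$, of single-exponential size; (ii) the partial-match and $\Aa$-reachability information certifying $\Ii\not\models\varphi$, of at most double-exponential size; and (iii) the progress information along the DAG of strata. Then $\Bb$ has at most doubly exponentially many states, so testing $L(\Bb)\neq\emptyset$ — equivalently, deciding $\Kk\not\fentails\varphi$ — is feasible in \twoexp, matching the lower bound. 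Emptiness testing and the automaton construction are comparatively routine; the weight of the proof is the stratification lemma of the previous paragraph.
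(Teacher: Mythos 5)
Your lower bound and overall strategy (search for finite countermodels of controlled shape dictated by the query's DFA) match the paper's, but the two load-bearing steps are left as gaps, and the specific devices you propose for them would not work as stated. First, the ``regular abstraction'': taking a product DFA $\Aa$ and following runs ``from a fixed origin'' is not enough, because the RPQ atoms of a CRPQ can begin to be matched at arbitrary elements, so one must trace runs of the automaton starting in \emph{every} state on \emph{every} infix of every path. The paper handles this with the expansion $\widehat\Bb$ whose states are permutations of the states of the semiautomaton $\Bb$ (the tape construction of Boja\'nczyk and Parys): each edge of a decorated interpretation then gets a \emph{level}, and the crucial monotonicity is that threads (traced runs of $\Bb$) never increase in level. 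Your alternative stratification by SCCs of $\Aa$'s transition graph does not provide this per-edge, per-starting-state monotone quantity, and it is this quantity that drives the paper's recursion (eliminate the lowest level from the query, then from the model, alternating until the interpretation is discrete).

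Second, and more seriously, the claim that ``finiteness of the model is built into the acceptance'' of a tree automaton over finite trees begs the question. When one unravels a finite countermodel into a tree of strata/bags (the paper's $\ell'$-flat form), the resulting tree is in general \emph{infinite}, precisely because witnesses for $A\sqsubseteq\exists r.B$ keep being regenerated; the paper must then fold this infinite object back into a finite one using coloured blocking (Fact~\ref{fact:coloured-blocking}), which in turn requires first converting the relevant level-$\ell'$ CRPQs into UCQs of bounded path length (Lemmas~\ref{lem:bounded-paths}--\ref{lem:bounded-crpq}) so that edge redirection provably creates no new match. Relatedly, a single RPQ atom can be witnessed by a path crossing unboundedly many strata, so ``a bounded local condition on each stratum'' cannot certify $\Ii\not\models\varphi$; the paper needs the whole apparatus of fragments, partitions, the concepts $A^\kappa_{\varphi,V}$, and $\ell$-consistency to make non-satisfaction compositional across bags (Lemmas~\ref{lem:consistent} and~\ref{lem:consistentcomp}). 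You correctly identify where the difficulty lies, but the stratification lemma you defer is the theorem, and the hints you give for it (SCC strata, polynomial-size strata, finite-tree acceptance) point in directions that would need to be replaced rather than merely elaborated.
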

In prior work, \citeauthor{DBLP:conf/kr/Rudolph16}~(\citeyear{DBLP:conf/kr/Rudolph16}) showed that finite entailment of 2RPQs in $\mathcal{ALCIO\hspace{-1pt}F}$   is undecidable.
Theorem~\ref{thm:mainresult}  thus provides a key step towards delimiting the decidability boundary of finite OMQE with navigational queries.

\smallskip
% At the heart of our decision procedure there is a  
% decomposition of  deterministic finite automata into reversible components, allowing to deterministically navigate back and forth between the steps of a computation~\cite{DBLP:conf/latin/Pin92}\nb{Either we make a connection to this later in the paper (Outline maybe?), or we delete this.}. 
% This decomposition builds upon the so-called \emph{tape construction}, previously used 
% to efficiently evaluate  queries in the extension of XPath 1.0 where arbitrary regular expressions may appear as path expressions~\cite[Section 9]{DBLP:journals/jacm/BojanczykP11}.
% Our main technical finding is that finite entailment of UCRPQs over \ALC{} ontologies is \textsc{2ExpTime}-complete. The lower bound has been previously established~\cite{DBLP:journals/iandc/CalvaneseEO14}.
% %
% In prior work, \citeauthor{DBLP:conf/kr/Rudolph16}~\citeyear{DBLP:conf/kr/Rudolph16} showed that finite entailment of 2RPQs over $\mathcal{ALCIOF}$ ontologies is undecidable.
% %
%  Our result thus provides an important step towards delimiting the decidability boundary of finite OMQE with navigational queries.

%{\color{armygreen}%
%We focus on developing a \textsc{2ExpTime} decision procedure; the matching lower bound has been previously shown~\cite{DBLP:journals/iandc/CalvaneseEO14}.
At the heart of our approach to finite entailment of UCRPQs in \ALC{} there  is  a stratification of interpretations induced by the deterministic finite automaton underlying the  UCRPQ.
%there is a  
% decomposition of  deterministic finite automata into reversible components, allowing to deterministically navigate back and forth between the steps of a computation~\cite{DBLP:conf/latin/Pin92}\nb{Either we make a connection to this later in the paper (Outline maybe?), or we delete this.}. %
This stratification  builds upon the so-called \emph{tape construction}, previously used 
to efficiently evaluate  queries in the extension of XPath 1.0 where arbitrary regular expressions may appear as path expressions~\cite{DBLP:journals/jacm/BojanczykP11}.
To realize the tape construction,  our method represents UCRPQs by means of a  semiautomaton $\Bb$~\cite{AlgebraicAutomata} and defines an expansion of $\Bb$, allowing to trace runs of $\Bb$ that begin in all possible states, on all infixes of the input word. We make interpretations $\Ii$  knowledgeable of  the expansion by enriching paths of $\Ii$ with  its possible runs and by  associating   edges of $\Ii$ with levels $\ell$ induced by the transitions of the expansion. %, and  based on this, defining the level of $\Ii$.  
In a similar fashion we also make CRPQs sensible of levels. With this at hand,  we tackle finite entailment by eliminating the lowest level from a query and from an interpretation, and then recursively solving the simpler problem. At each step of this process, we should be able to arrange solutions to simpler problems in a hierarchical way so that we can reason over them. To this aim, we consider a variant of entailment that includes an \emph{environment}, which will provide the necessary information to position the arranged solutions to simpler problems in the context of larger  interpretations. To better keep track of the complexity of our recursive method, we introduce a modification of the entailment problem modulo environment in which we look at a  particular type of finite models: \emph{$(\ell,\ell')$-models}, which are models  with edges of levels $\ell$ or higher that are `consistent' w.r.t.\  queries referring to edges of level $\ell'$ or higher. We solve the problem of finding $(\ell,\ell')$-models  recursively by increasing $\ell$ and $\ell'$ in an alternating way, until both reach the maximum level $n+1$, with $n$ the number of states of $\Bb$. This will mean solving  finite entailment modulo environment, and thus standard finite entailment as well.
%}

%Missing proofs can be found in the technical report \cite{full}.
\smallskip
Missing proofs can be found in the technical report available at \url{https://arxiv.org/pdf/2204.14261.pdf}.

\subsection*{Related Work}
We next discuss some existing work relevant to our study.

\smallskip \noindent 
\textbf{OMQE of Navigational Queries. }As previously discussed, there exist various works on unrestricted entailment of navigational query languages mediated by DL ontologies. Most of them concentrate on extensions of regular path queries (RPQs), such as  UCRPQs, and consider both Horn~\cite{DBLP:journals/jair/BienvenuOS15} and expressive DLs~\cite{DBLP:journals/iandc/CalvaneseEO14,DBLP:conf/aaai/Gutierrez-Basulto18,DBLP:conf/ijcai/GogaczGIJM19,DBLP:conf/ijcai/BednarczykR19}. There have been also some studies on entailment of graph XPath queries~\cite{DBLP:journals/jair/StefanoniMKR14,DBLP:conf/kr/BienvenuCOS14,DBLP:conf/dlog/KostylevRV14}.

\smallskip \noindent 
\textbf{Finite OMQE. }
There exist various decidability results and optimal complexity bounds for finite entailment of union of conjunctive queries in  Horn DLs~\cite{DBLP:conf/esws/Rosati08,DBLP:conf/kr/GarciaLS14} and in expressive DLs from the $\mathcal S$ family~\cite{GogaczIM18,DBLP:conf/ijcai/GogaczGIJM19,DBLP:conf/mfcs/DanielskiK19}. In most cases, the computational complexity coincides with that of the unrestricted case, but the algorithmic approaches are completely different. On the negative side, undecidability of  finite entailment of UCQs in the more expressive DL $\mathcal{SHOIF}$ was shown by~\cite{DBLP:conf/kr/Rudolph16}, as well as the undecidability result for finite entailment of 2RPQs in $\mathcal{ALCIOF}$. Closer to our work are the positive results  on finite entailment of UCQs with transitive closure over roles  in expressive DLs  allowing for transitivity or transitive closure over roles~\cite{DBLP:conf/mfcs/DanielskiK19,GogaczGGIM20}. These results  close the distance to the undecidability frontier for finite entailment from a different angle by considering ontology languages more expressive than \ALC{}, but a subclass of UCRPQs as query language. In the  context of database theory research, finite OMQE (also called open-world query entailment) has also been investigated; for instance,~\citeauthor{DBLP:journals/tocl/AmarilliB20}~(\citeyear{DBLP:journals/tocl/AmarilliB20}) study finite OMQE  for inclusion dependencies  and functional dependencies over relations of arbitrary arity, and \citeauthor{DBLP:journals/iandc/Pratt-Hartmann09}~(\citeyear{DBLP:journals/iandc/Pratt-Hartmann09}) looks at finite OMQE in the two-variable fragment of FOL with counting quantifiers.

%as well as the mentioned, undecidability result of 2RPQs over $\mathcal{ALCIOF}$.

\smallskip \noindent 
\textbf{Finite Controllability. }
There have been also a few works on finite controllability in the context of DLs. For instance, \citeauthor{DBLP:conf/dlog/BednarczykK21}~(\citeyear{DBLP:conf/dlog/BednarczykK21}) recently showed that the $\mathcal{ZOI}$ and $\mathcal{ZOQ}$ members of the $\mathcal{Z}$ family are finitely controllable for UCQs. Beyond DLs, there have been several works on UCQ-finite controllability:  for the guarded fragment of FOL~\cite{DBLP:journals/corr/BaranyGO13} or for various fragments of existential rules~\cite{DBLP:conf/datalog/CiviliR12,DBLP:conf/lics/GogaczM13,BAGET20111620,DBLP:conf/ijcai/AmendolaLM18,DBLP:conf/ijcai/GottlobMP18}. Closer to our study, is the work by \citeauthor{DBLP:conf/kr/FigueiraFB20}~(\citeyear{DBLP:conf/kr/FigueiraFB20}) on the classification of finitely and non-finitely controllable subclasses of CRPQs over ontologies formulated in the guarded-negation fragment of FOL or in the frontier fragment of existential rules. However, no complexity results or algorithms for finite entailment are provided for the non-finitely controllable cases.

\section{Preliminaries}
\label{sec:preliminaries}

%In this section we introduce the necessary background.
\subsection{Description Logics} 

We consider a vocabulary consisting of countably infinite disjoint
sets of \emph{concept names} $\concepts$, 
\emph{role names} $\roles$, and \emph{individual names} $\mn{N_I}$.
%A \emph{role} is a role name  or an \emph{inverse role} $r^-$.
%The \emph{(transitive-reflexive) closure} of a role $r$ is
%$r^*$. 
\emph{$\ALC$-concepts $C,D$} are defined by the grammar 
% \begin{align*}
%   C,D ::= A\mid &\;\neg C \mid C \sqcap D \mid \{a\}  \mid
%   \exists r. C  \mid  \exists r^*. C  \mid \\
%   \mid &\;\qnrleq n s C    \mid \qnrleq n {s^*} C
% \end{align*}
\[
  C,D ::= A
  \mid \neg C
  \mid C \sqcap D
%  \mid \{a\}
  \mid \exists r. C
 % \mid \qnrleq n {s'} C  
\]
where $A \in \mn{N_C}$ and  $r \in \mn{N_R}$.  % is a role or the closure of a role, $n
%\geq 0$ is a natural number given in binary, and $s'$ is a role name or
%the closure of a role name.  We will use $\qnrgeq n {s'} C$ as
%abbreviation for $\neg \qnrleq {n{-}1} {s'} C$, together with s
  We use  standard abbreviations $\bot$, $\top$, $C\sqcup D$ and $\forall r.C$.
%Concepts of the form $\qnrleq n  {s'}  C $, $\qnrgeq n {s'} C$,
% $\qnrleq n {s^*} C $,
%$\qnrgeq n {s^*} C$
%and $\{a\}$ are called 
%\emph{at-most restrictions}, \emph{at-least restrictions},
% \emph{transitive at-most re2trictions},
% $\emph{transitive at-least restrictions},
%and \emph{nominals}, respectively. Note that in
%$\ALCOIQfwdt$-concepts, \emph{inverse roles} are not allowed in
%at-most and at-least restrictions. 

An \emph{$\ALC$-TBox $\Tt$} is a finite set of \emph{concept
inclusions (CIs)} $C\sqsubseteq D$, where $C,D$ are
$\ALC$-concepts.  An \emph{ABox} $\Aa$ is a finite non-empty
set of \emph{concept} and \emph{role assertions} of the form $A(a)$,
$r(a,b)$,  where $A \in \mn{N_C}$, $r \in \mn{N_R}$ and
$\{a,b\} \subseteq \mn{N_I}$. \new{We write $\Ind(\Aa)$ for the \emph{set of individual names} occurring in $\Aa$.}
 A \emph{knowledge base (KB)} is a pair
$\Kk=(\Tt, \Aa)$. 
%We write $\CN(\Kk)$, $\Rol(\Kk)$, and $\Ind(\Kk)$ for the \emph{sets of all concept, role, and individual names}  occurring in $\Kk$.  
\new{We write $\CN(\Kk)$ and $\Rol(\Kk)$ for the \emph{sets of all concept and role names}  occurring in $\Kk$.}  
We let $\|\Kk\|$ be the total size of the representation of $\Kk$.

Without loss of generality, we assume throughout the paper that all CIs are in one of the following \emph{normal forms}:
\[\bigsqcap_i A_i \sqsubseteq \bigsqcup_j B_j,
  \quad A \sqsubseteq \exists r.B,
  \quad A \sqsubseteq \forall r.B,
  \]
where $A,A_i,B,B_j \in \mn{N_C}$, $r\in \mn{N_R}$,
and empty disjunction and conjunction are equivalent to $\bot$ and
$\top$, respectively. 
Additionally, for each  $A \in \CN(\Kk)$ there is a
complementary $\bar A \in \CN(\Kk)$ axiomatized with $\top
\sqsubseteq A \sqcup \bar A$ and $A \sqcap \bar A \sqsubseteq \bot$.

\subsection{Interpretations} 

The semantics is given as usual via \emph{interpretations} $\Ii=
(\Delta^\Ii, \cdot^\Ii)$ consisting of a non-empty \emph{domain}
$\Delta^\Ii$ and an \emph{interpretation function $\cdot^\Ii$}
mapping concept names to subsets of the domain and role names to
binary relations over the domain, \new{and individual names to elements of the domain.} %Further, we adopt the %\emph{standard
%name assumption}, i.e., $a^\Ii=a$ for all $a \in \mn{N_I}$. 
The
interpretation of complex concepts $C$ is defined in the usual
way~\cite{DLBook}. An interpretation $\Ii$ is a
\emph{model of a TBox $\Tt$}, written $\Ii \models\Tt$ if
$C^\Ii \subseteq D^\Ii$ for all CIs $C\sqsubseteq D\in \Tt$.  \new{It is a \emph{model of an ABox $\Aa$}, written $\Ii\models \Aa$,
  if $\mathsf{ind}(\Aa) \subseteq \Delta^\Ii$, $a^\Ii = a$ for each
  $a\in\Ind(\Aa)$, $(a,b)\in r^\Ii$ for all $r(a,b)\in \Aa$, and $a\in A^\Ii$ for
 all $A(a)\in \Aa$.} \new{The first two conditions constitute the so-called \emph{standard name
    assumption.}}
    %     ; our results carry over to the setting with
%   the \emph{unique name assumption}, where $a^\Ii = b^\Ii$ iff $a =b$ for
%   all $a,b\in\Ind(\Aa)$, as well as the setting where individual
%   names are interpreted freely.}
% It is
% a \emph{model of an ABox \Aa}, written $\Ii \models \Aa$, if $a\in
% A^\Ii$ for all $A(a)\in \Aa$,  $(a,b)\in r^\Ii$ for all $r(a,b)\in\Aa$. 
Finally, $\Ii$ is a \emph{model of a KB
$\Kk=(\Tt, \Aa)$}, written $\Ii \models \Kk$, if $\Ii\models \Tt$ and 
$\Ii \models \Aa$.
% If \Kmc has a model, we say that it is \emph{satisfiable}.

An interpretation $\Ii$ is \emph{finite} if $\Delta^\Ii$ is finite. An
interpretation $\Ii'$ is a \emph{sub-interpretation} of $\Ii$,
written as $\Ii'\subseteq \Ii$, if $\Delta^{\Ii'}\subseteq
\Delta^\Ii$, $A^{\Ii'}\subseteq A^\Ii$, and $r^{\Ii'}\subseteq
r^{\Ii}$ for all $A\in\mn{N_C}$ and $r\in \mn{N_R}$. For $\Sigma
\subseteq \concepts \cup \roles$, $\Ii$ is an interpretation \emph{over signature} $\Sigma$ if $A^\Ii=\emptyset$ and
$r^\Ii=\emptyset$ for all $A\in \concepts \setminus \Sigma$ and $r\in
\roles\setminus\Sigma$. 
The union $\Ii \cup
\Jj$ of $\Ii$ and $\Jj$ is an interpretation such that $\Delta^{\Ii
\cup \Jj} = \Delta^{\Ii} \cup \Delta^{\Jj}$, $A^{\Ii \cup \Jj} =
A^{\Ii} \cup A^{\Jj}$, and $r^{\Ii \cup \Jj} = r^{\Ii} \cup r^{\Jj}$
for all $A\in\mn{N_C}$ and $r\in \mn{N_R}$.

A \emph{unary $\Kk$-type} is a subset of $\CN(\Kk)$ including either $A$ or $\bar A$ for each $A \in \CN(\Kk)$. 
%Let $\Tp(\Kk)$ be the set of all unary $\Kk$-types. 
For an interpretation $\Ii$ and an element $d \in \Delta^\Ii$, the \emph{unary $\Kk$-type of $d$ in $\Ii$} is $\tp^\Ii(d) = \left \{ A \in \CN(\Kk) \bigm | d \in A^\Ii\right\}$. 
We say that $\Ii$ \emph{realizes} a unary $\Kk$-type $\tau$ if $\tau = \tp^\Ii(d)$ for some $d \in \Delta^\Ii$.

\subsection{Queries and Finite Entailment}

We next introduce the query language.
% that is the subject of the present paper.
We concentrate on Boolean queries, that is, queries without answer
variables. The extension to queries with answer variables is standard;
see, for example,~\cite{GlimmLHS08}. A \emph{conjunctive 
regular path query (CRPQ)} is a first-order formula
\[\varphi=\exists\mathbf x\, \psi(\mathbf x)\]
such that $\psi(\mathbf x)$ is constructed using $\wedge$ over
atoms of the form $A(t)$ or $\Ee(t,t')$   where $A \in \mathsf{N_C}$, $t,t'$ are variables from $\mathbf x$
or individual names from $\mn{N_I}$, and $\Ee$ is a \emph{path
expression} defined by the grammar
% 
%$$\Ee,\Ee' ::= r  \mid A?\mid \Ee^* \mid \Ee \cup \Ee'
%\mid \Ee\circ\Ee'$$
$$\Ee,\Ee' ::= r  \mid \Ee^* \mid \Ee \cup \Ee'
\mid \Ee\circ\Ee'$$
where $r\in \roles$. %and $A\in \concepts$. 
Thus, $\Ee$ is essentially
a regular expression over the (infinite) alphabet $\{r\mid r\in \roles \}$. %\cup
%\{A? \mid A \in \concepts\}$. 
The set of individual names in
$\varphi$ is denoted with $\mn{ind}(\varphi)$. A \emph{conjunctive query (CQ)} is a CRPQ that does not use the operators $*,\cup$ and $\circ$ in path expressions, and a \emph{regular path query (RPQ)} consists of a single atom of the form $\Ee(t,t')$.

The semantics of CRPQs is defined via matches. Let us fix a CRPQ
$\varphi=\exists \mathbf x \,\psi(\mathbf x)$ and an interpretation $\Ii$. A
\emph{match for $\varphi$ in $\Ii$} is a function
\[\pi:\mathbf x \cup \mathsf{ind}(\varphi)\to \Delta^\Ii \]
such that $\pi(a)=a$, for all $a\in \mn{ind}(\varphi)$, and
$\Ii,\pi\models\psi(\mathbf x)$ under the standard semantics 
of first-order logic extended with a rule for atoms of the form
 $\Ee(t,t')$. More formally, we define:
\begin{itemize}
%\item $\Ii,\pi\models \psi_1\vee\psi_2$ iff $\Ii,\pi\models \psi_1$ or
    %$\Ii,\pi\models \psi_2$;

 \item $\Ii,\pi\models \psi_1\wedge\psi_2$ iff $\Ii,\pi\models
   \psi_1$ and $\Ii,\pi\models \psi_2$;

\item $\Ii,\pi\models A(t)$ iff $\pi(t) \in A^\Ii$;
  \item $\Ii,\pi\models \Ee(t,t')$ iff
    $(\pi(t),\pi(t'))\in\Ee^\Ii$, where $\Ee^\Ii$ is defined
    inductively as $(\Ee^*)^\Ii = (\Ee^\Ii)^*$, 
    $(\Ee_1\cup \Ee_2)^\Ii  = \Ee_1^\Ii\cup \Ee_2^\Ii$,
    $(\Ee_1\circ \Ee_2)^\Ii  = \Ee_1^\Ii\circ \Ee_2^\Ii$.
    %
    %\begin{align*}
    %(r^-)^\Imc & = \{ (e,d)Â \mid (d,e)\in r^\Imc \} &
    %(A?)^\Ii &= \{(d,d)\mid d\in A^\Ii\} & 
    %(\Ee^*)^\Ii& = (\Ee^\Ii)^*; &
    %(\Ee_1\cup \Ee_2)^\Ii & = \Ee_1^\Ii\cup \Ee_2^\Ii; \\
    %&& (\Ee_1\circ \Ee_2)^\Ii & = \Ee_1^\Ii\circ \Ee_2^\Ii.&
    %\end{align*}
\end{itemize}
An interpretation $\Ii$ \emph{satisfies} $\varphi$, written $\Ii\models \varphi$, if
there exists a match for $\varphi$ in $\Ii$. A \emph{union of CRPQs (UCRPQ)} is a finite set of CRPQs and a \emph{union of CQs (UCQ)} is a finite set of CQs. An interpretation $\Ii$ satisfies an UCRPQ $\Phi$, written as $\Ii\models
\Phi$, if $\Ii \models \varphi$ for some $\varphi \in \Phi$. We say that $\Kk$ \emph{finitely entails} $\Phi$, written
$\Kk\fentails \Phi$, if each finite model of $\Kk$ satisfies $\Phi$.  A model
of $\Kk$ that does not satisfy $\Phi$ is a \emph{counter-model}.  The
\emph{finite entailment problem} asks if a given KB $\Kk$ finitely
entails a given query $\Phi$.

\subsection{UCRPQs via Semiautomata}

We work with UCRPQs represented by means of a \emph{semiautomaton} \cite{AlgebraicAutomata} $\Bb = (Q, \Gamma, \delta)$ where $Q$ is a finite set of states, $\Gamma \subseteq \{r\mid r \in \mathsf{N_R}\}$ is a finite alphabet---throughout the paper we assume $\Gamma = \Rol(\Kk)$, and $\delta:Q\times\Gamma \to Q$ is the transition function. A semiautomaton is  essentially a deterministic finite automaton without initial and final states; a run of a semiautomaton $\Bb$ over a word $w$ is defined just like for a finite automaton, except that it can begin in any state and there is no notion of accepting runs. 
Under this representation, an RPQ is an atom over a  binary predicate of the form $\Bb_{q,q'}$ where $q,q' \in Q$ are states of $\Bb$. We let $\Ii,\pi\models
\Bb_{q,q'}(t,t')$ iff  $(\pi(t),\pi(t')) \in \Bb_{q,q'}^\Ii$ where $\Bb_{q,q'}^\Ii$ is the set of pairs $(e,e')$ such that  for some $n\in \mathbb{N}$ there exist
$r_1,\ldots,r_{n} \in \Gamma$ and $e_0,\ldots,e_n\in \Delta^\Ii$
such that 
\begin{itemize} 
\item $e_0=e$ and $e_n=e'$;
\item $(e_{i-1},e_i) \in (r_i)^\Ii$ for all $i\in \{1,\ldots,n\}$;
\item there exists a run of $\Bb$ on the word $r_1\ldots r_{n}$ that begins in state $q$ and ends in state $q'$. %written as $q \xrightarrow{r_1\ldots r_{n}} q'$.
\end{itemize}
We also allow \emph{edge atoms} of the form $r(x,x')$ for $r\in\Gamma$.

Each UCRPQ $\Phi$ can be effectively rewritten into a UCRPQ $\Phi'$ expressed by means of a semiautomaton $\Bb$ of size $k\cdot 2^{O(m)}$ where $k$ is the number of path expressions in $\Phi$ and $m$ is their maximal size. The size of CRPQs in $\Phi'$ is bounded by the size of CRPQs in $\Phi$ and $|\Phi'| = 2^{\poly{\|\Phi\|}}$, where $\|\Phi\|$ is the total size of $\Phi$. 
%[TODO: We should explain a bit better why this does not increase the complexity. The best place for this is where fragments are defined and where the reduction to the (1,1)-model problem is discussed. Also, explain somehow that $\|\Phi\|$ will refer to the size of the concise representation of $\Phi$. The number of disjuncts can be exponential, the size is linear, the automaton is exponential. Might be a bit messy...]

For simplicity we work with KBs $\Kk=(\Tt,\Aa)$ where the ABox $\Aa$ is \emph{trivial}; that is, \new{$\Ind(\Aa) = \{a\}$} for some $a\in\mn{N_I}$ and $\Aa$ contains only concept assertions. The general finite entailment problem can be reduced to this special case using the following lemma.

\begin{restatable}{lemma}{lemabox} \label{lem:abox}
Given an oracle for finite entailment for trivial ABoxes, the general finite entailment $(\Tt,\Aa)\fentails \Phi$ can be decided in time $2^{\poly(\|(\Tt,\Aa)\|)\cdot 2^{\poly(\|\Phi\|)}}$ using calls to the oracle for $\Kk'=(\Tt,\Aa')$ and $\Phi'$ consisting of $2^{\poly(\|\Phi\|)}$ CRPQs of linear size over the same semiautomaton as $\Phi$.
\end{restatable}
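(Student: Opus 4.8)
The plan is to reduce finite entailment over an arbitrary ABox $\Aa$ to a collection of finite entailment problems over trivial ABoxes, by guessing, for each finite counter-model $\Ii$, a finite amount of information describing how the named part of $\Ii$ looks and how it is connected to the anonymous part. First I would observe that a finite model $\Ii$ of $(\Tt,\Aa)$ fails to satisfy $\Phi$ iff its restriction to $\Ind(\Aa)$ is consistent with $\Tt$ and, for every individual $a$, the ``cone'' hanging below $a$ (the sub-interpretation reachable from $a$ without going back through named elements) together with the named part admits no match of any disjunct of $\Phi$. The key combinatorial object to guess is the interpretation $\Jj_0$ of the named individuals: a map from $\Ind(\Aa)$ to unary $\Kk$-types together with the role edges among them, subject to the ABox assertions and to the obvious local consistency with $\Tt$. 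Having fixed $\Jj_0$, the remaining anonymous material attached at each individual $a$ can be treated as a separate model of a trivial-ABox KB $\Kk_a=(\Tt,\Aa_a)$, where $\Aa_a$ records the type of $a$ forced by $\Jj_0$.

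The main obstacle — and the reason the query $\Phi$ must be rewritten rather than reused verbatim — is that a match of a CRPQ in $\Ii$ need not stay inside a single cone: a path witnessing $\Bb_{q,q'}(t,t')$ may weave in and out of the named part $\Jj_0$ arbitrarily many times, and different query atoms may be matched in different cones glued through $\Jj_0$. To handle this I would precompute, from $\Jj_0$ and the semiautomaton $\Bb$, the set of pairs $(q,q')$ such that $\Bb_{q,q'}$ already holds between two named elements \emph{using only edges of $\Jj_0$}; this is a straightforward reachability computation in a product of $\Jj_0$ with $\Bb$, of size polynomial in $\|\Aa\|$ and $|Q|$. Then I would ``close'' each cone KB $\Kk_a$ by adding, for its root $a$, edge-like shortcuts reflecting which states of $\Bb$ can be reached at $a$ by going up into $\Jj_0$ and coming back down — equivalently, I rewrite $\Phi$ into a UCRPQ $\Phi'$ over the \emph{same} semiautomaton $\Bb$ whose CRPQs are obtained by (i) partitioning the variables of a CRPQ of $\Phi$ among the individuals and the single anonymous root, (ii) replacing atoms routed through the named part by the precomputed facts about $\Jj_0$, and (iii) for the atoms that dip into $\Jj_0$ and return, inserting the corresponding $\Bb_{q,q'}$-atoms anchored at the root $a$, using the reachability information. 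Each such rewriting is a CRPQ of size linear in the original over $\Bb$; the number of choices (how to split variables, which named elements to route through, which partial matches to realize where) is $2^{\poly(\|\Phi\|)}$ once $\Jj_0$ is fixed, and the number of candidate $\Jj_0$'s is $2^{\poly(\|(\Tt,\Aa)\|)}$.

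Putting this together: $(\Tt,\Aa)\fentails\Phi$ fails iff there is a consistent choice of $\Jj_0$ such that, for every disjunct $\varphi\in\Phi$ and every way of distributing its match across $\Jj_0$ and the cones, at least one cone fails to realize its assigned piece — i.e.\ iff for some $\Jj_0$, for \emph{each} individual $a$ the trivial-ABox KB $\Kk'=(\Tt,\Aa')$ with $\Aa'$ encoding $\tp_{\Jj_0}(a)$ does \emph{not} finitely entail the relevant rewritten UCRPQ $\Phi'$ (of $2^{\poly(\|\Phi\|)}$ CRPQs of linear size over $\Bb$). Since the named part $\Jj_0$ is already finite and the cones can be taken finite by the oracle's counter-models, gluing a consistent family of finite cone counter-models onto $\Jj_0$ yields a finite counter-model of $(\Tt,\Aa)$, and conversely every finite counter-model decomposes this way; this is the one routine-but-necessary verification that a glued interpretation is still a model of $(\Tt,\Aa)$ and still has no $\Phi$-match, which follows because the rewriting was designed to capture exactly the cross-cone matches. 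The overall running time is the number of choices of $\Jj_0$ and of rewritings, i.e.\ $2^{\poly(\|(\Tt,\Aa)\|)\cdot 2^{\poly(\|\Phi\|)}}$, times the cost of the oracle calls, as claimed.
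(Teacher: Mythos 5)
Your overall strategy --- guess the named core, cut the query at the individuals, and delegate the anonymous parts to trivial-ABox oracle calls over the same semiautomaton --- is the same as the paper's, and your counting matches. The gap is in the completeness direction, where you assert that ``every finite counter-model decomposes this way.'' It does not: in an arbitrary finite counter-model the ``cones'' reachable from different individuals can share anonymous elements, and an anonymous element in the cone of $a$ can have edges into (and out of) named individuals other than $a$. So the cones are neither pairwise disjoint nor attached to the named part at a single point, and a witnessing path can leave the cone of $a$ through some $b\neq a$ --- a routing that your rewriting, which anchors all boundary crossings at the root $a$, does not capture. Consequently an arbitrary finite counter-model does not directly yield counter-models for your oracle calls, and the ``routine-but-necessary verification'' you defer is exactly the non-routine part.

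The paper closes this gap with a normalization step that is missing from your proposal: from a finite counter-model $\Ii$ it builds a new counter-model $\Jj$ consisting of a core that is a copy of $\Aa$ (with concept memberships inherited from $\Ii$) together with, for each $a\in\Ind(\Aa)$, a \emph{fresh full copy of $\Ii$} that shares only the element $a$ with the core and is disjoint from the other copies. Each copy is a model of $\Tt$ by construction, and $\Jj$ maps homomorphically onto $\Ii$ (send each copy back to $\Ii$), so $\Jj\not\models\Phi$ because $\Ii\not\models\Phi$. This shows that, without loss of generality, counter-models have exactly the ``flower'' shape your reduction presupposes: disjoint petals, each meeting the core in a single individual, so every path crossing the boundary of the petal at $a$ must pass through $a$ itself. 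With this lemma in place, the remainder of your argument (enumerating cores, splitting RPQ atoms at the constant $a$, and the complexity bookkeeping) goes through essentially as in the paper.
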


\subsection{Entailment Modulo Environment}

We solve the entailment problem using a divide-and-conquer approach in which counter-models are decomposed into simpler ones, whose existence is easier to decide. Each level of this recursive procedure will involve certain modifications to the TBox. For complexity reasons we need to pay close attention to these changes, making sure that no blow-up is involved. To make it easier, we generalize the entailment problem by turning the modifications into a separate part of the input, which allows fixing the TBox for the duration of the whole procedure. 
At every level of the recursion, we will need to reason `externally' about the way simpler pieces are put together to form the larger counter-model, and `internally' about how to specify the required properties of a piece depending on what is happening outside. We will think of the models as induced subinterpretations of a larger interpretation. Dually, the remaining part of the larger interpretation can be seen as an external context, in which our models live. The relevant features of this context will be represented by environments, which we now define.

An \emph{environment} $\Ee = (\Theta, \varepsilon)$ consists of a set $\Theta$ of unary types and a function $\varepsilon: \Theta \to 2^{\Rol(\Kk) \times \CN(\Kk)}$. The intended meaning is that only types from $\Theta$ are allowed and each element of an allowed unary type $\tau$ has an $r$-edge to an element in the extension of  $B$ in the external context for each $(r,B)\in \varepsilon(\tau)$. Accordingly, we say that $\Ii$ is a \emph{model of $\Kk$ modulo $\Ee$} and write ${\Ii
  \models^\Ee \Kk}$ if $\Ii$ realizes only unary types from $\Theta$
and it is a model of $\Kk$ under the following
\emph{relaxed semantics of existential restrictions}: 
\begin{itemize}
  \item for every existential restriction $\exists r. B$ in $\Kk$ and every element $d \in \Delta^\Ii$, $d \in (\exists r. B)^\Ii$ iff either there is an $r$-edge in $\Ii$ from $d$ to an element $e \in B^\Ii$ or $(r,B) \in \varepsilon\left(\mathsf{tp}^\Ii(d)\right)$.
  %\item for every universal restriction $\forall r. B$ in $\Kk$ and every element $d \in \Delta^\Ii$, $d \in (\forall r. B)^\Ii$ iff each $r$-edge in $\Ii$ from $d$ to leads to  an element in $B^\Ii$ and  $(r,\bar B) \notin \varepsilon\left(\mathsf{tp}^\Ii(d)\right)$.
\end{itemize}
(The semantics of universal restrictions is not altered and it is the environment's reponsibility to account for them.)
Correspondingly, a query $\Phi$ is \emph{finitely entailed by
$\Kk$ modulo $\Ee$}, written $\Kk \fentails^\Ee \Phi$, if for
each finite interpretation $\Ii$, if $\Ii \models^\Ee \Kk$ then $\Ii
\models \Phi$. The problem of \emph{finite entailment modulo environment} is to
decide for a given KB $\Kk$, environment $\Ee$, and query $\Phi$ if $\Kk
\fentails^\Ee \Phi$.

Note that finite entailment modulo environment and ordinary finite entailment are interreducible. In one direction, it is enough to take the set of all unary $\Kk$-types for $\Theta$ and set $\varepsilon(\tau) = \emptyset$ for all $\tau \in \Theta$. In the other direction, the conditions imposed on unary types and the relaxed semantics of existential restrictions can be expressed easily in the TBox. The latter reduction, however, might significantly increase the size of the TBox. It is easier to control the size of the input at different levels of the recursion when these conditions are explicitly represented in the environment. 

\section{Expansion and Decorations}
\label{ssec:expansion}

In order to handle UCRPQs expressed by means of a semiautomaton $\Bb$ we need to be able to trace runs of $\Bb$ that begin in all possible states, on all infixes of the input word. We achieve this using the following construction.

Let us fix an arbitrary linear order on the set $Q$ of the states of $\Bb$. The \emph{expansion} of $\Bb$ is a semiautomaton $\widehat \Bb$ whose set of states is the set $\widehat Q$ of all permutations of $Q$. Thus, an element of $\widehat Q$ can be seen as a tuple $\pp=(p_1, p_2, \dots, p_n)$ such that $p_i$ is the image of the $i$th state of $\Bb$ under the respective permutation. We refer to positions in this tuple as \emph{levels}. In particular, the \emph{level of $q\in Q$ in $\pp$} is the unique $i$ such that $q = p_i$. Assuming $\delta:Q \times \Rol(\Kk) \to Q$ is the transition function of $\Bb$, we define the transition function  \[\hat \delta : \widehat Q \times \Rol(\Kk) \to \widehat Q\] of $\widehat \Bb$ by letting $\hat \delta \big(\pp, r\big)$ be the permutation $\pp'$ obtained by listing all states appearing in the sequence \[\delta(\pp,r) = \big(\delta(p_1,r),\delta(p_2,r),\dots, \delta(p_n,r)\big)\] in the order of their first appearances, followed by all remaining states of $\Bb$ ordered as in $Q$. Note that the level of $\delta(p_i,r)$ in $\pp'$ is at most $i$.
Consider the set $P \subseteq \{1,2,\dots,n\}$ of levels $i$ such that the level of  $\delta(p_i,r)$ in $\pp'$ is equal to $i$. It follows from the definition of $\pp'$ that $P = \{1, 2, \dots, \ell\}$ for some $\ell \in \{1, 2, \dots, n\}$. We call this number $\ell$ the \emph{level of transition $\pp \stackrel{r}{\longrightarrow} \pp'$}. 

From each run of $\widehat\Bb$ on a word $w$ we can reconstruct all runs of $\Bb$ on $w$. Let $\pp_0,\pp_1,\dots, \pp_m$ be a run of $\widehat \Bb$ on $w$. Consider a run $q_0, q_1, \dots, q_m$ of $\Bb$ on $w$. For $i=0, 1, \dots, m$, let $\ell_i$ be the level of $q_i$ in $\pp_i$. Any sequence $\ell_0, \ell_1, \dots, \ell_m$ associated like this with a run of $\Bb$ will be called a \emph{thread} in the run of $\widehat \Bb$ (see Fig.~\ref{fig:thread}). Notice that two threads that begin at different levels can meet at the same level somewhere along the run; if this happens they remain equal until the end of the run. Also, threads can be born in the middle of a run of $\widehat\Bb$, but they never disappear. 
A crucial property of threads is that they are non-increasing sequences: the level of $q_{i+1}$ in $\pp_{i+1}$ is bounded by the level of  $q_{i}$ in $\pp_{i}$.

\begin{figure*}
    \centering
    \includegraphics[scale=0.6]{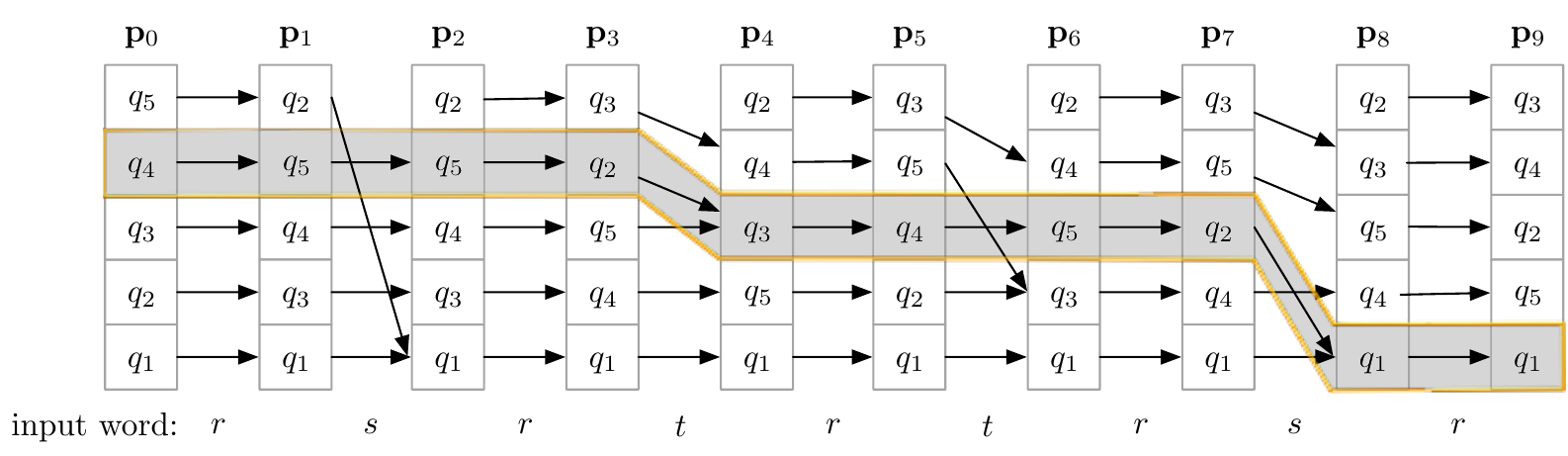}
    \caption{A thread in a run of the expansion of a semiautomaton.}
    \label{fig:thread}
\end{figure*}

\begin{restatable}{lemma}{lemlevels} \label{lem:levels}
Let $\pp_0,\pp_1,\dots, \pp_m$ be a run of $\widehat \Bb$ on $w$, and let $q,q'$ be states of $\Bb$. There is a run of $\Bb$ on $w$ from $q$ to $q'$ iff there exist positions $0 \leq j_1 < j_2 < \dots <  j_k = m$, levels $n \geq \ell_1 > \ell_2 > \dots > \ell_k \geq 1$, and states $q_0, q_1, \dots, q_{k}$ with $1\leq k\leq n$ such that $q_0=q$, $q_k=q'$, and
\begin{itemize}
    \item the level of $q_0$ in $\pp_0$ is $\ell_1$ and
    the level of $q_k$ in $\pp_m$ is $\ell_k$;
    \item for all $i \in \{1, 2, \dots, k-1\}$, the level of $q_i$ in $\pp_{j_i}$ is $\ell_i$ and 
    the level of $\delta\big(q_i, w[j_i+1]\big)$ in $\pp_{j_i+1}$ is $\ell_{i+1}$;
    %\item the level of $q_k$ in $\pp_m$ is $\ell_k$;
    \item for all $i \in \{1, 2, \dots, k\}$, each transition taken in the segment of the run from $\pp_{j_{i-1}+1}$ (or $\pp_0$ for $i=1$) to $\pp_{j_i}$ has level at least $\ell_i$. % , apart from the ones mentioned in the point above (from $\pp_{j_h}$ to $\pp_{j_h+1}$ for some $h$).
\end{itemize}
\end{restatable}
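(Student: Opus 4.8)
The plan is to prove both directions by carefully tracking how threads move through levels along the run of $\widehat\Bb$. Recall the two structural facts established just before the statement: (i) threads are non-increasing, i.e.\ the level of $q_{i+1}$ in $\pp_{i+1}$ is at most the level of $q_i$ in $\pp_i$ whenever $q_0,\dots,q_m$ is a run of $\Bb$; and (ii) the level of the transition $\pp_{i}\stackrel{w[i+1]}{\longrightarrow}\pp_{i+1}$ is the largest $\ell$ such that levels $1,\dots,\ell$ are ``fixed points'' of $\delta(\pp_i,w[i+1])$, and in particular if the level of $q_i$ in $\pp_i$ is at most the transition level then the level of $q_{i+1}=\delta(q_i,w[i+1])$ in $\pp_{i+1}$ equals the level of $q_i$ in $\pp_i$ (the thread stays put), whereas a strict drop in a thread's level can only happen at a step whose transition level is strictly below the thread's current level.

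For the ($\Leftarrow$) direction, I would assume we are given the positions $j_1<\dots<j_k=m$, the strictly decreasing levels $\ell_1>\dots>\ell_k$, and the states $q_0,\dots,q_k$ satisfying the three bullet conditions, and I would explicitly build a run of $\Bb$ on $w$ from $q$ to $q'$. The idea is to concatenate $k$ segments: on the block of letters $w[1],\dots,w[j_1]$ run $\Bb$ deterministically starting from $q_0=q$; this is forced, and I must check it arrives at $q_1$ at position $j_1$. Then the bullet conditions tell me the level of $q_1$ in $\pp_{j_1}$ is $\ell_1$ and after reading $w[j_1+1]$ the level of $\delta(q_1,w[j_1+1])$ drops to $\ell_2$; continuing deterministically from $\delta(q_1,w[j_1+1])$ over $w[j_1+2],\dots,w[j_2]$ lands at $q_2$, and so on, until after the last block we are at $q_k=q'$ at position $m$. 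The one subtlety is to argue that the deterministic $\Bb$-run started at $q_{i}$ at position $j_i$ really is ``the'' thread whose level is tracked, and that it reaches $q_{i+1}$; this follows because within a block the transition levels are all $\geq\ell_{i+1}$ (third bullet applied at index $i+1$) while the thread sits at level $\ell_{i+1}$ or above, so by fact (ii) the thread cannot drop below $\ell_{i+1}$ inside the block, and its level at the end of the block matches the level of $q_{i+1}$ in $\pp_{j_{i+1}}$, which by determinism of $\widehat\Bb$ and $\Bb$ pins down the state to be $q_{i+1}$.

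For the ($\Rightarrow$) direction, I would start from an arbitrary run $q_0^{\mathrm{full}},q_1^{\mathrm{full}},\dots,q_m^{\mathrm{full}}$ of $\Bb$ on $w$ with $q_0^{\mathrm{full}}=q$, $q_m^{\mathrm{full}}=q'$, look at its associated thread $\ell_0^{\mathrm{full}},\dots,\ell_m^{\mathrm{full}}$ of levels in $\pp_0,\dots,\pp_m$, and extract the witnesses from the points where the thread strictly decreases. Concretely: the thread is non-increasing by fact (i), so it has finitely many strict descents; let the descents occur between positions $j_i$ and $j_i+1$, set $\ell_1$ to be the initial level $\ell_0^{\mathrm{full}}$, let $\ell_{i+1}$ be the value right after the $i$-th descent, set $q_0=q$ and let $q_i=q_{j_i}^{\mathrm{full}}$ for $i\geq 1$, and append $j_k=m$, $\ell_k=\ell_m^{\mathrm{full}}$, $q_k=q'$ (merging the last descent with the endpoint if the thread happens to descend exactly at the final step, and otherwise taking the final ``constant tail'' as the $k$-th segment). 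Then the levels are strictly decreasing by construction; the first and last bullets hold by definition of the thread's level at the endpoints; the middle bullet holds because at each recorded descent the level of $q_i$ in $\pp_{j_i}$ is the pre-descent value $\ell_i$ and the level of $\delta(q_i,w[j_i+1])=q_{j_i+1}^{\mathrm{full}}$ in $\pp_{j_i+1}$ is the post-descent value $\ell_{i+1}$; and the third bullet holds because between two consecutive descents the thread is constant at level $\ell_i$, so by fact (ii) every transition in that stretch must have level $\geq\ell_i$ (a transition of level $<\ell_i$ would force a drop). Finally $k\leq n$ since the levels are a strictly decreasing sequence in $\{1,\dots,n\}$, and $k\geq 1$ trivially.

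The main obstacle I anticipate is bookkeeping at the boundaries — making the off-by-one indexing in the thread descents line up cleanly with the claimed inequalities $j_1<\dots<j_k=m$ and $\ell_1>\dots>\ell_k$, and handling the degenerate cases (thread never descends, so $k=1$; or descends at the very last letter). The genuinely mathematical content, fact (ii) about transition levels forcing threads to stay put, is already isolated in the discussion preceding the lemma, so the proof itself is mostly a disciplined translation of ``a thread's trajectory = a staircase of constant plateaus separated by drops'' into the combinatorial data in the statement.
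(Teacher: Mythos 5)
Your proposal is correct and follows essentially the same route as the paper's proof: the forward direction splits the thread induced by the given run of $\Bb$ into constant-level plateaus separated by level-decreasing transitions (taking the last position of each plateau as $j_i$), and the backward direction uses the non-increasing property of threads together with the third bullet to show the thread is pinned at level $\ell_{i+1}$ throughout each block, so determinism forces it to arrive at $q_{i+1}$. The paper states this more tersely, but the content is the same.
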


As an illustration of Lemma~\ref{lem:levels}, consider the run of the expanded semiautomaton shown in Fig.~\ref{fig:thread}. Tracing the run of the original semiautomaton on the same word, starting in state $q_4$, we discover the positions $j_1=3$ and $j_2=7$ where the corresponding thread drops to a lower level. Between these positions, the thread stays at the same level, beginning with $\ell_1=4$ (taking transitions of levels $5, 4, 5 \geq \ell_1$), followed by $\ell_2=3$ (taking transitions of levels $5, 3, 5\geq \ell_2$), and $\ell_3=1$ (taking a transition of level $5 \geq \ell_3$).

The next step is to account for the possible runs of $\widehat \Bb$ over paths in the interpretation. Towards this goal, we decorate elements of the interpretation with states of $\widehat \Bb$. To avoid additional blow-up, we represent states of $\widehat \Bb$ using combinations of fresh concept names $C_{q,\ell}$ where $q$ is a state of $\Bb$ and $\ell\in \{1, 2, \dots, n\}$ is a level; we write $\conceptsBb$ for the set of all $C_{q,\ell}$. For a state $\pp = (p_1, p_2, \dots, p_n)$ of $\widehat \Bb$, by $C_\pp$ we mean the concept $C_{p_1,1} \sqcap C_{p_2,2} \sqcap \dots \sqcap C_{p_n,n}$. We say that an element $e\in\Delta^\Ii$ is \emph{decorated} with state $\pp$ if $e\in C_\pp^\Ii$. 
An interpretation $\Ii$ is \emph{$\widehat\Bb$-decorated} if no element has incoming edges over different roles from $\Rol(\Kk)$ and $\Ii$ satisfies the CIs \[C_\pp \sqsubseteq \forall r. C_{\hat \delta(\pp,r)}\,,  \quad C_\pp \sqcap C_{\pp'} \sqsubseteq \bot\,, \quad \top \sqsubseteq  \bigsqcup_{\pp\in \widehat Q} C_{\pp}\] for all states $\pp,\pp'$ of $\widehat\Bb$ such that $\pp\neq \pp'$. The axiomatization above is exponential in the size of $\Bb$, but we can do better. 

\begin{restatable}{lemma}{lemaxioma}~\label{lem:axioma}
Given $\Bb$ one can compute in polynomial time a TBox $\widehat \Tt_{\Bb}$ such that $\Ii\models \widehat \Tt_{\Bb}$ iff $\Ii$ is $\widehat \Bb$-decorated.
\end{restatable}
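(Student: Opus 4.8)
The plan is to get rid of the exponentially large axiomatisation by exploiting two facts: a state $\pp=(p_1,\dots,p_n)$ of $\widehat\Bb$ is just a permutation, i.e.\ a bijection sending each $q\in Q$ to its level, which is faithfully recorded by the Boolean ``bits'' $C_{q,\ell}\in\conceptsBb$ present at an element; and, given these bits, deciding whether they encode a (unique) state of $\widehat\Bb$ and computing the value of $\hat\delta(\cdot,r)$ on that state are doable in polynomial time. I would therefore rely throughout on the standard observation that any polynomial-size Boolean circuit whose inputs are concept names is turned into a TBox of polynomial size: introduce one fresh concept name $X_g$ per gate $g$ and a constant number of small CIs per gate forcing $X_g^\Ii$ to equal the Boolean combination of its inputs (adding complementary concepts $\bar C_{q,\ell}$ to the vocabulary for the negation gates); the values computed at the output gates, evaluated at an element, are then propagated to all of its $r$-successors by CIs of the form $X_g\sqsubseteq\forall r.C_{q',\ell'}$. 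So the whole construction reduces to exhibiting a handful of polynomial-size circuits and a few easy CIs.

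Two of the required circuits are trivial. For the conditions $\top\sqsubseteq\bigsqcup_\pp C_\pp$ and $C_\pp\sqcap C_{\pp'}\sqsubseteq\bot$ it suffices to say that the bits $C_{q,\ell}$ present at an element form the graph of a bijection $Q\to\{1,\dots,n\}$, expressed directly by $O(|Q|^3)$ CIs of linear size: $C_{q,\ell}\sqcap C_{q,\ell'}\sqsubseteq\bot$ and $C_{q,\ell}\sqcap C_{q',\ell}\sqsubseteq\bot$ for $\ell\neq\ell'$, $q\neq q'$, together with $\top\sqsubseteq\bigsqcup_{\ell}C_{q,\ell}$ for each $q$ (injectivity of a total map $Q\to\{1,\dots,n\}$ forces surjectivity); it is routine to verify the equivalence with the first two displayed CIs, and if one insists on matching them literally including redundant labellings one can instead encode, via the same circuit technique, the polynomial-time property ``the present $C$-labels contain exactly one permutation''. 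The clause ``no element has incoming edges over different roles'' is captured with one fresh concept name $\mn{In}_r$ per $r\in\Rol(\Kk)$ and the CIs $\top\sqsubseteq\forall r.\mn{In}_r$ and $\mn{In}_r\sqcap\mn{In}_{r'}\sqsubseteq\bot$ for $r\neq r'$: in any model $\mn{In}_r$ contains every element with an incoming $r$-edge, so the disjointness CIs enforce the clause, while a $\widehat\Bb$-decorated interpretation is made into a model by letting $\mn{In}_r$ be exactly that set of elements.

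The heart of the lemma is the transition $C_\pp\sqsubseteq\forall r.C_{\hat\delta(\pp,r)}$, and here the key observation is that the \emph{set} of states appearing in $\delta(p_1,r),\dots,\delta(p_n,r)$ is $\delta(Q,r):=\{\delta(q,r)\mid q\in Q\}$, which does not depend on $\pp$, because $\pp$ ranges over permutations of $Q$. Consequently, for a state $s\notin\delta(Q,r)$ its level in $\hat\delta(\pp,r)$ is the \emph{constant} $|\delta(Q,r)|+|\{s'\notin\delta(Q,r)\mid s'<_Q s\}|+1$, so it is enough to add the CI $\top\sqsubseteq\forall r.C_{s,\ell^{r}_{s}}$ with this constant $\ell^{r}_{s}$; and for $s\in\delta(Q,r)$ its level in $\hat\delta(\pp,r)$ is exactly the rank of $s$ when the states of $\delta(Q,r)$ are ordered by first appearance in $\delta(p_1,r),\dots,\delta(p_n,r)$, i.e.\ by the quantity $m_s:=\min\{\ell\mid C_{q,\ell}\text{ holds for some }q\text{ with }\delta(q,r)=s\}$. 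The values $m_s$ for $s\in\delta(Q,r)$ are pairwise distinct and readable from the source bits by a tiny sub-circuit, and computing, for each $s$, the number $|\{s'\in\delta(Q,r)\mid m_{s'}\le m_s\}|$ is a running count over at most $|Q|$ pairwise comparisons $m_{s'}\le m_s$ --- a standard carry gadget of polynomial size, whose output bit $X_{s,\ell'}$ asserts that the new level of $s$ equals $\ell'$. This level is then pushed to all $r$-successors by $X_{s,\ell'}\sqsubseteq\forall r.C_{s,\ell'}$, and the bijection CIs at the successor pin down its state exactly.

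To establish correctness I would unwind the greedy definition of $\hat\delta$ to justify the sub-circuits, and then argue both directions: the reduct of a model of $\widehat\Tt_\Bb$ to $\conceptsBb\cup\Rol(\Kk)$ satisfies the original CIs and the incoming-edges clause, hence is $\widehat\Bb$-decorated; conversely a $\widehat\Bb$-decorated interpretation, extended by the (forced) values of every auxiliary concept name, satisfies $\widehat\Tt_\Bb$. Finally $\widehat\Tt_\Bb$ can be brought into the required normal form and has size polynomial in $|Q|+|\Rol(\Kk)|$, so it is computable in polynomial time. The one genuinely delicate point is the transition circuit: the naive encoding of $C_\pp\sqsubseteq\forall r.C_{\hat\delta(\pp,r)}$ re-introduces a case distinction over all $\pp\in\widehat Q$, and it is the fixed-image observation --- reducing the computation of the new levels to a rank among at most $|Q|$ distinct, locally readable keys --- that keeps everything polynomial.
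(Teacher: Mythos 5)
Your proposal is correct, and for the only non-trivial part of the lemma---encoding $C_\pp \sqsubseteq \forall r.C_{\hat\delta(\pp,r)}$ without enumerating $\widehat Q$---it takes a genuinely different route from the paper. The paper simulates the greedy ``list by first appearance'' computation \emph{sequentially}: auxiliary concepts $A^r_{i,j}$ act as a pointer recording the next free position while scanning levels $i=1,\dots,n$, concepts $D^r_{i,j}$ record that $\delta(p_i,r)$ landed at position $j$, and concepts $B^r_k$ handle the tail of states not hit by $\delta(\cdot,r)$, which are appended in $Q$-order; the new decoration of the successor is then forced by $\forall r$ CIs driven by these bookkeeping concepts. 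You instead make the computation \emph{static}: since $\pp$ is a permutation of $Q$, the multiset of targets $\delta(p_1,r),\dots,\delta(p_n,r)$ always covers exactly the fixed set $\delta(Q,r)$, so non-image states receive constant levels (one CI $\top\sqsubseteq\forall r.C_{s,\ell^r_s}$ each), and the level of an image state $s$ is the rank of the locally readable key $m_s$ among at most $|Q|$ pairwise distinct keys, computable by a polynomial-size counting circuit compiled into fresh concept names. Your version buys a cleaner correctness argument (the fixed-image observation replaces the paper's somewhat delicate pointer invariants) at the price of invoking the generic circuit-to-TBox compilation; the paper's version stays closer to a hand-rolled DL axiomatization. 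Both yield a polynomial-size $\widehat\Tt_\Bb$, both handle the single-incoming-role condition and the bijection/disjointness CIs identically, and both (as you correctly flag) axiomatize the slightly stricter ``exactly one permutation of bits'' reading of the displayed CIs, which is what the intended notion of decoration requires.
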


%%%%%%% edge levels

To every edge in a $\widehat\Bb$-decorated interpretation $\Ii$ we can assign a level as follows.  Consider elements $e,e' \in \Delta^\Ii$ such that $(e,e') \in r^\Ii$ for some $r\in\Rol(\Kk)$. Note that $(e,e')\notin s^\Ii$ for every $s \in \Rol(\Kk) \setminus \{r\}$. Let $\pp$ and $\pp'$ be the states decorating $e$ and $e'$, respectively. It holds that $\pp \stackrel{r}{\longrightarrow} \pp'$. By \emph{the level of the edge} $(e,e')$ we shall understand the level of this transition. A \emph{level-$\ell$ interpretation} is a $\widehat \Bb$-decorated interpretation that does not contain edges of level strictly below $\ell$; if $\ell>n$, this means that there are no edges at all. The following lemma is the key to our algorithm. 

\begin{restatable}{lemma}{lemreach}
\label{lem:reach}
Consider a level-$\ell$ interpretation $\Ii$ and elements $e \in C_{q,\ell}^\Ii$ and $e'\in C_{q',\ell}^\Ii$. Then, $(e,e') \in \Bb_{q,q'}^\Ii$ iff there is a path from $e$ to $e'$ in $\Ii$.
\end{restatable}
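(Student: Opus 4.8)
The plan is to prove both directions of the biconditional, with the interesting content being the forward direction, which will rely crucially on Lemma~\ref{lem:levels} together with the defining property of level-$\ell$ interpretations (no edges of level strictly below $\ell$).

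\textbf{From paths to $\Bb_{q,q'}$-edges (the ``if'' direction).} Suppose there is a path from $e$ to $e'$ in $\Ii$, say $e = e_0, e_1, \dots, e_m = e'$ with $(e_{i-1},e_i) \in r_i^\Ii$. Since $\Ii$ is $\widehat\Bb$-decorated, each $e_i$ carries a unique decoration $\pp_i$, and the CIs $C_\pp \sqsubseteq \forall r. C_{\hat\delta(\pp,r)}$ force $\pp_0, \pp_1, \dots, \pp_m$ to be a run of $\widehat\Bb$ on $w = r_1 \cdots r_m$. By hypothesis $e = e_0 \in C_{q,\ell}^\Ii$, so the level of $q$ in $\pp_0$ is $\ell$; similarly the level of $q'$ in $\pp_m$ is $\ell$. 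Now I want to run a thread of $\Bb$ starting at $q$ in $\pp_0$ and show it ends at $q'$. The thread is obtained by following the original transition function: $q_0 = q$, $q_{i} = \delta(q_{i-1}, r_i)$. Threads are non-increasing in level, so the level of $q_i$ in $\pp_i$ is at most $\ell$. But every edge of $\Ii$ has level at least $\ell$, and the level of the transition $\pp_{i-1} \stackrel{r_i}{\longrightarrow} \pp_i$ being $\geq \ell$ means exactly (by the definition of transition level) that all levels $1, \dots, \ell$ are preserved, i.e.\ the level of $\delta(p, r_i)$ equals the level of $p$ for every state $p$ sitting at a level $\leq \ell$ in $\pp_{i-1}$. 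Hence, by induction, the thread never drops below $\ell$: the level of $q_i$ in $\pp_i$ stays exactly $\ell$ throughout. In particular $q_m$ sits at level $\ell$ in $\pp_m$; but $q'$ also sits at level $\ell$ in $\pp_m$, and in a permutation each level is occupied by exactly one state, so $q_m = q'$. Thus $q_0, \dots, q_m$ witnesses a run of $\Bb$ on $w$ from $q$ to $q'$ along the path, giving $(e,e') \in \Bb_{q,q'}^\Ii$.

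\textbf{From $\Bb_{q,q'}$-edges to paths (the ``only if'' direction).} This is essentially immediate from the definition of $\Bb_{q,q'}^\Ii$: if $(e,e') \in \Bb_{q,q'}^\Ii$ then by definition there exist $r_1, \dots, r_m \in \Gamma$ and $e_0, \dots, e_m$ with $e_0 = e$, $e_m = e'$, and $(e_{i-1},e_i) \in r_i^\Ii$ for all $i$ --- which is precisely a path from $e$ to $e'$ in $\Ii$. (The run-of-$\Bb$ condition in the definition is not needed here.) Note the edge case $m = 0$, where $e = e'$ and the path is trivial; this is consistent since then the empty run of $\Bb$ goes from $q$ to $q$, and indeed $e = e' \in C_{q,\ell}^\Ii \cap C_{q',\ell}^\Ii$ forces $q = q'$ by uniqueness of the decoration, so the statement holds in that case too.

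\textbf{Main obstacle.} The delicate point is the level bookkeeping in the forward direction: one must carefully reconcile three notions of ``level at least $\ell$'' --- the level of an edge (a property of a single transition of $\widehat\Bb$), the fact that a level-$\ell$ interpretation has no edges below level $\ell$, and the non-increasing behaviour of threads --- and argue that together they pin the thread to exactly level $\ell$ at both ends so that it cannot have ``jumped over'' $q'$. The cleanest way to do this is the explicit induction above showing the thread stays at level $\ell$ at \emph{every} step, using that a transition of level $\geq \ell$ fixes the occupant of every level $\leq \ell$; invoking the full generality of Lemma~\ref{lem:levels} (with $k=1$, a single unbroken thread) is an alternative packaging of the same argument. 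One should also double-check that the uniqueness of decorations (from $C_\pp \sqcap C_{\pp'} \sqsubseteq \bot$ and $\top \sqsubseteq \bigsqcup_\pp C_\pp$, guaranteed by $\widehat\Bb$-decoration) is what licenses reading off a well-defined run $\pp_0, \dots, \pp_m$ from the path and what forces $q_m = q'$ at the end.
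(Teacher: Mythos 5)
Your proposal is correct and follows essentially the same route as the paper: the direction from $\Bb_{q,q'}^\Ii$ to the existence of a path is immediate from the definition, and for the converse one reads off the run of $\widehat\Bb$ from the decorations along the path and tracks the thread starting at $q$, which begins at level $\ell$, cannot leave level $\ell$ because all edges have level at least $\ell$ (combined with threads being non-increasing), and therefore must end at the unique state occupying level $\ell$ in the final decoration, namely $q'$. Your explicit induction on ``transitions of level $\geq\ell$ fix the occupants of all levels $\leq\ell$'' is just a more detailed packaging of the paper's one-line thread argument.
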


We make use of Lemma~\ref{lem:reach} by decomposing RPQs into segments corresponding to different levels, as was done for the runs of $\widehat\Bb$ in Lemma~\ref{lem:levels}. To facilitate this, we make our queries aware of levels. 
A \emph{$\widehat\Bb$-decorated CRPQ} is a CRPQ $\varphi$ represented by means of semiautomaton $\Bb$ that contains exactly one atom of the form $C_{q,\ell}(x)$ and exactly one atom of the form $C_{q',\ell'}(x')$ for each atom $\Bb_{q,q'}(x,x')$ in $\varphi$. We call $\ell$ and $\ell'$ the \emph{begin level} and the \emph{end level} of atom $\Bb_{q,q'}(x,x')$ in $\varphi$, respectively. Because levels never increase in a thread of a run of $\widehat \Bb$, we can assume without loss of generality that $\ell \geq \ell'$ always holds. 
%
%If $\ell=\ell'$, we say that the atom $\Bb_{q,q'}(x,x')$ is \emph{flat} (in $\varphi$) and we call $\ell$ the level of $\Bb_{q,q'}(x,x')$ (in $\varphi$).
%A UCRPQ is $\widehat\Bb$-decorated if it is a union of $\widehat\Bb$-decorated CRPQs. 
%
A \emph{level-$\ell$ CRPQ} is a $\widehat \Bb$-decorated CRPQ that contains no RPQ atoms of end level strictly below $\ell$. As all end levels are at most $n$, a  level-$\ell$ CRPQ for $\ell>n$ contains no RPQ atoms; that is, it is a CQ. To \emph{complete} a CRPQ $\varphi$ means to turn it into a $\widehat\Bb$-decorated CRPQ $\varphi'$ by adding unary atoms over concepts $C_{q,\ell}$ in an arbitrary minimal way. Each resulting $\varphi'$ is called a \emph{completion} of $\varphi$. Over $\widehat\Bb$-decorated interpretations, $\varphi$ is equivalent to the union of its completions. The \emph{completion of a UCRPQ} $\Phi$ is the union of all completions of all CRPQs in $\Phi$.

We conclude this section by showing how to turn any counterexample to $\Kk \fentails^\Ee \Phi$ into a $\widehat\Bb$-decorated one.
Let $\Ii$ be an interpretation over $\CN(\Kk) \cup \Rol(\Kk)$. The \emph{product} of $\Ii$ and $\widehat \Bb$ is the interpretation $\Ii \times \widehat \Bb$ over $\conceptsBb\cup \CN(\Kk) \cup \Rol(\Kk)$ such that  
\begin{itemize}
    \item $\Delta^{\Ii\times\widehat\Bb} = \Delta^\Ii \times \Rol(\Kk) \times \widehat Q$,  
    \item $C^{\Ii\times\widehat\Bb} = C^{\Ii} \times \Rol(\Kk) \times \widehat Q$ for all $C \in \CN(\Kk)$,
    \item $C_{q,\ell}^{\Ii\times\widehat\Bb} = \Delta^{\Ii} \times \Rol(\Kk) \times \{(p_1, p_2, \ldots, p_n) \in \widehat Q : p_\ell = q\}$ for all $q\in Q$ and $\ell \in \{1,2,\dots, n\}$,
    \item $r^{\Ii\times\widehat\Bb} = \big\{\big((e,s,\pp), (e',r,\pp')\big): (e,e') \in r^{\Ii}, \pp \stackrel{r}{\longrightarrow} \pp',\linebreak s \in \Rol(\Kk)\big\}$ for $r \in \Rol(\Kk)$.
\end{itemize}
Note that if $\Ii$ is finite, so is $\Ii\times\widehat\Bb$.

\begin{restatable}{lemma}{lemdecoratemodel}\label{lem:decorate_model}
%Let $\Phi$ be a UCRPQ and let $\Tt$ be an $\ALC$ TBox.
%over the signature $\Gamma \cup \left(\concepts - \conceptsBb\right)$.
Let $\Phi$ be a UCRPQ, $\Kk$ an $\ALC$ KB with a trivial ABox, and $\Ee$ an environment.
\begin{itemize}
\item $\Ii \times \widehat\Bb$ is a $\widehat\Bb$-decorated interpretation. 
\item If $\Ii \not\models \Phi$ then $\Ii \times \widehat\Bb \not\models \Phi$. 
\item If $\Ii \models^\Ee \Kk$ then $\Ii\times\widehat\Bb \models^\Ee \Kk$ up to identifying the unique individual $a$ in $\Kk$ with some $(a, r, \pp) \in\Delta^{\Ii\times\widehat\Bb}$.
\end{itemize}
\end{restatable}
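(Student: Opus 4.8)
The plan is to verify the three bullet points essentially by unwinding the definition of the product $\Ii\times\widehat\Bb$, with the middle point (query preservation) being the only one requiring real care.

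\textbf{First bullet: $\Ii\times\widehat\Bb$ is $\widehat\Bb$-decorated.} I would check the three defining conditions directly. By the definition of $C_{q,\ell}^{\Ii\times\widehat\Bb}$, an element $(e,s,\pp)$ lies in $C_{p_\ell,\ell}$ for exactly one state of $\Bb$ at each level $\ell$, so it is decorated with precisely the state $\pp$; this immediately gives $C_\pp\sqcap C_{\pp'}\sqsubseteq\bot$ for $\pp\neq\pp'$ and $\top\sqsubseteq\bigsqcup_\pp C_\pp$. For $C_\pp\sqsubseteq\forall r.C_{\hat\delta(\pp,r)}$: any $r$-edge out of $(e,s,\pp)$ goes to some $(e',r,\pp')$ with $\pp\stackrel{r}{\longrightarrow}\pp'$, i.e. $\pp'=\hat\delta(\pp,r)$, so the target is decorated with $\hat\delta(\pp,r)$ as required. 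Finally, no element has incoming edges over two different roles: by the definition of $r^{\Ii\times\widehat\Bb}$, every incoming edge into $(e',r,\pp')$ is labelled by the same role $r$ recorded in the middle coordinate. Finiteness when $\Ii$ is finite is noted already.

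\textbf{Second bullet: query preservation.} The key point is that the projection $h:(e,s,\pp)\mapsto e$ is a homomorphism from $\Ii\times\widehat\Bb$ to $\Ii$ that is, moreover, surjective on each role and concept in $\CN(\Kk)\cup\Rol(\Kk)$ and, crucially, admits a section for paths. Concretely, I would prove the contrapositive: if $\Ii\times\widehat\Bb\models\Phi$ then $\Ii\models\Phi$. Given a match $\pi$ of some (completed) CRPQ $\varphi\in\Phi$ in $\Ii\times\widehat\Bb$, compose with $h$ to obtain $h\circ\pi$; I claim it is a match in $\Ii$. Concept atoms over $\CN(\Kk)$ and edge atoms $r(x,x')$ transfer since $h$ is a homomorphism for these; individual names are handled by the ``up to identifying $a$ with some $(a,r,\pp)$'' convention, which I would make precise by fixing one such representative. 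For RPQ atoms $\Bb_{q,q'}(x,x')$, if there is a witnessing $\Bb$-labelled path from $\pi(x)$ to $\pi(x')$ in $\Ii\times\widehat\Bb$, its image under $h$ is a path in $\Ii$ traversing the same sequence of roles $r_1\ldots r_m$ (read off from the middle coordinates of the path vertices), hence the same run of $\Bb$ witnesses $(h\pi(x),h\pi(x'))\in\Bb_{q,q'}^\Ii$. Atoms over the fresh concepts $C_{q,\ell}$ do not appear in the original UCRPQ $\Phi$, so they need not be transferred; if one prefers to argue about the \emph{completion} of $\Phi$, one notes that matches of the completion project to matches of $\Phi$ anyway.

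\textbf{Third bullet: $\Ii\models^\Ee\Kk$ implies $\Ii\times\widehat\Bb\models^\Ee\Kk$.} I would go axiom by axiom through the normal-form CIs, using that $h$ and its ``vertical'' copies behave well. For the unary-type condition: since $C^{\Ii\times\widehat\Bb}=C^\Ii\times\Rol(\Kk)\times\widehat Q$ for every $C\in\CN(\Kk)$, we have $\tp^{\Ii\times\widehat\Bb}(e,s,\pp)=\tp^\Ii(e)$, so $\Ii\times\widehat\Bb$ realizes exactly the same unary $\Kk$-types as $\Ii$, all of which are in $\Theta$; this also shows $\bigsqcap_i A_i\sqsubseteq\bigsqcup_j B_j$ is preserved. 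For $A\sqsubseteq\forall r.B$: if $(e,s,\pp)\in A^{\Ii\times\widehat\Bb}$ then $e\in A^\Ii$; any $r$-successor $(e',r,\pp')$ has $(e,e')\in r^\Ii$, so $e'\in B^\Ii$, hence $(e',r,\pp')\in B^{\Ii\times\widehat\Bb}$. For the relaxed existential $A\sqsubseteq\exists r.B$: suppose $(e,s,\pp)\in A^{\Ii\times\widehat\Bb}$, so $e\in A^\Ii$ and, by $\Ii\models^\Ee\Kk$, either (i) there is an $r$-edge in $\Ii$ from $e$ to some $e'\in B^\Ii$, or (ii) $(r,B)\in\varepsilon(\tp^\Ii(e))$. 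In case (ii), since $\tp^{\Ii\times\widehat\Bb}(e,s,\pp)=\tp^\Ii(e)$, the same membership $(r,B)\in\varepsilon(\tp^{\Ii\times\widehat\Bb}(e,s,\pp))$ holds and the relaxed semantics is satisfied at $(e,s,\pp)$ directly. In case (i), I need an actual $r$-successor inside $B^{\Ii\times\widehat\Bb}$: take $(e',r,\hat\delta(\pp,r))$; by definition of $r^{\Ii\times\widehat\Bb}$ (with $\pp\stackrel{r}{\longrightarrow}\hat\delta(\pp,r)$) this is indeed an $r$-edge, and $e'\in B^\Ii$ gives $(e',r,\hat\delta(\pp,r))\in B^{\Ii\times\widehat\Bb}$. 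Finally the ABox: $\Kk$ has a trivial ABox with a single individual $a$ carrying only concept assertions, so after identifying $a$ with a chosen $(a,r_0,\pp_0)$ the concept assertions $A(a)$ transfer from $a\in A^\Ii$ to $(a,r_0,\pp_0)\in A^{\Ii\times\widehat\Bb}$, and the standard-name conditions hold by construction.

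\textbf{Main obstacle.} The only genuinely delicate point is the handling of individual names in the second and third bullets: the product has $|\Rol(\Kk)|\cdot|\widehat Q|$ copies of each domain element, so ``the'' individual $a$ of $\Kk$ is not a single element of $\Delta^{\Ii\times\widehat\Bb}$ and the standard name assumption must be restored by an explicit choice of representative. I would fix an arbitrary $r_0\in\Rol(\Kk)$ and $\pp_0\in\widehat Q$ once and for all, set $a^{\Ii\times\widehat\Bb}=(a,r_0,\pp_0)$, and check that this choice is consistent with all three bullets; everything else is a routine diagram chase through the definition of the product and the relaxed semantics.
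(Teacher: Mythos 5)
Your proposal is correct and follows essentially the same route as the paper's proof: the first bullet is immediate from the construction, the second uses the projection homomorphism $(e,s,\pp)\mapsto e$ to transfer matches, and the third handles existential restrictions via the totality of $\hat\delta$ (your explicit successor $(e',r,\hat\delta(\pp,r))$ is exactly the witness the paper invokes). Your treatment is merely more detailed, in particular in spelling out the preservation of unary types, the environment case of the relaxed semantics, and the choice of representative for the individual $a$.
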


\section{Core Computational Problem}

To solve the entailment problem we eliminate the lowest level from the query and from the interpretation, and solve the problem with fewer levels recursively. 
%
%Each level of the recursion will involve certain modifications to the knowledge base and the query. For complexity reasons we will need to pay close attention to these changes, making sure that no blow-up is involved. To make it easier, we generalize the entailment problem by turning the modifications to the TBox and the query into separate parts of the input. This will allow fixing the TBox and the query for the duration of the whole procedure. We begin with the TBox. 
%
Eliminating each level will involve interpretations built from pieces that are solutions for  the simplified problem.
Evaluating CRPQs over such interpretations requires breaking them down into fragments and it must accommodate single RPQs witnessed across multiple pieces.

%The evolution of the query during the recursion is dictated by two factors. The first one is the need to distribute evaluation of queries over the multiple smaller pieces constituting our models. This involves breaking CRPQs down into fragments and it must accommodate single RPQs witnessed across multiple pieces. 

%A \emph{fragment} of a $\widehat\Bb$-decorated CRPQ  $\varphi$ over $\widehat \Bb$ is a connected $\widehat \Bb$-decorated CRPQ that can be obtained from $\varphi$ dropping selected atoms and replacing selected RPQ atoms as follows: atom $\Bb_{q,q'}(x,x')$ of begin level $\ell$ and end level $\ell'$ can be replaced by any subset of $C_{p_0,m_0}(y_0)$, $\Bb_{p_0,p_1}(y_0,y_1)$, $C_{p_1,m_1}(y_1)$, $r(y_1,y_2)$, $C_{p_2,m_2}(y_2)$ or any subset of $\Bb_{q,p_1}(x,y_1)$, $C_{p_1,m_1}(y_1)$, $r(y_1,y_2)$, $C_{p_2,m_2}(y_2)$,  $C_{p_3,m_3}(y_3)$, $\Bb_{p_3,q'}(y_3,x')$  where: $y_0,y_1,y_2,y_3$ are fresh variables; $p_0,p_1,p_2,p_3$ are states of $\Bb$; $r$ is a role name such that $p_1 \stackrel{r}{\longrightarrow} p_2$; and $m_0,m_1,m_2,m_3$ are levels such that $\ell \geq m_0 \geq m_1 \geq m_2 \geq m_3 \geq \ell'$.

For a UCRPQ $\Phi$ let $\tilde\Phi$ be the completion of an equivalent UCRPQ represented by means of a semiautomaton $\Bb$.
A~\emph{fragment} of $\varphi\in\tilde\Phi$ is either of the following:
\begin{itemize}
\item a $\widehat \Bb$-decorated CRPQ of the form $C_{q_1,\ell_1}(y_1) \land \Bb_{q_1,q_2}(y_1,y_2) \land C_{q_2,\ell_2}(y_2)$ or $C_{q_1,\ell_1}(y_1) \land \Bb_{q_1,q_2}(y_1,y_2) \land C_{q_2,\ell_2}(y_2) \land r(y_2,y_3) \land C_{q_3,\ell_3}(y_3)$
where $y_1,y_2,y_3$ are fresh variables and $r\in\Rol(\Kk)$, 
\item a connected $\widehat \Bb$-decorated CRPQ that can be obtained from $\varphi$ by dropping selected atoms, replacing selected RPQ atoms $\Bb_{q,q'}(x,x')$ by a subset of  $\Bb_{q,q_1}(x,y_1)$,  $r(y_1,y_2)$, $\Bb_{q_3,q'}(y_3,x')$ for some fresh variables $y_1,y_2,y_3$ 
%states $q_1,q_2,q_3$ of $\Bb$, 
and \mbox{$r\in\Rol(\Kk)$}, and  completing the resulting CRPQ.
%to a $\widehat\Bb$-decorated one.
\end{itemize}
A fragment of $\Phi$ is a fragment of any of the CRPQs in $\tilde\Phi$. 
%We write $\|\Phi\|$ for the number of non-isomorphic fragments of a $\widehat \Bb$-decorated UCRPQ $\Phi$. 
Importantly, a fragment of a fragment of $\Phi$ is also a fragment of $\Phi$, and each $\varphi\in\tilde\Phi$ is a fragment of $\Phi$.
Up to renaming fresh variables, $\Phi$ has $2^{\poly(\|\Phi\|)}$ different fragments, despite $\Bb$ being exponential in $\|\Phi\|$.

%For the size bounds the key observation is that in each constructed CRPQ the number of RPQ atoms does not increase and the number of ordinary binary atoms increases by at most the number of RPQ atoms.  That means that the size of each constructed CRPQ is at most twice the size of the original CRPQ. For each variable we choose a state of the deterministic automaton and a number bounded by the number of states of the deterministic automaton. The number of such choices is linear in the size of the original query. This gives a single exponential number of fragments. 

We now enrich interpretations again by including information about matched fragments of $\Phi$. For each fragment $\varphi$ of $\Phi$ and each $\emptyset \neq V \subseteq \var(\varphi)$ we choose a fresh concept name $A_{\varphi, V}$.
We call an interpretation $\Ii$ \emph{correct  (wrt.~$\Phi$)} if $e \in A_{\varphi, V}^{\Ii}$ iff $\pi(V) = \{e\}$ for some match $\pi$ for $\varphi$ in $\Ii$. Assuming $\Ii$ is correct, $\Ii \models \Phi$ iff $A_{\varphi, V}^{\Ii} \neq \emptyset$ for some $\varphi\in \tilde\Phi$ and $\emptyset \neq V \subseteq \var(\varphi)$. Correctness is not compositional: the union of two correct interpretations sharing a single element need not be correct. As our method of eliminating levels relies on such decompositions of interpretations,  we replace correctness with a notion that is weaker, but compositional. 

We first abstract the decomposition of a $\widehat\Bb$-decorated CRPQ induced by a match in a union of disjoint `peripheric' interpretations, each sharing a single element with a single `core' interpretation (Fig.~\ref{fig:partition} shows three `peripheric' interpretations connected to the `core'  by single edges, included in the `peripheric' interpretations).
A \emph{partition} of a $\widehat\Bb$-decorated CRPQ $\varphi$ into $\varphi', \varphi_1,\dots, \varphi_k$ is obtained as follows. Choose $X', X_1, \dots X_k \subseteq \var(\varphi)$ such that 
\begin{itemize}
\item $X_i \cap X_j = \emptyset$ for all $i \neq j$;
\item for each atom of the form $r(x,x')$ in $\varphi$ there exists $i$ such that either $\{x,x'\} \subseteq X_i$ or $\{x,x'\} \subseteq X'$.
% % % This does not seem necessary. % % % 
% \item for each $i>0$, either there is no atom $\Bb_{q,q'}(x,x')$ in $\varphi$ such that $x\in X_i - X$ and $x'\notin X_i - X$, or there is no atom $\Bb_{q,q'}(x,x')$ in $\varphi$ such that $x\notin X_i - X$ and $x'\in X_i - X$.
% % % % % % % % % % % % % % % % % % % % % 
\end{itemize}
Based on $X', X_1, \dots X_k$ define $\varphi', \varphi_1, \dots, \varphi_k$ as follows.  For each atom of the form $r(x,x')$  in $\varphi$ choose $i$ such that  $\{x,x'\} \subseteq X_i$ and add $r(x,x')$ to $\varphi_i$ or add $r(x,x')$ to $\varphi'$ provided that $\{x,x'\}\subseteq X'$. For each RPQ atom  $\Bb_{q,q'}(x,x')$ of begin level $\ell$ and end level $\ell'$ in $\varphi$ do one of the following:
\begin{itemize}
    \item provided that $\{x,x'\} \subseteq X'$, add $\Bb_{q,q'}(x,x')$ to $\varphi'$;
    \item choose $i$ such that $\{x,x'\} \subseteq X_i$ but  $\{x,x'\} \not\subseteq X'$, and add $\Bb_{q,q'}(x,x')$ to $\varphi_i$ (light green RPQ in Fig.~\ref{fig:partition});
    \item choose $i$ such that $x \in X'\setminus X_i$ and $x'\in X_i \setminus X'$, 
    a level $m$ such that $\ell \geq m \geq \ell'$, a state $p$ of $\Bb$, and a fresh variable $y$, and add $\Bb_{q,p}(x,y) \land C_{p,m}(y)$ to $\varphi'$ and $C_{p,m}(y) \land \Bb_{p,q'}(y,x')$ to $\varphi_i$ (blue and orange in Fig.~\ref{fig:partition});
    %%subdivide $\Bb_{q,q'}(x,x')$ into two RPQ atoms, add the one using $x$ to $\varphi_0$ and the one using $x'$ to $\varphi_i$;
    \item choose $i$ such that $x \in X_i \setminus X'$ and $x'\in X' \setminus X_i$, 
    a level $m$ such that $\ell \geq m \geq \ell'$, a state $p$ of $\Bb$, and a fresh variable $y$, and add $\Bb_{q,p}(x,y) \land C_{p,m}(y)$ to $\varphi_i$ and $C_{p,m}(y) \land \Bb_{p,q'}(y,x')$ to $\varphi'$ (dark green in Fig.~\ref{fig:partition}); 
    %% subdivide $\Bb_{q,q'}(x,x')$ into two RPQ atoms, add the one using $x$ to $\varphi_i$ and the one using $x'$ to $\varphi_0$;
    \item choose $i\neq j$ such that $x \in X_i \setminus X'$ and $x'\in X_j \setminus X'$, 
    levels $m, m'$ such that $\ell \geq m \geq m'\geq \ell'$, states $p,p'$ of $\Bb$, and fresh variables $y,y'$, add $\Bb_{q,p}(x,y) \land C_{p,m}(y)$ to $\varphi_i$, $C_{p,m}(y) \land \Bb_{p,p'}(y,y') \land C_{p',m'}(y')$ to $\varphi'$, and $C_{p',m'}(y') \land \Bb_{p',q'}(y',x')$ to $\varphi_j$ (purple in Fig.~\ref{fig:partition}).
    %subdivide $\Bb_{q,q'}(x,x')$ into three RPQ atoms, add the one using $x$ to $\varphi_i$, the one using $x'$ to $\varphi_j$, and the remaining one to $\varphi_0$.
\end{itemize}
Note that for each $\Bb_{q,q'}(x,x')$ exactly one of the above actions can be performed and the choice of $i$ and $j$ is unique. To complete the construction, add to $\varphi'$ all unary atoms of $\varphi$ over variables already used in $\varphi'$, and similarly for each $\varphi_i$.
% Note that $\var(\varphi_i) \cap \var(\varphi_j) = \emptyset$ for  $i>j>0$.
Observe that for each $X' \subseteq \var(\varphi)$ there is exactly one choice of $X_1, X_2, \dots, X_k$ (up to a permutation) such that the resulting $\varphi_1, \varphi_2, \dots, \varphi_k$ are connected (regardless of the choice of $p,p'$ and $m,m'$). Assuming that $\varphi$ is a fragment of $\Phi$, it then holds that so are $\varphi_1, \varphi_2, \dots, \varphi_k$.

\begin{figure}
    \centering
   % \graphics[scale=0.4]{}
    \includegraphics[width=\columnwidth]{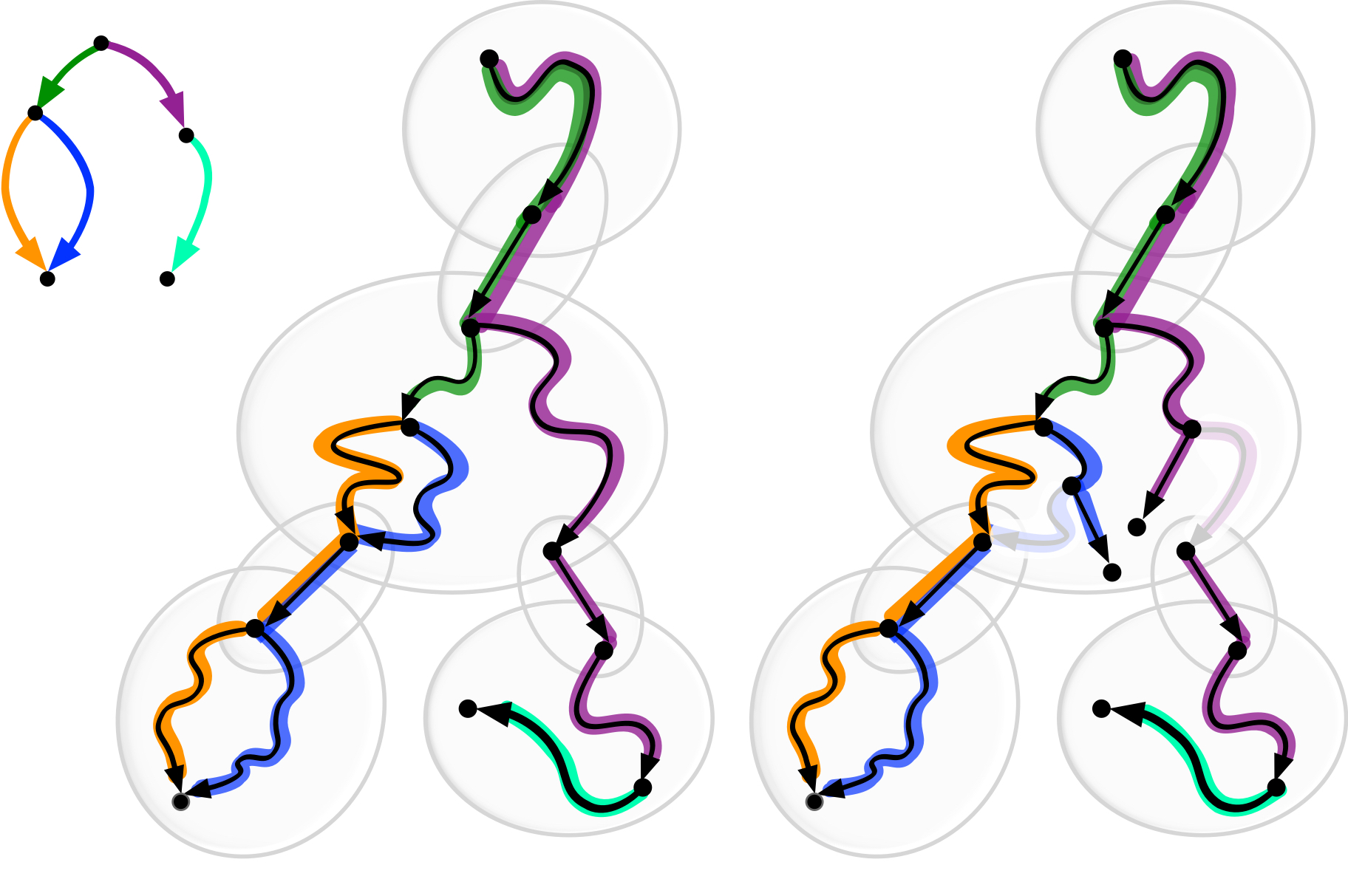}
    \caption{CRPQ $\varphi$ is distributed over the bags constituting $\Ii$.}
    \label{fig:partition}
\end{figure}

We call $\Ii$ \emph{consistent (wrt.~$\Phi$)} if for each partition of a fragment $\varphi$ of $\Phi$ into a CRPQ $\varphi'$ and fragments $\varphi_1, \varphi_2, \dots, \varphi_k$ with $\var(\varphi_i)\cap \var(\varphi_j) = \emptyset$ for $i\neq j$, $V_i = \var(\varphi_i) \cap \var(\varphi')$, and $\emptyset \neq V\subseteq \var(\varphi) \cap \var(\varphi')$, there is no match $\pi$ for $\varphi'$ in $\Ii$ such that $\pi(V_i) = \{e_i\} \subseteq \big(A_{\varphi_i, V_i}\big)^{\Ii}$ for all $i$ but $\pi(V) = \{e\} \not\subseteq \big(A_{\varphi, V}\big)^{\Ii}$. Clearly, all correct interpretations are consistent. The converse is not true in general, but the following key property is preserved.  

\begin{restatable}{lemma}{lemconsistent} \label{lem:consistent}
  For every UCRPQ $\Phi$ and every consistent \mbox{$\widehat\Bb$-decorated} interpretation $\Ii$, if  $A_{\varphi, V}^{\Ii} = \emptyset$ for each $\varphi \in \tilde\Phi$ and $\emptyset \neq V \subseteq \var(\varphi)$, then $\Ii \not\models \Phi$. 
\end{restatable}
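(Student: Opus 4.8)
The plan is to prove the contrapositive: assuming $\Ii \models \Phi$, I would exhibit some $\varphi \in \tilde\Phi$ and some nonempty $V \subseteq \var(\varphi)$ with $A_{\varphi,V}^\Ii \neq \emptyset$, contradicting the hypothesis. The whole argument rests on instantiating the definition of consistency on the \emph{trivial partition} of a query, the one in which the `core' $\varphi'$ is the entire query $\varphi$ and there are no `peripheric' pieces $\varphi_i$ at all.

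First I would pass to the decorated query: since $\Ii$ is $\widehat\Bb$-decorated, $\tilde\Phi$ is the completion of a UCRPQ over $\Bb$ equivalent to $\Phi$, and every CRPQ is equivalent over $\widehat\Bb$-decorated interpretations to the union of its completions, the assumption $\Ii\models\Phi$ yields some $\varphi \in \tilde\Phi$ and a match $\pi$ for $\varphi$ in $\Ii$. Recall that each $\varphi\in\tilde\Phi$ is a fragment of $\Phi$, so the marker concepts $A_{\varphi,V}$ are available for all nonempty $V\subseteq\var(\varphi)$. Next I would check that taking $X' = \var(\varphi)$ — which forces $k=0$ and $\varphi'=\varphi$ — does define a partition of $\varphi$ in the required sense: every edge atom $r(x,x')$ has both endpoints in $X'$, every RPQ atom is added to $\varphi'$ via the first available option, all unary atoms of $\varphi$ land in $\varphi'$, and the empty family of $\varphi_i$'s is vacuously a family of fragments of $\Phi$. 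Applying the consistency condition to this partition, the clause on the $V_i$'s is vacuous, and what remains says precisely: for every match $\pi$ for $\varphi$ in $\Ii$ and every nonempty $V\subseteq\var(\varphi)$ with $\pi(V)$ a singleton $\{e\}$, one has $e \in A_{\varphi,V}^\Ii$. Instantiating with $V=\{x\}$ for any $x\in\var(\varphi)$ (we may assume $\var(\varphi)\neq\emptyset$, variable-free CRPQs reducing to plain instance checking over the single ABox individual and being handled separately) gives $\pi(x)\in A_{\varphi,\{x\}}^\Ii$, so this concept is nonempty — contradiction. Hence $\Ii\not\models\Phi$.

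I do not expect a serious obstacle; the argument is an unfolding of the definitions. The single point needing genuine care is verifying that the degenerate `core-is-the-whole-query' case is really covered by the definition of a partition, and hence by the definition of consistency — this is exactly what lets the mere existence of one match of $\varphi$ force the corresponding marker concept to be nonempty. The two remaining ingredients — the equivalence $\Ii\models\Phi$ iff $\Ii\models\tilde\Phi$ over $\widehat\Bb$-decorated interpretations, and the fact that every member of $\tilde\Phi$ is a fragment of $\Phi$ — are already established in the preceding development and are used here only as black boxes.
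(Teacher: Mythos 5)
Your proposal is correct and follows essentially the same route as the paper's proof: both argue by contradiction/contraposition, take a match $\pi$ of some $\varphi\in\tilde\Phi$ (using that $\varphi$ is itself a fragment), and instantiate the consistency condition on the trivial partition with $\varphi'=\varphi$ and no peripheric pieces to force some $A_{\varphi,V}^{\Ii}$ to be nonempty. The only cosmetic difference is that the paper takes $V$ to be the full preimage $\pi^{-1}(e)$ of a chosen element $e$, whereas you take a singleton $V=\{x\}$; both choices satisfy the singleton-image requirement in the definition of consistency.
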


%The second factor affecting the evolution of the query is the need to eliminate segments of RPQs corresponding to the lowest level. We reflect this by  parametrizing the definition of consistency with a level. 

Consistency is sufficient to express entailment, but it does not yield well to the recursive elimination of levels. We generalize it by refining the information  about matched fragments of $\Phi$. We introduce fresh concepts $A_{\varphi,V}^\kappa$ where $\varphi$ is a fragment of $\Phi$, $\emptyset \neq V \subseteq \var(\varphi)$,  \[\kappa:\var(\varphi) \to \{1, 2, \dots, \ell\}\,,\]
and $\kappa(V) = \{\ell\}$ for some $\ell\in\{1, 2, \dots, n+1\}$.
We write $\CN_{\ell}^\Phi$ for the set of  $A_{\psi,V}^\kappa$ such that $\kappa(V) = \{\ell\}$.
%\[\{\ell\} \subseteq \kappa(\var(\psi) \setminus V) \subseteq \{1, 2, \dots, \ell\}\,.\]
Intuitively, $\kappa$ is a synopsis of when specific fragments of $\psi$ were matched during the recursive search for the model. Specifically, $\kappa(x) = \ell$ indicates that $x$ was matched after all levels strictly below $\ell$ had been eliminated from the query, but while level $\ell$ was still present. 
Accordingly, $\ell$-consistency, defined below, ensures that the synopses built so far are consistently updated while level $\ell$ is being handled.

We call $\Ii$ \emph{$\ell$-consistent (wrt.~$\Phi$)} if for each partition of a fragment $\varphi$ of $\Phi$ into a CRPQ $\varphi'$ of level $\ell$ and fragments $\varphi_1, \varphi_2, \dots, \varphi_k$ with $\var(\varphi_i)\cap \var(\varphi_j) = \emptyset$ for $i\neq j$, $V_i = \var(\varphi_i) \cap \var(\varphi')$, and $\emptyset \neq V\subseteq \var(\varphi) \cap \var(\varphi')$, there is no match $\pi$ for $\varphi'$ in $\Ii$ such that $\pi(V_i) = \{e_i\} \subseteq \big(A_{\varphi_i, V_i}^{\kappa_i}\big)^{\Ii}$ for all $i$ but $\pi(V) = \{e\} \not\subseteq (A_{\varphi, V}^\kappa)^{\Ii}$ where
\begin{itemize}
    \item $\kappa_i(x) \leq \ell$ for all $x \in  \var(\varphi_i)\,$,
    \item $\kappa(x) = \kappa_i(x)$ for all $x \in  \var(\varphi_i) \setminus V_i\,$,
    \item $\kappa (x) = \ell$ for all $x \in \var(\varphi) \cap \var(\varphi')\,$.
%    \item $\kappa(x) = \kappa_i(x)$ for all $x \in \var(\varphi) \cap \var(\varphi_i) \setminus V_i$,
%    \item $\kappa_i (x) \leq \ell$ for all $x \in \var(\varphi_i)\setminus V_i$.
%    \item $\kappa_i(x)=\ell$ for all $x \in \var(\varphi_i) \cap \var(\varphi')=V_i$,
\end{itemize}
%
%We call $\Ii$ $\ell$-\emph{consistent (wrt.~$\Phi$)} if it satisfies the condition for consistency restricted to partitions where $\varphi'$ has level $\ell$. 
%
We stress that  while $\varphi'$ has level $\ell$,  fragments $\varphi, \varphi_1,  \dots, \varphi_k$ can have any level. 
%Observe also that $\ell$-consistency implies $\ell'$-consistency for all $\ell'\geq \ell$. 
Note also that $\ell$-consistency speaks only of concept names in $\CN_1^\Phi \cup\CN_2^\Phi \cup \dots \cup \CN_\ell^\Phi$.
Identifying $A_{\varphi,V}$ with $A_{\varphi,V}^\kappa$ for $\kappa$ constantly equal to 1, we get that  consistency and $1$-consistency are equivalent.

In what follows, by an \emph{$(\ell,\ell')$-interpretation} we mean an $\ell'$-consistent level-$\ell$ interpretation. By an \emph{$(\ell,\ell')$-model} of $\Kk$ modulo $\Ee$ we mean an $(\ell,\ell')$-interpretation that is model of $\Kk$ modulo $\Ee$.
%
%%%% OLD
%We call $\Ii$ $\ell$-\emph{consistent (wrt.~$\Phi$)} if it satisfies the condition for consistency restricted to partitions where $\varphi'$ has level $\ell$. Note that $\varphi$ and $\varphi_1, \varphi_2, \dots, \varphi_k$ can have any level. 
%In what follows, by an \emph{$(\ell,\ell')$-interpretation} we mean an interpretation of level $\ell$ that is $\ell'$-consistent. By an \emph{$(\ell,\ell')$-model} of $\Kk$ modulo $\Ee$ we mean a $(\ell,\ell')$-interpretation that is model of $\Kk$ modulo $\Ee$.
%
The actual problem we will be solving is the following  \emph{$(\ell, \ell')$-model problem} for $\ell\leq \ell'$: Given a KB $\Kk$ with a trivial ABox, an environment $\Ee$, and a UCRPQ $\Phi$ decide if there exists a finite $(\ell,\ell')$-model of $\Kk$ modulo $\Ee$.

%Each UCRPQ $\Phi$ can be turned into a $\widehat\Bb$-decorated UCRPQ that is equivalent over $\widehat\Bb$-decorated interpretations, at the cost of blowing up the number of disjuncts by $n^{2k}$ where $n$ is the number of states of $\Bb$ and $k$ is the maximal number of RPQ atoms in a disjunct of $\Phi$. [TODO: Say it does not affect the complexity] 

By Lemma~\ref{lem:decorate_model}, entailment modulo environment (with trivial ABox) can be reduced to the $(1,1)$-model problem by modifying the environment to forbid all unary types containing $A^\kappa_{\varphi, V}$ for any $\varphi \in \tilde\Phi$, $\emptyset \neq V \subseteq \var(\varphi)$, and $\kappa$ constantly equal 1. 
Note that the reduction does not affect the query $\Phi$, nor the KB $\Kk$. However, it introduces up to $2^{\poly(\|\Phi\|)}$ new concept names $A_{\varphi,V}$ and $C_{q,\ell}$. Consequently, the number of unary types is at most $2^{|\CN(\Kk)|+2^{\poly(\|\Phi\|)}}$. It follows that the size of the environment is bounded by $2^{\|\Kk\|+2^{\poly(\|\Phi\|)}}$.

To solve the $(1,1)$-model problem we will proceed recursively, incrementing $\ell$ and $\ell'$ in an alternating fashion, until $\ell=\ell'=n+1$. At each level of the recursion we will be making multiple recursive calls. During the recursion the UCRPQ $\Phi$ and the TBox $\Tt$ will remain unchanged, but the ABox and the environment will evolve. Importantly, we will not introduce any new concepts, so the size of the environment will always be bounded by $2^{\|\Kk\|+2^{\poly(\|\Phi\|)}}$. The size of the ABox will be bounded by $\|\Kk\| + 2^{\poly(\|\Phi\|)}$ and the number of individuals will never grow. 
In consequence, the total cost of the algorithm can be computed as the cost of a single recursion step times the number of steps. In the following sections we will show that each recursion step can be carried out in time 
$2^{O(\|\Kk\|) +  2^{\poly(\|\Phi\|)}}$, excluding the cost of the recursive calls. The depth of the recursion is $O(n) = 2^{\poly(\|\Phi\|)}$. The number of recursive calls within a single recursion step is also bounded by $2^{O(\|\Kk\|) +  2^{\poly(\|\Phi\|)}}$. This means that the total number of recursion steps is $2^{\|\Kk\|\cdot 2^{\poly(\|\Phi\|)}}$ and so is the overall complexity of the recursive algorithm for the $(1,1)$-model problem.

\section{Incrementing the Level of Queries}
\label{ssec:queries}

The main goal of this section is to solve the $(\ell,\ell)$-model problem by reduction to multiple instances of the $(\ell, \ell+1)$-model problem for $\ell\leq n$. The $(n+1,n+1)$-model problem is discussed briefly at the end of the section. 

As a first step, we observe that it is enough to consider interpretations whose DAG of strongly connected components is a tree. For this purpose we define \emph{tree-like} interpretations as those that can be decomposed into multiple finite subinterpretations, called \emph{bags}, arranged into a (possibly infinite) tree such that:
(1) all bags are pairwise disjoint;
(2) between each parent and child bag there is a single edge, pointing from an element of the parent bag to an element of the child bag;  (3) all other edges are between elements of the same bag.
%(3) the root bag satisfies the ABox.  
We think of edges between bags as 2-element interpretations, called \emph{edge-bags}, sharing the origin  with the parent bag and the target  with the child bag. 
%% Might need to clarify that concept names are interpreted like in the two connected bags, but loops around endpoints are not included. 
A tree-like interpretation is then a union of all its bags and edge-bags.
Fig.~\ref{fig:partition} shows a tree-like interpretation with 4 bags and 3 edge-bags.
In tree-like interpretations $\ell$-consistency is a local property. % Let us say that a tree-like interpretation is \emph{locally $\ell$-consistent} if each of its bags and edge-bags is $\ell$-consistent.

\begin{restatable}{lemma}{lemconsistentcomp} \label{lem:consistentcomp}
  A tree-like interpretation is $\ell$-consistent iff each of its bags and edge-bags is $\ell$-consistent.
  %it is locally $\ell$-consistent.
\end{restatable}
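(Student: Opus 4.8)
The plan is to exploit that, under the natural convention that bags and edge-bags agree with $\Ii$ on the concept memberships of their shared elements (edge-bags carrying only their single edge), every bag and every edge-bag is an \emph{induced} subinterpretation of $\Ii$: every $\Ii$-edge between two of its elements, and every concept membership of one of its elements, is already recorded in it. Two structural facts will be used repeatedly: (i) a match for a $\widehat\Bb$-decorated CRPQ in an induced subinterpretation $\Jj$ of $\Ii$ is also a match in $\Ii$, and conversely a match in $\Ii$ is a match in $\Jj$ whenever it, together with the witnessing paths of its RPQ atoms, stays inside $\Jj$; and (ii) since inter-bag edges point from parent to child, every directed path of $\Ii$ traverses a strictly descending chain of bags and never re-enters a bag it has left.

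For ``$\Ii$ $\ell$-consistent $\Rightarrow$ every bag and edge-bag $\ell$-consistent'' I would argue contrapositively: a partition of a fragment $\varphi$ into a CRPQ $\varphi'$ of level $\ell$ and fragments $\varphi_1,\dots,\varphi_k$, together with a match $\pi$ for $\varphi'$ in a bag or edge-bag $\Jj$ witnessing a failure of $\ell$-consistency of $\Jj$, is already a failure of $\ell$-consistency of $\Ii$: by (i) the match lifts to $\Ii$, by inducedness the hypotheses $\pi(V_i)=\{e_i\}\subseteq(A_{\varphi_i,V_i}^{\kappa_i})^{\Jj}$ lift to $\Ii$, and the conclusion $\pi(V)=\{e\}\subseteq(A_{\varphi,V}^{\kappa})^{\Ii}$ pulls back to $\Jj$ because $e\in\Delta^{\Jj}$ and $\Jj$ is induced. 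This direction, and the base case of the converse below, are routine.

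For the converse, assuming all bags and edge-bags are $\ell$-consistent, I would fix a partition of a fragment $\varphi$ into a CRPQ $\varphi'$ of level $\ell$ and fragments $\varphi_i$, a match $\pi$ for $\varphi'$ in $\Ii$ with $\pi(V_i)=\{e_i\}\subseteq(A_{\varphi_i,V_i}^{\kappa_i})^{\Ii}$ and $\kappa,\kappa_i$ as required, and prove $\pi(V)=\{e\}\subseteq(A_{\varphi,V}^{\kappa})^{\Ii}$ by induction on the number of \emph{active} bags --- bags into which $\pi$ maps a variable or which meet one of a fixed choice of witnessing paths for the RPQ atoms of $\varphi'$. (This number is finite and positive.) If there is a single active bag $\Jj_0$, facts (i)--(ii) show that $\pi$ is in fact a match for $\varphi'$ inside $\Jj_0$ with each $e_i\in(A_{\varphi_i,V_i}^{\kappa_i})^{\Jj_0}$, so $\ell$-consistency of $\Jj_0$ applied to the given partition yields $e\in(A_{\varphi,V}^{\kappa})^{\Jj_0}\subseteq(A_{\varphi,V}^{\kappa})^{\Ii}$. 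Otherwise, I would pick a deepest active bag $\Jj_L$, with unique incoming edge $u \xrightarrow{r_0} v$ from its parent; by fact (ii) and minimality of $\Jj_L$, each atom of $\varphi'$ maps wholly outside $\Delta^{\Jj_L}$, wholly inside it, or is an edge/RPQ atom entering $\Jj_L$ exactly through $u \xrightarrow{r_0} v$. Cutting each such crossing atom at that edge --- in the way permitted by the fragment and partition constructions, with the intermediate states and the begin/end levels of the two halves read off $\Ii$'s $\widehat\Bb$-decoration (Lemmas~\ref{lem:levels} and~\ref{lem:reach} ensuring the levels chain correctly along the path) --- I would assemble the inside halves, the atoms of $\varphi'$ internal to $\Jj_L$, the connecting edge, and the $\varphi_i$ with $e_i\in\Delta^{\Jj_L}$ into a fragment of $\Phi$ attached to the rest only at $u$; applying $\ell$-consistency of $\Jj_L$ and then of the edge-bag on $u \xrightarrow{r_0} v$ puts $v$ and then $u$ into the appropriate fresh concepts. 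Replacing the now-labeled $\Jj_L$-part by the outside prefixes ending at $u$ gives a new legal partition of $\varphi$ whose core the restriction of $\pi$ still matches, with strictly fewer active bags and with $\kappa$ inherited on old variables and set to the relevant transition level on the new interface variable; the inductive hypothesis then yields $e\in(A_{\varphi,V}^{\kappa})^{\Ii}$, unless $e\in\Delta^{\Jj_L}$, in which case this is already one of the conclusions produced above by $\ell$-consistency of $\Jj_L$.

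The hard part will be the inductive step: verifying that the pieces obtained from the edge-cut really are fragments of $\Phi$ and that the re-grouping is a legal partition of a legal (connected) fragment; choosing the intermediate states and begin/end levels at the cut so that they are at once compatible with $\Ii$'s decoration, with the invariant $\ell\ge\ell'$ on begin/end levels, and with the level-$\ell$ requirement on cores; and propagating the synopses $\kappa,\kappa_i$ over the discarded and freshly created variables while preserving all three defining conditions of $\ell$-consistency. Extra care will be needed when $\varphi'$ is disconnected, so the active bags form several descending subtrees, and when the distinguished element $e$ lies inside the peeled-off bag.
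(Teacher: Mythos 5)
Your overall strategy is the same as the paper's: the forward direction is immediate from the bags being induced subinterpretations, and the converse is an induction on the number of bags that the match (together with a chosen set of witnessing paths) spans, re-partitioning the fragment at bag boundaries and invoking local $\ell$-consistency of bags and edge-bags. The paper organizes this as a minimal-counterexample argument and, crucially, always decomposes at the bag $b$ containing the distinguished element $e$: the core $\psi'$ is the part of the match inside $b$, the pieces $\psi_1,\dots,\psi_m$ hanging off $b$ (above or below it in the tree) are resolved first because their matches span fewer bags, and the single final application of $\ell$-consistency happens in $b$ and directly yields $e\in(A^\kappa_{\varphi,V})^\Ii$.

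Your variant --- peel a \emph{deepest} active bag $\Jj_L$ first --- has a genuine gap exactly in the case you flag at the end, and your proposed escape does not close it. If $e\in\Delta^{\Jj_L}$ but the match also spans bags above $\Jj_L$, then after peeling, the variables of $V$ are absorbed into the new peripheral fragment $\chi$ (the assembled $\Jj_L$-part), so $V\not\subseteq\var(\varphi)\cap\var(\varphi'')$ for the new core $\varphi''$ and the inductive hypothesis cannot be invoked with distinguished set $V$. And applying $\ell$-consistency of $\Jj_L$ to the partition of $\chi$ only puts $e$ into $A^{\cdot}_{\chi,V}$, not into $A^{\kappa}_{\varphi,V}$: to obtain the latter from $\Jj_L$'s consistency you need a partition of the \emph{full} $\varphi$ whose core is matched inside $\Jj_L$, and whose outside pieces are already recorded as concept memberships at the interface element $v$ --- which requires processing the bags above $\Jj_L$ first. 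So the peeling order must be constrained to leave the bag of $e$ for last (equivalently, centre the decomposition at $e$'s bag as the paper does); with that repair, and with the fragment/partition bookkeeping you defer to ``the hard part'' carried out as in the paper's construction (choosing the $X_i$ so the pieces are connected, hence fragments, and reading the interface states and levels off the $\widehat\Bb$-decoration), your argument goes through.
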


%With each edge $(e,e')$ between a parent bag and child bag in a tree-like interpretation $\Ii$ we associate the restricted interpretation $\Ii\upharpoonright\{e,e'\}$; we refer to these interpretations as \emph{edge-bags} of $\Ii$.

%A tree-like interpretation is \emph{locally-connected} if each of its bags is strongly connected.

\begin{restatable}{lemma}{lemunravellingconnected} 
\label{lem:unravelling-connected}
There is a finite $(\ell,\ell')$-model of $\Kk$ modulo  $\Ee$ iff there is a finite  tree-like $(\ell,\ell')$-model of $\Kk$ modulo $\Ee$ whose bags are strongly connected.
\end{restatable}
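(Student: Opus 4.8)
The plan is to prove both implications by suitable transformations on models. The ``if'' direction is immediate: a tree-like $(\ell,\ell')$-model with strongly connected bags is in particular a finite $(\ell,\ell')$-model of $\Kk$ modulo $\Ee$, so nothing needs to be done. The content is in the ``only if'' direction, where from an arbitrary finite $(\ell,\ell')$-model $\Ii$ we must build a finite tree-like $(\ell,\ell')$-model whose bags are strongly connected. I would proceed in two stages: first make the bags strongly connected, then unravel the DAG of strongly connected components into a tree.

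\textbf{Stage 1: strongly connected components as bags.} Starting from $\Ii$, take its decomposition into strongly connected components (SCCs). Each SCC is a candidate bag, and the SCCs are pairwise disjoint and partition $\Delta^\Ii$; inter-SCC edges witness the DAG structure. The subtlety is that an edge between two SCCs should become an edge-bag, so I would keep exactly one edge for each parent/child adjacency that we decide to use in the tree --- but that pruning belongs to Stage 2. In Stage 1 the only thing to check is that the SCC decomposition does not destroy any of the properties we need: it is the same interpretation $\Ii$, just viewed with a bag structure, so $\Ii \models^\Ee \Kk$ and $\widehat\Bb$-decoratedness and the level-$\ell$ property are untouched. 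For $\ell'$-consistency, I would invoke Lemma~\ref{lem:consistentcomp}: $\Ii$ is $\ell'$-consistent, hence each of its bags and edge-bags is $\ell'$-consistent, which is what we will need after unravelling.

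\textbf{Stage 2: unravelling the SCC-DAG into a tree.} Now unravel the DAG of SCCs of $\Ii$ into a (finite-branching, but possibly infinite-depth) tree in the standard way: nodes of the tree are finite paths in the DAG starting from the SCC containing the ABox individual $a$, and the bag attached to a tree node is an isomorphic copy of the SCC at the end of that path; for each DAG edge used along a path we insert a copy of the corresponding single inter-SCC edge as an edge-bag. (If there are several parallel edges between two SCCs in $\Ii$, I pick one of each relevant kind, or rather keep all of them inside the appropriate edge-bags --- here I need to be a little careful that condition~(2) of tree-likeness asks for a \emph{single} edge between parent and child, so an inter-SCC ``multi-edge'' of $\Ii$ must be split by routing each individual edge through its own child copy; this is the one bookkeeping point that needs care.) The resulting interpretation $\Jj$ is tree-like by construction and its bags are exactly copies of SCCs, hence strongly connected. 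It remains to verify the three model conditions. $\widehat\Bb$-decoratedness and the level-$\ell$ property are preserved because every edge of $\Jj$ is a copy of an edge of $\Ii$ with the same role and the same $C_\pp$-decorations on its endpoints, so edge levels are unchanged, and no element gains incoming edges over two different roles (each tree node copies the incoming edge of its SCC exactly once). For $\Jj \models^\Ee \Kk$: universal restrictions and concept inclusions are preserved under the local isomorphism of each bag/edge-bag with its original in $\Ii$; existential restrictions under the relaxed semantics are preserved because either the witness lay inside the same SCC (copied faithfully) or it was provided by $\varepsilon$ via the unary type, which is unchanged; and realized unary types are a subset of those realized in $\Ii$, hence still a subset of $\Theta$. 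Finally, for $\ell'$-consistency of $\Jj$, apply Lemma~\ref{lem:consistentcomp} in the other direction: each bag and edge-bag of $\Jj$ is isomorphic to a bag or edge-bag of $\Ii$, which is $\ell'$-consistent by Stage~1, hence so is each bag and edge-bag of $\Jj$, hence $\Jj$ is $\ell'$-consistent.

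\textbf{Finiteness.} The one genuine obstacle is that the unravelling as described is infinite whenever the SCC-DAG of $\Ii$ has a node with more than one path leading to it, or simply a long path --- the tree of paths is finite only if the DAG is itself a tree. To get a \emph{finite} tree-like model I would truncate: since $\Ii$ is finite, its SCC-DAG has bounded depth $d$, so the unravelling already has depth $\le d$ and is therefore finite. (There is no issue with branching: each SCC has finitely many outgoing inter-SCC edges.) Thus the tree of paths in a finite DAG is finite, and $\Jj$ is a finite interpretation. I expect this finiteness accounting --- making sure the multi-edge splitting in condition~(2) does not cause the construction to blow up or loop --- to be the main point requiring care; the preservation of $\models^\Ee\Kk$, levels and $\ell'$-consistency is routine given Lemma~\ref{lem:consistentcomp} and the fact that everything is built from local isomorphic copies of pieces of $\Ii$.
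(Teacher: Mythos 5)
Your proposal is correct and follows essentially the same route as the paper: unravel the DAG of strongly connected components into a tree of bags, with finiteness guaranteed by the bounded height of that DAG, and with the model and consistency conditions preserved because every piece of the result is an isomorphic copy of a piece of $\Ii$. One small repair: in Stage~1 you invoke Lemma~\ref{lem:consistentcomp} to conclude that each SCC of $\Ii$ is $\ell'$-consistent, but that lemma applies only to tree-like interpretations, and $\Ii$ equipped with its SCC decomposition is generally not tree-like (the SCC graph is a DAG, possibly with several edges between two components); the conclusion nevertheless holds because $\ell'$-consistency is defined by forbidding certain matches, and any forbidden match in an induced subinterpretation transfers to $\Ii$ --- indeed the paper uses exactly this homomorphism-based argument to transfer $\ell'$-consistency from $\Ii$ to the unravelling $\Jj$ directly, bypassing Lemma~\ref{lem:consistentcomp} altogether. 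A cosmetic difference: the paper adds a child bag only on demand, when an existential witness is missing in the current bag, rather than copying all inter-SCC edges as you do; both variants yield a finite tree.
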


The next step is to eliminate the lowest level from the queries. An \emph{$(\ell+1)$-reduct} of a level-$\ell$ CRPQ $\varphi$ is any CRPQ that can be obtained from $\varphi$ by first splitting each RPQ atom $\Bb_{q_1,q_2}(x_1,x_2)$ of begin level $\ell_1 > \ell$ and end level $\ell$ into  $\Bb_{q_1,q'_1}(x_1,x'_1) \land C_{q'_1, \ell'_1}(x'_1) \land r(x'_1,x'_2) \land C_{q'_2,\ell}(x'_2) \land \Bb_{q'_2,q_2}(x'_2,x_2)$ where $\ell_1 \geq \ell'_1 \geq \ell+1$, and then dropping from the resulting CRPQ all atoms whose begin and end level is $\ell$ (all unary atoms are kept). Note that each $(\ell+1)$-reduct $\varphi'$ of $\varphi$ is a conjunction of at most $|\varphi|$ disjoint fragments of $\varphi$ and that $\var(\varphi) \subseteq \var(\varphi')$.

\begin{restatable}{lemma}{crpqSCCreducts} 
\label{lem:reducts}
Over $\widehat\Bb$-decorated  interpretations, each level-$\ell$ CRPQ implies the union of its $(\ell+1)$-reducts. 
Over strongly-connected level-$\ell$ interpretations, each level-$\ell$ CRPQ is equivalent to the union of its $(\ell+1)$-reducts. 
\end{restatable}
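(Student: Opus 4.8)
## Proof plan for Lemma~\ref{lem:reducts}

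\textbf{Setup and direction 1 (implication over $\Bb$-decorated interpretations).}
Let $\varphi$ be a level-$\ell$ CRPQ and $\Ii$ a $\widehat\Bb$-decorated interpretation with a match $\pi$ for $\varphi$ in $\Ii$. The plan is to produce, from $\pi$, an $(\ell+1)$-reduct $\varphi'$ of $\varphi$ together with a match $\pi'$ extending $\pi$. For each RPQ atom $\Bb_{q_1,q_2}(x_1,x_2)$ of $\varphi$, consider the witnessing path in $\Ii$ from $\pi(x_1)$ to $\pi(x_2)$ and the associated run of $\widehat\Bb$ along it (which exists and is unique once the decoration of $\pi(x_1)$ is fixed, by the $\widehat\Bb$-decoration axioms). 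I apply Lemma~\ref{lem:levels} to this run: it yields positions $j_1 < \dots < j_k$, strictly decreasing levels $\ell_1 > \dots > \ell_k$, and states $q_0, \dots, q_k$ witnessing the run of $\Bb$ from $q_1$ to $q_2$. The begin level of the atom is $\ell_1$ and the end level is $\ell_k = \ell$ (since $\varphi$ has level $\ell$, end levels equal $\ell$ whenever the atom is to be split; atoms whose begin level is already $\ell$ are dropped, matching the definition of $(\ell+1)$-reduct). I then cut the path at position $j_{k-1}$: the prefix up to $\pp_{j_{k-1}}$ uses only transitions of level $\ge \ell_{k-1} \ge \ell+1$, the single edge at position $j_{k-1}{+}1$ carries some role $r$, and the suffix lives entirely at level $\ell$. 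This realizes precisely the split $\Bb_{q_1,q'_1}(x_1,x'_1) \land C_{q'_1,\ell'_1}(x'_1) \land r(x'_1,x'_2) \land C_{q'_2,\ell}(x'_2) \land \Bb_{q'_2,q_2}(x'_2,x_2)$ with $\ell'_1 = \ell_{k-1} \ge \ell+1$, setting $\pi'(x'_1)$ and $\pi'(x'_2)$ to the endpoints of the cut edge, $q'_1 = q_{k-1}$, $q'_2 = \delta(q_{k-1}, r)$; the unary decoration atoms $C_{q'_1,\ell'_1}(x'_1)$ and $C_{q'_2,\ell}(x'_2)$ hold because of the reconstruction in Lemma~\ref{lem:levels}. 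Doing this for every RPQ atom and then dropping the level-$\ell$ atoms gives the reduct $\varphi'$; since all RPQ atoms are replaced by fragments that are still matched and all unary atoms are retained, $\pi'$ is a match for $\varphi'$ in $\Ii$. A subtlety to handle: if $k = 1$ the atom has begin level $\ell$ and is simply dropped, and if $\ell_{k-1}$ can equal the begin level $\ell_1$ the split degenerates but is still a legal reduct; I will note both boundary cases.

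\textbf{Direction 2 (equivalence over strongly-connected level-$\ell$ interpretations).}
Now $\Ii$ is a strongly connected level-$\ell$ interpretation, and I must show conversely that a match for some $(\ell+1)$-reduct $\varphi'$ of $\varphi$ yields a match for $\varphi$. Let $\pi'$ be such a match. For each original RPQ atom $\Bb_{q_1,q_2}(x_1,x_2)$ that was split, I have a match for $\Bb_{q_1,q'_1}(x_1,x'_1)$, a match for $r(x'_1,x'_2)$, and a match for $\Bb_{q'_2,q_2}(x'_2,x_2)$, with the decoration constraints $C_{q'_1,\ell'_1}(x'_1)$, $C_{q'_2,\ell}(x'_2)$ satisfied. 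The $\Bb_{q'_2,q_2}$ segment: since $\pi'(x'_2) \in C_{q'_2,\ell}^\Ii$ and $\pi'(x_2) \in C_{q_2,\ell}^\Ii$ (the end level is $\ell$) and $\Ii$ is a level-$\ell$ interpretation, Lemma~\ref{lem:reach} tells me $(\pi'(x'_2),\pi'(x_2)) \in \Bb_{q'_2,q_2}^\Ii$ is \emph{equivalent} to the existence of a path from $\pi'(x'_2)$ to $\pi'(x_2)$ — which I need not reconstruct, I only need the path, and in fact it is already guaranteed. For the $\Bb_{q_1,q'_1}$ segment I have a witnessing path of level $\ge \ell+1$; prepending this path, then the $r$-edge, then the path from the $\Bb_{q'_2,q_2}$ segment, I obtain a single path in $\Ii$ from $\pi'(x_1)$ to $\pi'(x_2)$, and I must argue it witnesses $\Bb_{q_1,q_2}$. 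Here I reassemble the run: along the first path the run of $\widehat\Bb$ keeps $q_1$ at some level that stays $\ge \ell'_1$ (using that all transitions have level $\ge \ell'_1$, hence by the definition of transition level the thread starting at $q_1$ does not drop below $\ell'_1$), arriving with $q'_1$ at level $\ell'_1$; the $r$-edge sends $q'_1 = q_{k-1}$ down to $q'_2 = \delta(q_{k-1},r)$ at level $\ell$; and then Lemma~\ref{lem:reach} applied in the "only if" direction, using that both endpoints are decorated $C_{\cdot,\ell}$ in a level-$\ell$ interpretation, certifies that the concatenated suffix path carries a run of $\Bb$ from $q'_2$ to $q_2$. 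Stitching these runs along the common states $q'_1$ and $q'_2$ gives a run of $\Bb$ from $q_1$ to $q_2$ on the full word, i.e.\ $(\pi'(x_1),\pi'(x_2)) \in \Bb_{q_1,q_2}^\Ii$. Restricting $\pi'$ back to $\var(\varphi) \subseteq \var(\varphi')$ and checking that the retained unary atoms of $\varphi$ are satisfied (they are, since they are a subset of those of $\varphi'$), I get a match $\pi$ for $\varphi$.

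\textbf{Where strong connectedness is used, and the main obstacle.}
The reason direction 2 needs strong connectedness (and level-$\ell$-ness) is exactly Lemma~\ref{lem:reach}: in an arbitrary $\widehat\Bb$-decorated interpretation, the existence of a path between two $C_{q,\ell}$-elements need not produce a $\Bb_{q,q'}$-connection, because edges of level below $\ell$ could be traversed and disturb the thread; restricting to level-$\ell$ interpretations rules this out, and strong connectedness is what the surrounding construction (bags of the tree-like model, Lemma~\ref{lem:unravelling-connected}) supplies. I expect the main technical obstacle to be the careful bookkeeping in direction 2 when \emph{several} RPQ atoms are split simultaneously and share variables: one must verify that the fresh variables $x'_1, x'_2$ introduced per atom are genuinely fresh and disjoint across atoms (so the reduct is a well-formed conjunction of disjoint fragments, as asserted in the definition), and that the decomposition $\pi'$ restricted to each fragment is consistent with the global $\pi$ on the shared original variables $x_1, x_2$. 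This is routine but fiddly; the conceptual heart is the single-atom argument via Lemmas~\ref{lem:levels} and~\ref{lem:reach}, and the multi-atom case follows by treating the atoms independently since fragments overlap only on original variables of $\varphi$, whose images are already fixed by $\pi$.
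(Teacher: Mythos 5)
Your overall strategy matches the paper's (whose proof is a one-liner: use Lemma~\ref{lem:reach} and the definition of reducts): direction 1 via Lemma~\ref{lem:levels}, cutting each witnessing path at the edge where the thread first reaches level $\ell$; direction 2 via Lemma~\ref{lem:reach} plus strong connectedness. Direction 1 is fine. In direction 2, however, the crucial stitching step is asserted rather than proved, and as written it does not hold for an arbitrary reduct. You write that "the $r$-edge sends $q'_1 = q_{k-1}$ down to $q'_2 = \delta(q_{k-1},r)$ at level $\ell$", importing the states $q_{k-1}$ from the reduct you \emph{constructed} in direction 1; but in direction 2 the reduct and its match are given, not constructed. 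What you need is that the run of $\Bb$ reassembles across the cut, i.e.\ that $\delta(q'_1,r)=q'_2$ and that this state sits at level $\ell$ in the decoration of $\pi'(x'_2)$. The second part is forced by the atom $C_{q'_2,\ell}(x'_2)$ \emph{once the first part holds}; the first part is not forced by the match. If the definition of $(\ell+1)$-reduct is read as allowing arbitrary choices of $q'_1,q'_2,r$, then a "reduct" with $q'_2\neq\delta(q'_1,r)$ can be matched in a strongly connected level-$\ell$ interpretation that does not satisfy $\varphi$ (for instance one in which every transition has level $n$, so the thread started at level $\ell_1>\ell$ is stable and never descends to level $\ell$, making the original atom unmatchable while the truncated reduct is trivially matchable), and the stitching breaks. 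So you must make explicit that only splits consistent with the transition function are considered, and then observe that the match itself, via the decoration axioms, certifies that the thread leaving $q'_1$ at level $\ell'_1$ lands on $q'_2$ at level $\ell$, after which Lemma~\ref{lem:reach} finishes the job.

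Two smaller points. Your claim that on the first segment "all transitions have level $\geq \ell'_1$" is unjustified (the interpretation is only level-$\ell$) and also unnecessary: that $q'_1$ sits at level $\ell'_1$ in the decoration of $\pi'(x'_1)$ is given directly by the retained unary atom, not by thread-tracking along the witnessing path. And in direction 2 you only treat the split atoms; the atoms of $\varphi$ whose begin and end level are both $\ell$ are \emph{dropped} in the reduct and must also be re-established, by exactly the same appeal to strong connectedness and Lemma~\ref{lem:reach} that you use for the $\Bb_{q'_2,q_2}$ segment.
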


%Because Lemma~\ref{lem:unravelling-connected} guarantees tree-like solutions with strongly connected bags, we can replace $\ell$-consistency with \emph{strong $\ell$-consistency}: the only difference is that $\pi$ ranges over matches of all possible $(\ell+1)$-reducts of $\varphi'$, rather than over matches of $\varphi'$ itself.

Because Lemma~\ref{lem:unravelling-connected} guarantees tree-like solutions with strongly connected bags, we can replace $\ell$-consistency with \emph{strong $\ell$-consistency}: the only difference is that $\pi$ ranges over matches of all possible $(\ell+1)$-reducts of $\varphi'$, rather than over matches of $\varphi'$ itself. We restate Lemma~\ref{lem:unravelling-connected} as follows.

%We call a tree-like interpretation \emph{locally strongly $\ell$-consistent} if all its edge-bags are $\ell$-consistent and all its bags are strongly $\ell$-consistent. 

\begin{restatable}{lemma}{lemstronglyconsistent} \label{lem:strongly-consistent}
There is a finite $(\ell,\ell)$-model of $\Kk$ modulo $\Ee$  iff 
there is a finite tree-like level-$\ell$ model of $\Kk$ modulo $\Ee$ whose edge bags are $\ell$-consistent and bags strongly $\ell$-consistent. 
\end{restatable}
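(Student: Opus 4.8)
The plan is to derive Lemma~\ref{lem:strongly-consistent} from Lemma~\ref{lem:unravelling-connected} by showing that, for tree-like interpretations whose bags are strongly connected level-$\ell$ interpretations, $\ell$-consistency and strong $\ell$-consistency coincide. The forward direction of Lemma~\ref{lem:strongly-consistent} is then immediate: given a finite $(\ell,\ell)$-model of $\Kk$ modulo $\Ee$, Lemma~\ref{lem:unravelling-connected} yields a finite tree-like $(\ell,\ell)$-model with strongly connected bags; its edge-bags and bags are $\ell$-consistent by Lemma~\ref{lem:consistentcomp}; strong $\ell$-consistency of the bags follows from the equivalence I aim to establish. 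Conversely, a finite tree-like level-$\ell$ model whose edge-bags are $\ell$-consistent and whose bags are strongly $\ell$-consistent has, by the same equivalence, all bags $\ell$-consistent, hence by Lemma~\ref{lem:consistentcomp} the whole interpretation is $\ell$-consistent, so it is a finite $(\ell,\ell)$-model of $\Kk$ modulo $\Ee$.

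The heart of the argument is therefore the equivalence of $\ell$-consistency and strong $\ell$-consistency over strongly connected level-$\ell$ bags. First I would note that strong $\ell$-consistency implies $\ell$-consistency in general: a match $\pi$ of $\varphi'$ can be composed, via Lemma~\ref{lem:reducts} (first part, which holds over all $\widehat\Bb$-decorated interpretations), into a match of some $(\ell+1)$-reduct of $\varphi'$ on the same variables, so any violation witnessed by $\varphi'$ is also witnessed by one of its reducts. For the converse — $\ell$-consistency implies strong $\ell$-consistency over strongly connected level-$\ell$ interpretations — I would use the second part of Lemma~\ref{lem:reducts}: over strongly-connected level-$\ell$ interpretations each level-$\ell$ CRPQ $\varphi'$ is \emph{equivalent} to the union of its $(\ell+1)$-reducts. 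Hence a match $\pi$ of some $(\ell+1)$-reduct of $\varphi'$, restricted to $\var(\varphi')$, extends to (or already is) a match of $\varphi'$ itself, and a violation witnessed by a reduct translates back into a violation witnessed by $\varphi'$. The bookkeeping I must be careful about is that the partition data ($\varphi,\varphi_1,\dots,\varphi_k$, the sets $V,V_i$, and the synopses $\kappa,\kappa_i$) is shared between the two formulations, so a witness to failure of one condition is transported verbatim to a witness to failure of the other; only the role of $\varphi'$ versus its reducts changes.

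The main obstacle I anticipate is the interplay between the RPQ-splitting used to define $(\ell+1)$-reducts and the fragment/partition machinery: an $(\ell+1)$-reduct is a conjunction of disjoint fragments of $\varphi$, so when a reduct $\varphi''$ of $\varphi'$ appears in the role of ``$\varphi'$'' in the definition of strong $\ell$-consistency, I need to check that the accompanying partition of a fragment of $\Phi$ into $\varphi''$ and fragments $\varphi_1,\dots,\varphi_k$ can be matched up, going through Lemma~\ref{lem:reducts}, with an honest partition into $\varphi'$ and $\varphi_1,\dots,\varphi_k$ — using that a fragment of a fragment of $\Phi$ is again a fragment of $\Phi$, and that the extra atoms $C_{q'_i,\ell'_i}(x'_i)\land r(x'_i,x'_2)\land C_{q'_2,\ell}(x'_2)$ introduced by the split, together with the dropped level-$\ell$ atoms, are exactly the freedom already present in the partition clauses that route an RPQ atom through intermediate levels. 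Apart from this compatibility check, and a brief remark that an edge-bag (a two-element, single-edge interpretation) is automatically strongly connected in the relevant sense so its consistency is unaffected, the proof is a direct combination of Lemmas~\ref{lem:unravelling-connected}, \ref{lem:consistentcomp}, and \ref{lem:reducts}.
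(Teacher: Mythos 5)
Your proposal is correct and follows exactly the route the paper intends: the paper treats this lemma as a direct restatement of Lemma~\ref{lem:unravelling-connected} obtained by combining it with Lemma~\ref{lem:consistentcomp} and the two halves of Lemma~\ref{lem:reducts}, using that strong $\ell$-consistency implies $\ell$-consistency on all $\widehat\Bb$-decorated interpretations while the converse holds on strongly connected level-$\ell$ bags. Your observation that the partition data is shared verbatim between the two consistency notions (only the query matched by $\pi$ changes from $\varphi'$ to its reducts) resolves the compatibility concern you raise, so no gap remains.
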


It remains to show how to find models of the latter form. Let us first see how to find one consisting of a single bag; that is, how to find a finite strongly $\ell$-consistent level-$\ell$ model of $\Kk$ modulo $\Ee$. We will show that this amounts to finding a finite  $(\ell, \ell+1)$-model of $\Kk$ modulo $\Ee'$ for one of the $(\ell+1)$-reducts $\Ee'$ of $\Ee$ described below. 

%The choice depends on the set $\varphi$ of concept names $A_{\varphi,V}$ that may have non-empty extension in the model we seek. For each $\varphi$ we define $\Ee_\varphi$ as follows. 

%The first step is to ensure that the information from level $\ell$ is correctly passed on to level $\ell+1$. To achieve this, we filter out from $\Ee$ all unary types that contain $A^\kappa_{\varphi,V}$ but not $A^{\kappa'}_{\varphi,V}$ where  $\kappa(x)\leq \ell$ for all $x$, $\kappa'(x) = \ell+1$ for $x \in V$, and $\kappa'(x) = \kappa(x)$ for $x \notin V$. Let $\tilde\Ee$ be the resulting environment.

%The second step is to ensure that the information from level $\ell+1$ is correctly passed back to level $\ell$.
Consider a fragment $\varphi$, a non-empty set $V\subseteq \var(\varphi)$, a partition of $\varphi$ into a CRPQ $\varphi'$ of level $\ell$, and fragments $\varphi_1, \varphi_2, \dots, \varphi_k$, as in the definition of (strong) $\ell$-consistency. Let $\kappa :\var(\varphi) \to \{1,2, \dots, \ell\}$ be such that  $\kappa\big(\var(\varphi)\cap \var(\varphi')\big) = \{\ell\}$.
Let $\psi'$ be an $(\ell+1)$-reduct of $\varphi'$. Consider CRPQs $\psi$ with $\var(\varphi) \subseteq \var(\psi)$ that can be partitioned into $\psi'$ and $\varphi_1, \varphi_2, \dots, \varphi_k$. Choose the one with minimal $\var(\psi)$. This amounts to merging back all RPQ atoms split during the partition of $\varphi$, provided that their segments were not affected by replacing $\varphi'$ with $\psi'$.
%There is a (unique) CRPQ $\psi'$ with $\var(\psi') = \var(\varphi')$ that can be partitioned into $\psi$ and $\varphi_1, \varphi_2, \dots, \varphi_k$. 
The CRPQ $\psi$ is not a fragment, because it need not be connected: 
Figure~\ref{fig:partition} right illustrates passing from $\varphi$ to $\psi$ consisting of two disconnected fragments. Let $\psi_1, \psi_2, \dots, \psi_m$ be the fragments constituting $\psi$ and let $U_i = V \cap \var(\psi_i)$. 
An \emph{$(\ell+1)$-reduct} $\Ee'$ of $\Ee$ is constructed by iterating over all possible choices of $\varphi$, $V$, $\varphi'$, $\varphi_1, \varphi_2, \dots, \varphi_k$, $\psi'$, $\kappa$, as above, and pruning $\Ee$ for each choice in one of the following ways:
\begin{itemize}
\item either pick $i$ such that $U_i = \emptyset$ and remove  all unary types that contain any $A^{\kappa_i}_{\psi_i,W_i}$ with $W_i\subseteq \var(\psi_i) \cap \var(\psi')$, 
$\kappa_i\big(\var(\psi_i) \cap \var(\psi')\big)=\{\ell+1\}$, and 
$\kappa_i(x)=\kappa(x)$ for all $x \in \var(\psi_i)\setminus \var(\psi')$ ;
\item  or remove  all unary types 
that contain some $A^{\kappa_i}_{\psi_i,U_i}$ with
$\kappa_i\big(\var(\psi_i) \cap \var(\psi')\big)=\{\ell+1\}$ and $\kappa(x)=\kappa_i(x)$ for all $x \in \var(\psi_i)\setminus \var(\psi')$,
for each $i$ such that $U_i \neq \emptyset$, but do not contain $A^\kappa_{\varphi,V}$.
\end{itemize}

\begin{restatable}{lemma}{lemenvreduct}
\label{lem:envreduct} 
$\Ii$ is a strongly $\ell$-consistent level-$\ell$ model of $\Kk$ modulo $\Ee$ iff some interpretation that agrees with $\Ii$ over all role names and all concept names except $\CN_{\ell+1}^\Phi$ is an $(\ell+1)$-consistent level-$\ell$ model of $\Kk$ modulo $\Ee'$ for some $(\ell+1)$-reduct $\Ee'$ of $\Ee$.
\end{restatable}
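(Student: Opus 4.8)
The plan is to prove both directions by passing through a \emph{minimal completion} of the $\CN_{\ell+1}^\Phi$-concepts together with a judicious choice among the two pruning options that define the $(\ell+1)$-reducts of $\Ee$. Two facts are used throughout: the concepts of $\CN_{\ell+1}^\Phi$ occur neither in $\Kk$ (so redefining them preserves being a level-$\ell$ interpretation and a model of the CIs and of the ABox) nor in the environments arising in the recursion (so $\Theta$ is closed under changing the $\CN_{\ell+1}^\Phi$-part of a type, and $\varepsilon$ ignores that part); and since $\Theta' \subseteq \Theta$ and $\varepsilon' = \varepsilon|_{\Theta'}$, the relaxed semantics of $\Kk$ modulo $\Ee$ and modulo $\Ee'$ coincide on any interpretation realizing only $\Theta'$-types, so a model of $\Kk$ modulo $\Ee'$ is also a model of $\Kk$ modulo $\Ee$. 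Finally, the interpretations in question are single bags, hence strongly connected, so by Lemma~\ref{lem:reducts} every level-$\ell$ CRPQ is equivalent over them to the union of its $(\ell+1)$-reducts.

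\smallskip\noindent\emph{($\Rightarrow$)} Given a strongly $\ell$-consistent level-$\ell$ model $\Ii$ of $\Kk$ modulo $\Ee$, let $\Ii'$ agree with $\Ii$ on all role names and all concept names outside $\CN_{\ell+1}^\Phi$, and interpret the concepts of $\CN_{\ell+1}^\Phi$ as the \emph{least} family closed under the rule: for every partition of a fragment $\varphi$ of $\Phi$ into a level-$(\ell+1)$ CRPQ $\varphi'$ and fragments $\varphi_1,\dots,\varphi_k$, with $V,V_i,\kappa,\kappa_i$ as in the definition of $(\ell+1)$-consistency, and every match $\pi$ for $\varphi'$ with $\pi(V_i)\subseteq(A^{\kappa_i}_{\varphi_i,V_i})^{\Ii'}$ for all $i$, one has $\pi(V)\subseteq(A^\kappa_{\varphi,V})^{\Ii'}$. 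Since $\varphi'$ does not mention $\CN_{\ell+1}^\Phi$, a match of $\varphi'$ in $\Ii'$ is a match in $\Ii$, so the induced operator is monotone and, $\Ii$ being finite, its least fixpoint exists and is finite; by construction $\Ii'$ is $(\ell+1)$-consistent, it is level-$\ell$, and $\Ii'\models^\Ee\Kk$. To obtain $\Ee'$, process each configuration $(\varphi,V,\varphi',\varphi_1,\dots,\varphi_k,\psi',\kappa)$ of the reduct construction --- with $\psi$ the CRPQ obtained by replacing $\varphi'$ by $\psi'$, its fragments $\psi_1,\dots,\psi_m$, and $U_j=V\cap\var(\psi_j)$ --- as follows: if some $j$ with $U_j=\emptyset$ satisfies $(A^{\kappa_j}_{\psi_j,W_j})^{\Ii'}=\emptyset$ for all admissible $W_j,\kappa_j$, prune with the first option and this $j$; otherwise prune with the second option. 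In the first case only types containing an empty concept are removed, hence no realized type. In the second case every $\psi_j$ with $U_j=\emptyset$ is recorded at some element (by our choice), so a $d$ with a removed type would satisfy $d\in(A^{\kappa_i}_{\psi_i,U_i})^{\Ii'}$ for all $i$ with $U_i\neq\emptyset$ and $d\notin(A^\kappa_{\varphi,V})^{\Ii'}$; unfolding the closure rule yields matches of level-$(\ell+1)$ cores of all the $\psi_j$ with their sub-fragments recorded, which glue (disjoint variables) into a match of a copy of $\psi'$ with $V\mapsto d$, and then Lemma~\ref{lem:reducts} extends this to a match of $\varphi'$, hence of $\varphi$, witnessing $d\in(A^\kappa_{\varphi,V})^{\Ii'}$ --- a contradiction. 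Thus $\Ii'$ realizes only $\Theta'$-types and is an $(\ell+1)$-consistent level-$\ell$ model of $\Kk$ modulo $\Ee'$.

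\smallskip\noindent\emph{($\Leftarrow$)} Let $\Ii'$ agree with $\Ii$ off $\CN_{\ell+1}^\Phi$ and be an $(\ell+1)$-consistent level-$\ell$ model of $\Kk$ modulo $\Ee'$ for some $(\ell+1)$-reduct $\Ee'$ of $\Ee$. By the opening remarks $\Ii$ is a level-$\ell$ model of $\Kk$ modulo $\Ee$, and since strong $\ell$-consistency mentions only concepts of $\CN_{\le\ell}^\Phi$, on which $\Ii$ and $\Ii'$ agree, it suffices to show $\Ii'$ is strongly $\ell$-consistent. If it were not, fix a witnessing fragment $\varphi$, a partition into a level-$\ell$ CRPQ $\varphi'$ and fragments $\varphi_i$, an $(\ell+1)$-reduct $\psi'$ of $\varphi'$, level functions $\kappa,\kappa_i$, and a match $\pi$ for $\psi'$ in $\Ii'$ with $\pi(V_i)\subseteq(A^{\kappa_i}_{\varphi_i,V_i})^{\Ii'}$ for all $i$ but $\pi(V)=\{e\}\not\subseteq(A^\kappa_{\varphi,V})^{\Ii'}$. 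With $\psi,\psi_1,\dots,\psi_m,U_j$ as above and $\psi'$ a level-$(\ell+1)$ CRPQ, applying $(\ell+1)$-consistency of $\Ii'$ to the partition of each $\psi_j$ induced by restricting $\pi$ shows $A^{\kappa_j'}_{\psi_j,W_j}$ holds at $\pi(W_j)$, taking $W_j=U_j$ when $U_j\neq\emptyset$ and $W_j$ any singleton in $\var(\psi_j)\cap\var(\psi')$ otherwise, where one checks $\kappa_j'$ is exactly the level function quantified over by the reduct construction ($\ell+1$ on $\var(\psi_j)\cap\var(\psi')$, equal to $\kappa$ off $\var(\psi')$). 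Inspecting the choice $\Ee'$ made for this configuration: with the first option and an $i$ having $U_i=\emptyset$, all types containing $A^{\kappa_i'}_{\psi_i,W_i}$ are forbidden, contradicting $\pi(W_i)\in(A^{\kappa_i'}_{\psi_i,W_i})^{\Ii'}$; with the second option, $\tp^{\Ii'}(e)$ contains $A^{\kappa_j'}_{\psi_j,U_j}$ for every $j$ with $U_j\neq\emptyset$ (as $\pi(U_j)=\{e\}$) but not $A^\kappa_{\varphi,V}$, so $\tp^{\Ii'}(e)$ is forbidden. Either way $\Ii'$ realizes a non-$\Theta'$-type, a contradiction; hence $\Ii$ is strongly $\ell$-consistent.

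\smallskip\noindent The main obstacle is the combinatorial matching between, on the one hand, the $(\ell+1)$-reduct $\psi'$ of the level-$\ell$ core $\varphi'$ together with the fragment decomposition of $\psi$, and, on the other, the precise shape of the two pruning clauses defining the reducts of $\Ee$ --- above all, verifying that the level functions $\kappa_j'$ produced by applying $(\ell+1)$-consistency to the $\psi_j$ are exactly those ranged over in the construction of $\Ee'$ --- and, in the forward direction, gluing the per-fragment witnesses and invoking strong connectivity via Lemma~\ref{lem:reducts} to upgrade a match of the reassembled core $\psi'$ to a match of $\varphi'$. The remaining points (being level-$\ell$, the relaxed semantics, and that the least-fixpoint interpretation is $(\ell+1)$-consistent) are routine unfoldings of the definitions.
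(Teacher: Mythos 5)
Your proof follows the same route as the paper's. In the forward direction you complete the $\CN_{\ell+1}^\Phi$-concepts (you via a least fixpoint of the $(\ell+1)$-consistency rule, the paper via an explicit \emph{correctness} condition that first seeds the lifted concepts from $\CN_{\ell}^\Phi$ and then closes under a single application of the rule), you choose between the two pruning options according to whether the first option's forbidden concepts are empty in $\Ii'$, and you verify the second option by unfolding the recorded memberships, reassembling the pieces into a match of a reduct of $\varphi'$, and invoking strong $\ell$-consistency of $\Ii$. In the backward direction you refute a violation of strong $\ell$-consistency by applying $(\ell+1)$-consistency to the induced partitions of the $\psi_j$ and reading off a type forbidden by whichever pruning clause was used --- exactly as in the paper.

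One step as written does not go through. You assert that ``the interpretations in question are single bags, hence strongly connected'' and then use Lemma~\ref{lem:reducts} to upgrade the reassembled match of $\psi'$ to ``a match of $\varphi'$, hence of $\varphi$, witnessing $d\in(A^\kappa_{\varphi,V})^{\Ii'}$''. The lemma is stated for arbitrary level-$\ell$ interpretations, so strong connectivity is not part of the hypotheses; and even granting it, a match of $\varphi'$ is not a match of $\varphi$ (the fragments $\varphi_i$ are only \emph{recorded} via the concepts $A^{\kappa_i}_{\varphi_i,V_i}$, not matched), and membership in $A^\kappa_{\varphi,V}\in\CN_\ell^\Phi$ is not defined by matches of $\varphi$ in the first place --- it is a given part of $\Ii$ constrained only by the consistency conditions. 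Neither the detour nor the connectivity assumption is needed: strong $\ell$-consistency already quantifies over matches of $(\ell+1)$-reducts of the level-$\ell$ core, so the reassembled match, with the leaf fragments of your unfolding recorded at levels $\leq\ell$, directly forces $d\in(A^\kappa_{\varphi,V})^{\Ii}$; this is how the paper concludes, after adjusting the levels and states of the splitting variables. The remaining difference (your least fixpoint versus the paper's seeded one-step completion) is harmless, though your version requires an iterated unfolding and a check that the successive cores tile exactly the level-$(\ell+1)$ part of $\psi'$, where the paper's definition bottoms out after one step; you correctly identify the level-function bookkeeping as the real work in either case.
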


Finite models consisting of multiple bags can be constructed bottom-up by a least fixed point procedure, using Lemma~\ref{lem:envreduct} to find each bag.

\begin{restatable}{lemma}{lemalgoconnected}
\label{lem:algo-connected}
The $(\ell,\ell)$-model problem for an $\ALC$ KB $\Kk$, a UCRPQ $\Phi$, and an environment $\Ee$ can be solved in time \[2^{O(\|\Kk\|)+ 2^{\poly(\|\Phi\|)}}\]  given an oracle for the $(\ell, \ell+1)$-model problem (with the same UCRPQ and TBox).
\end{restatable}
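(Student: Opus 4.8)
# Proof Proposal for Lemma~\ref{lem:algo-connected}

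The plan is to combine Lemma~\ref{lem:strongly-consistent}, Lemma~\ref{lem:envreduct}, and a least fixed-point computation over the space of unary types realizable in bags. First I would invoke Lemma~\ref{lem:strongly-consistent} to reduce the $(\ell,\ell)$-model problem to deciding whether there is a finite tree-like level-$\ell$ model of $\Kk$ modulo $\Ee$ whose edge-bags are $\ell$-consistent and whose bags are strongly $\ell$-consistent. Since such a model is assembled from bags connected by single edges, the key object to compute is the set $S$ of all unary $\Kk$-types $\tau$ such that there exists a finite strongly $\ell$-consistent level-$\ell$ model of $\Kk$ modulo some environment $\Ee_\tau$, where $\Ee_\tau$ records, for the root element of the bag, the type $\tau$ and the demands it must satisfy externally (i.e.\ which $(r,B)$-edges point out of the bag to children, and which concepts the children realize). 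Concretely, a type $\tau$ belongs to $S$ precisely when the root can be furnished with children that are themselves roots of bags whose root-types are already in $S$, and whose attachment is consistent with the environment $\Ee$ restricted appropriately.

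Next I would turn the "single-bag" subproblem into the $(\ell,\ell+1)$-model problem using Lemma~\ref{lem:envreduct}: finding a finite strongly $\ell$-consistent level-$\ell$ model of $\Kk$ modulo an environment is equivalent to finding, for some $(\ell+1)$-reduct $\Ee'$ of that environment, a finite $(\ell+1)$-consistent level-$\ell$ model of $\Kk$ modulo $\Ee'$ — i.e.\ an $(\ell,\ell+1)$-model — that agrees with the target interpretation away from $\CN_{\ell+1}^\Phi$. So each "does type $\tau$ admit a bag" query becomes: iterate over all $(\ell+1)$-reducts $\Ee'$ of the candidate environment, call the oracle for the $(\ell,\ell+1)$-model problem on $(\Kk,\Ee',\Phi)$, and (if successful) read off the realizable root-types of the returned model. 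The fixed-point computation then proceeds bottom-up: start with $S_0=\emptyset$, and repeatedly add to $S$ any type $\tau$ for which the oracle confirms a bag all of whose outgoing edges lead to types already in $S$ (encoded by strengthening the environment to permit, for each outgoing demand $(r,B)$, only child-types in the current $S$ that also carry $B$). Iterate until $S$ stabilizes; this takes at most $2^{|\CN(\Kk)|+2^{\poly(\|\Phi\|)}}$ rounds since $S$ grows monotonically inside the set of unary types. Finally, check whether the trivial ABox of $\Kk$ can be completed to a root-type in $S$ consistent with $\Ee$; this holds iff the desired tree-like model exists, which by Lemma~\ref{lem:strongly-consistent} is iff a finite $(\ell,\ell)$-model exists.

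For the complexity bound, I would account for sizes as follows. The number of unary types is $2^{|\CN(\Kk)|+2^{\poly(\|\Phi\|)}}$, so $|S|$ and the number of fixed-point iterations are bounded by this quantity, and each iteration considers at most that many candidate types. For each candidate type one enumerates its possible environments; the environment size is bounded by $2^{\|\Kk\|+2^{\poly(\|\Phi\|)}}$ and the number of $(\ell+1)$-reducts $\Ee'$ to try is controlled by the number of choices of $\varphi, V, \varphi', \varphi_1,\dots,\varphi_k, \psi', \kappa$ — all polynomially many fragments up to fresh-variable renaming, i.e.\ $2^{\poly(\|\Phi\|)}$ many, each pruning $\Ee$ in one of $2^{\poly(\|\Phi\|)}$ ways. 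Assembling the bookkeeping — building each candidate environment, testing $\ell$-consistency of edge-bags (which by Lemma~\ref{lem:consistentcomp} and Lemma~\ref{lem:consistentcomp} reduces to inspecting 2-element interpretations, doable in time $2^{\poly(\|\Phi\|)}$), and collecting realized root-types from a returned model — all fits within $2^{O(\|\Kk\|)+2^{\poly(\|\Phi\|)}}$ per oracle call, and the number of oracle calls is also $2^{O(\|\Kk\|)+2^{\poly(\|\Phi\|)}}$, so the total running time (excluding the work done inside the oracle) is $2^{O(\|\Kk\|)+2^{\poly(\|\Phi\|)}}$ as claimed. No new concept names are introduced beyond those already present, so the environment size invariant is preserved across the call.

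The main obstacle I anticipate is the correctness of the bottom-up fixed point, specifically the argument that a \emph{finite} tree-like model exists iff the fixed point reaches a type compatible with the ABox. The subtlety is that a bag may legitimately demand outgoing edges to children of types in $S$, but one must verify that one can actually close off the tree finitely — reusing previously constructed bags for repeated types — while maintaining $\ell$-consistency globally. Here Lemma~\ref{lem:consistentcomp} is essential: since $\ell$-consistency is local to bags and edge-bags in tree-like interpretations, gluing copies of already-built bags at matching interface types cannot create new violations, so the naive "plug in a witness bag for each demanded child-type" construction yields a finite model of bounded depth (at most $|S|$, since along any root-to-leaf path types need not repeat). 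The converse — that any finite tree-like model's root-types all lie in the fixed point — follows by induction on the (finite) height of the tree, peeling off leaves first. I would also take care that the environment passed to the oracle for a bag faithfully encodes both the inherited external context from $\Ee$ (the parent edge and what lies above) and the freshly imposed restriction that outgoing demands only reach types already certified to be in $S$; getting this encoding exactly right, without inflating the environment beyond $2^{\|\Kk\|+2^{\poly(\|\Phi\|)}}$, is the delicate bookkeeping step.
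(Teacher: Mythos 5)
Your proposal follows essentially the same route as the paper's proof: reduce via Lemma~\ref{lem:strongly-consistent} to tree-like models with strongly $\ell$-consistent bags and $\ell$-consistent edge-bags, then run a monotone least fixed-point computation over unary types, testing each candidate root type by modifying the environment to account for already-certified child types and invoking Lemma~\ref{lem:envreduct} (hence the $(\ell,\ell+1)$-oracle) for the single-bag existence test, with the same complexity accounting. The only slight imprecision is the remark about ``reading off realizable root-types of the returned model'' --- the oracle is a decision procedure, but since each candidate type $\tau$ is tested separately with the trivial ABox $\Aa_\tau$ fixing the root's type, the yes/no answer suffices and the argument goes through as in the paper.
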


At the bottom of the recursion we need to check if there exists a $(n+1, n+1)$-model for $\Kk$ modulo $\Ee$. Now, a $\widehat\Bb$-decorated interpretation is level-$(n+1)$ iff it is \emph{discrete}; that is, it has no edges at all. This allows solving the problem by a direct inspection. Because the ABox is trivial and $\ell$-consistency is preserved under restrictions of the domain, it is enough to go through all singleton interpretations. 

\begin{restatable}{lemma}{lemaldiscrete}
\label{lem:algo-discrete}
The $(n+1,n+1)$-model problem for an $\ALC$ KB $\Kk$, a UCRPQ $\Phi$, and an environment $\Ee$ can be solved in time $2^{O(\|\Kk\|)+ 2^{\poly(\|\Phi\|)}}$.
\end{restatable}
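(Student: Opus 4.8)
The plan is to exploit the two observations made just above the statement: a level-$(n+1)$ interpretation is \emph{discrete}, i.e.\ has no edges at all, and both ``being a model of $\Kk$ modulo $\Ee$'' and ``being $(n+1)$-consistent'' are inherited by induced subinterpretations. Since the ABox is trivial, every model of $\Kk$ contains the unique individual $a$; hence, if some finite $(n+1,n+1)$-model of $\Kk$ modulo $\Ee$ exists, its restriction to $\{a\}$ is again a finite $(n+1,n+1)$-model. It therefore suffices to decide whether there is a \emph{singleton} $(n+1,n+1)$-model, i.e.\ whether there is a single unary type $\tau$ that can legally label $a$.

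First I would spell out, for a discrete singleton interpretation $\Ii_\tau$ with $\Delta^{\Ii_\tau}=\{a\}$ and $\tp^{\Ii_\tau}(a)=\tau$, what it means to be a $(n+1,n+1)$-model; since there are no edges, everything becomes a condition on $\tau$ alone. The $\widehat\Bb$-decoration axioms collapse to a local requirement on $\tau$ (equivalently, $\tau$ is consistent with the polynomial-size TBox $\widehat\Tt_\Bb$ of Lemma~\ref{lem:axioma}), because $C_\pp \sqsubseteq \forall r. C_{\hat\delta(\pp,r)}$ and the ``no incoming edges over distinct roles'' condition are vacuous. The condition $\Ii_\tau \models^\Ee \Kk$ amounts to: $\tau \in \Theta$; $\tau$ satisfies every CI of the form $\bigsqcap_i A_i \sqsubseteq \bigsqcup_j B_j$; $(r,B)\in\varepsilon(\tau)$ for every existential restriction $\exists r.B$ triggered by $\tau$ (universal restrictions being vacuous); and $\tau$ contains every $A$ with $A(a)$ in the ABox. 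Finally, $\Ii_\tau$ is $(n+1)$-consistent iff for every fragment $\varphi$ of $\Phi$, every partition of $\varphi$ into a level-$(n+1)$ CRPQ $\varphi'$ and fragments $\varphi_1,\dots,\varphi_k$, every $\emptyset\neq V$ and every admissible $\kappa,\kappa_1,\dots,\kappa_k$, it fails to be the case that $\varphi'$ has a match in $\Ii_\tau$ (which, as $\varphi'$ has no RPQ atoms and $\Ii_\tau$ has no edges, forces $\varphi'$ to contain no edge atoms and all its unary atoms to lie in $\tau$), that $A^{\kappa_i}_{\varphi_i,V_i}\in\tau$ for all $i$, and that $A^\kappa_{\varphi,V}\notin\tau$. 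For the converse direction I would check that restricting any finite $(n+1,n+1)$-model to $\{a\}$ produces exactly such an $\Ii_\tau$: the decoration and model-modulo-$\Ee$ conditions are local and thus inherited (no edge is removed, since the model is already discrete), and $(n+1)$-consistency is inherited because it only forbids the existence of certain witnessing matches and restriction can never create matches; note also that an $(\ell,\ell')$-model is required only to be one-sidedly $(n+1)$-consistent, not correct, so no stronger property of the $A^\kappa_{\varphi,V}$ labels needs to be maintained. This yields: a finite $(n+1,n+1)$-model exists iff some unary type $\tau$ satisfies all of the above.

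The algorithm then enumerates all candidate types $\tau$ and tests these conditions. There are at most $2^{|\CN(\Kk)|+2^{\poly(\|\Phi\|)}}$ unary types over the relevant signature. For fixed $\tau$, the decoration and model-modulo-$\Ee$ checks take time $2^{O(\|\Kk\|)}+2^{\poly(\|\Phi\|)}$ (scanning $\Tt$ and the ABox, plus a single lookup of $\varepsilon(\tau)$), while the $(n+1)$-consistency check takes time $2^{\poly(\|\Phi\|)}$: $\Phi$ has $2^{\poly(\|\Phi\|)}$ fragments, each fragment has $\poly(\|\Phi\|)$ variables so there are $2^{\poly(\|\Phi\|)}$ partitions, each relevant $\kappa$ ranges over $\{1,\dots,n+1\}^{\var(\varphi)}$ with $n+1 = 2^{\poly(\|\Phi\|)}$, and each resulting configuration is resolved by directly inspecting $\tau$. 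Multiplying, the overall running time is $2^{|\CN(\Kk)|+2^{\poly(\|\Phi\|)}}\cdot\big(2^{O(\|\Kk\|)}+2^{\poly(\|\Phi\|)}\big)=2^{O(\|\Kk\|)+2^{\poly(\|\Phi\|)}}$, as required. I do not expect any genuine obstacle here; the step that warrants the most care is the reduction to singletons — in particular verifying that $(n+1)$-consistency, being a statement about the \emph{absence} of bad matches, really is preserved under restriction to $\{a\}$, and that one-sided consistency (rather than full correctness) is all that must be checked — after which the complexity bookkeeping is routine.
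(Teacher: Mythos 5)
Your proposal is correct and follows essentially the same route as the paper: exploit discreteness to reduce everything to a per-unary-type local check (the paper phrases this as filtering the environment and testing for a discrete model, you phrase it as enumerating singleton interpretations, but these are the same computation). The only cosmetic difference is that the paper first simplifies the $(n+1)$-consistency condition structurally (connectivity forces $k=1$, $\varphi_1=\varphi$, and the condition collapses to $A^\kappa_{\varphi,V_1}\sqsubseteq A^\kappa_{\varphi,V}$ within each type), whereas you test all partitions by brute force, which still fits comfortably in the stated time bound.
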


\section{Incrementing the Level of Models}
\label{ssec:interpretations}

In this section we solve  the $(\ell,\ell')$-model problem by reduction to multiple instances of the $(\ell',\ell')$-model problem for $\ell<\ell'$; that is, we eliminate level-$\ell$ edges from the interpretations. Like in Section~\ref{ssec:queries}, we rely on tree-like models of a  special form; this time, however, they may be infinite and an additional step is needed to turn them into finite ones. 

A $\widehat\Bb$-decorated interpretation is 
\emph{$\ell'$-flat} if it is a tree-like interpretation where all edges between bags have level strictly below $\ell'$, whereas all edges inside bags have level at least $\ell'$. %(When $\ell' > n$, this means that there are no edges inside bags.) 
%A substitute for finiteness is \emph{regularity}: we call a tree-like interpretation regular if the tree of bags has only finitely many subtrees up to isomorphism. 

\begin{restatable}{lemma}{lemunravellingflat}
\label{lem:unravelling-flat}
If there exists a finite $(\ell,\ell')$-model of $\Kk$ modulo $\Ee$ then there exists
% a regular 
an $\ell'$-flat $(\ell,\ell')$-model of $\Kk$ modulo $\Ee$ with bounded degree and bag size.
\end{restatable}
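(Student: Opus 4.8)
The plan is to start from a finite $(\ell,\ell')$-model $\Ii$ of $\Kk$ modulo $\Ee$ and unravel it into a tree-like interpretation by cutting exactly at the level-below-$\ell'$ edges, so that the strongly connected components with respect to the edges of level $\geq \ell'$ become (candidates for) bags. Concretely, I would first pass to $\Ii\times\widehat\Bb$ using Lemma~\ref{lem:decorate_model} to keep working with a $\widehat\Bb$-decorated model (this does not change finiteness, satisfaction of $\Phi$, or modelhood modulo $\Ee$). I would then define the \emph{$\ell'$-blocks} of $\Ii$ as the maximal subinterpretations induced by keeping only edges of level at least $\ell'$; these blocks are connected by edges of level strictly below $\ell'$, and by contracting each block to a point we obtain a finite directed graph. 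The standard trick is to unravel this contracted graph into a tree: a node of the tree is a finite sequence $(B_0, d_1, B_1, d_2, \dots, B_k)$ recording a path of blocks together with the specific low-level edge $d_i$ (an edge-bag, sharing its source with $B_{i-1}$ and its target with $B_i$) used to descend from one block to the next. Each tree node carries a fresh isomorphic copy of the corresponding block $B_i$, the edge-bags are the copies of the low-level edges, and the resulting union is $\ell'$-flat by construction: inter-bag edges are precisely the low-level edges, and all intra-bag edges have level $\geq\ell'$.

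Next I would verify that this $\ell'$-flat interpretation $\Ii'$ is still an $(\ell,\ell')$-model of $\Kk$ modulo $\Ee$. It is level-$\ell$ because unravelling only copies existing edges, so no edge of level below $\ell$ is created. It is a model of $\Kk$ modulo $\Ee$: the TBox $\Tt$ is in normal form, concept inclusions of the form $\bigsqcap_i A_i \sqsubseteq \bigsqcup_j B_j$ are preserved because each element keeps its unary type; $A\sqsubseteq\forall r.B$ is preserved because every $r$-successor of a copy is a copy of an $r$-successor in $\Ii$ (each element lies in at most one block-copy, and the unravelling is a homomorphism onto $\Ii$); and $A\sqsubseteq\exists r.B$ under the relaxed semantics holds because either the witnessing $r$-edge is internal to a block (hence copied), or it is a low-level edge leaving the block (also copied, because every block-copy in the tree is extended by \emph{all} its outgoing low-level edges), or the existential is satisfied via $\varepsilon(\mathsf{tp})$, which depends only on the unary type and is therefore inherited. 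For $\ell'$-consistency (equivalently, $\ell'$-consistency of each bag and edge-bag, by Lemma~\ref{lem:consistentcomp}): each bag of $\Ii'$ is isomorphic to an $\ell'$-block of $\Ii$, and each edge-bag to a low-level edge of $\Ii$, both of which are subinterpretations of $\Ii$; since $\Ii$ is $\ell'$-consistent and $\ell'$-consistency of a subinterpretation follows from that of the ambient interpretation (a match in the sub is a match in $\Ii$ and the $A^\kappa_{\varphi,V}$ extensions only shrink), we are done.

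The remaining, and genuinely delicate, point is to obtain \emph{bounded degree and bag size}. For bag size I would argue that the blocks themselves may already be large, so one must additionally select, inside each block, a small strongly $\ell'$-consistent level-$\ell'$ sub-model realizing the same `interface' — but in fact the statement only asks for bounded bag size, not minimal blocks, so the cleaner route is: replace each $\ell'$-block by a bounded-size substitute. This is exactly where the $(\ell',\ell')$-model machinery plugs in at the next stage of the recursion, so here I would keep the argument self-contained by a pumping/selection argument: the `type' of a block, for the purpose of how it interacts with the rest of the tree, is determined by the multiset of (unary type, outgoing low-level $(r,B)$-demand) interfaces at its boundary elements plus the $A^\kappa_{\varphi,V}$-relevant data, and there are at most $2^{\|\Kk\|+2^{\poly(\|\Phi\|)}}$ such types; a standard compactness/König argument (cut below the first repetition of a block type along every branch, then re-glue) lets us bound the branching and the depth needed, and hence — taking a minimal-size witness for each block type — bound the bag size uniformly. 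The main obstacle is making this selection of a small per-block substitute while simultaneously preserving strong $\ell'$-consistency of the bag, $\ell'$-consistency of the edge-bags, \emph{and} the relaxed existential demands at the boundary; the key is that all of these are finitary local conditions over a bounded alphabet of `interface types', so a finite-state pumping argument applies, but one has to set up the interface alphabet carefully so that gluing two pieces with matching interfaces cannot create a new match of a fragment that would violate consistency — which is precisely the content already abstracted into the partition/fragment formalism, so the bookkeeping, not the idea, is the hard part.
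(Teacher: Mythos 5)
Your unravelling is essentially the paper's construction: cut at the edges of level strictly below $\ell'$, take the pieces held together by level-$\geq\ell'$ edges as bags, unravel along the low-level edges into a (possibly infinite) tree of fresh copies, and transfer modelhood modulo $\Ee$ and $\ell'$-consistency back via the homomorphism onto $\Ii$ (consistency is stated as the absence of certain matches, so it is inherited by homomorphic preimages). The only cosmetic difference is that the paper takes each bag to be a copy of \emph{all of} $\Ii$ with the low-level edges deleted, rather than a single connected block; both work, and your detour through $\Ii\times\widehat\Bb$ is unnecessary since an $(\ell,\ell')$-model is $\widehat\Bb$-decorated by definition.

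The genuine problem is your last paragraph. You treat ``bounded degree and bag size'' as requiring a uniform (input-dependent) bound and propose an unfinished pumping/interface-type selection argument to shrink the bags. No such argument is needed, and the lemma does not claim a uniform bound: $\Ii$ is \emph{finite}, so every bag is a subinterpretation of $\Ii$ and has size at most $|\Delta^\Ii|$, the within-bag degree is at most the maximal degree in $\Ii$, and the branching at each element is bounded by the number of low-level edges of $\Ii$ (or, in the paper's version, by the number of existential restrictions in $\Tt$). That is all ``bounded'' means here; the bound is only used later (Lemma~\ref{lem:bounded-crpq}, Fact~\ref{fact:coloured-blocking}) as ``there exists some $M$,'' and the algorithmic, input-uniform bound is obtained separately in Lemma~\ref{lem:algo-flat}. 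Moreover, your proposed ``cut below the first repetition of a block type and re-glue'' step is not a routine compactness argument: naively cutting destroys existential witnesses, and re-gluing can create new query matches unless the redirection targets have isomorphic $m$-neighbourhoods --- this is exactly the content of the coloured-blocking construction in Lemma~\ref{lem:folding-back}, which is the \emph{converse} direction and should not be smuggled into this lemma. Delete the last paragraph, observe that finiteness of $\Ii$ already yields the bounds, and the proof is complete.
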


In contrast to Lemma~\ref{lem:unravelling-connected}, the above only shows that the reformulated condition is necessary. We show that it is sufficient, by turning an arbitrary
%regular 
$\ell'$-flat $(\ell,\ell')$-model of $\Kk$ modulo $\Ee$ with bounded degree and bag size into a finite $(\ell,\ell')$-model of $\Kk$ modulo $\Ee$. For this we use \emph{coloured blocking}.
For $d \in\Delta^\Ii$, the \emph{$m$-neighbourhood $N_m^{\Ii}(d)$ of $d$} is the interpretation obtained by restricting $\Ii$ to elements $e \in \Delta^\Ii$ within distance $m$ from $d$ in $\Ii$, enriched with a fresh concept interpreted as $\{d\}$.
A \emph{colouring of $\Ii$ with $k$ colours} is an extension $\Ii'$ of $\Ii$ to $k$ fresh concept names $B_1, \dots, B_k$ such that $B_1^{\Ii'}, \dots, B_k^{\Ii'}$ is a partition of $\Delta^{\Ii'} = \Delta^{\Ii}$. We say that $d \in B_i^{\Ii'}$ has colour $B_i$. We call $\Ii'$ \emph{$m$-proper} if for each $d \in \Delta^{\Ii'}$ all elements of $N_m^{\Ii'}(d)$ have different colours. 

\begin{fact}[\protect\citeauthor{GogaczIM18} \protect\citeyear{GogaczIM18}]
  \label{fact:coloured-blocking}
If $\Ii$ has bounded degree, then for all $m\geq 0$ there exists an $m$-proper colouring $\Ii'$ of $\Ii$ with finitely many colours.  
Consider interpretation $\Jj$ obtained from $\Ii'$ by redirecting some edges such that the old target and the new target have isomorphic $m$-neighbourhoods in $\Ii'$. Then, for each conjunctive query $\varphi$ with at most $\sqrt{m}$ binary atoms, if $\Ii\not\models \varphi$, then
$\Jj\not\models \varphi$.  
\end{fact}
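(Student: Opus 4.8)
### Proof proposal for Fact~\ref{fact:coloured-blocking}

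The statement is quoted from \citeauthor{GogaczIM18}~\citeyear{GogaczIM18}, so the plan is to recall the argument rather than invent a new one. The plan is to split the fact into its two halves: (i) existence of an $m$-proper colouring with finitely many colours when $\Ii$ has bounded degree, and (ii) preservation of non-satisfaction of small CQs under $m$-neighbourhood-respecting edge redirections. For (i), I would first observe that ``$N_m^{\Ii'}(d)$ has pairwise distinct colours'' is a constraint of bounded radius: it only concerns elements within distance $m$ of $d$, and in a graph of degree bounded by, say, $b$, there are at most $N := 1 + b + b^2 + \dots + b^m$ such elements. So the required colouring is a proper colouring of the graph $G_m$ on $\Delta^\Ii$ in which $d,e$ are adjacent iff $d$ lies in $N_m^{\Ii}(e)$ or vice versa; this graph again has bounded degree (at most $N^2$ or so), hence finite chromatic number by the greedy bound (or de Bruijn--Erd\H{o}s for the infinite case, reducing to the finite case by compactness). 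Taking $k$ to be that chromatic number and $B_1,\dots,B_k$ the colour classes yields the desired $\Ii'$.

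For (ii), the key step is a \emph{local isomorphism} argument. Let $\Jj$ be obtained from $\Ii'$ by redirecting edges so that each redirected edge's old and new targets have isomorphic $m$-neighbourhoods in $\Ii'$. Suppose toward a contradiction that $\Jj \models \varphi$ for a CQ $\varphi$ with at most $\sqrt m$ binary atoms, via a match $\pi$. The image $\pi(\var(\varphi))$, together with the atoms of $\varphi$, forms a connected (or bounded-component) subgraph of $\Jj$ with at most $\sqrt m$ edges, hence diameter at most $\sqrt m \le m$. The plan is to ``lift'' this match back into $\Ii'$ one redirected edge at a time: pick any redirected edge used by the match, say with new target $t$ and old target $t'$; by hypothesis $N_m^{\Ii'}(t) \cong N_m^{\Ii'}(t')$, and since the whole matched subgraph sits within distance $m$ of $t$, this isomorphism transports the part of the match reachable through $t$ onto a corresponding part anchored at $t'$, turning that redirected edge into a genuine $\Ii'$-edge while keeping all other atoms satisfied (here the $m$-properness of the colouring is what guarantees the isomorphism is rigid enough that the pieces glue consistently --- distinct elements of the matched subgraph keep distinct colours, hence distinct images). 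Iterating over all redirected edges in the match (there are at most $\sqrt m$ of them, so the cumulative ``reach'' never exceeds the radius-$m$ budget) produces a match for $\varphi$ in $\Ii'$, and then in $\Ii$ by restriction, contradicting $\Ii \not\models \varphi$. Since $\varphi$ is a CQ, the only atoms to worry about are relational and unary atoms on the elements, and these are preserved by the neighbourhood isomorphisms.

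I expect the main obstacle to be the bookkeeping in step (ii): making precise that redirecting \emph{several} edges can be done simultaneously (or sequentially without the reach of the match ballooning past radius $m$), and that the $m$-properness of the colouring --- rather than just the existence of neighbourhood isomorphisms --- is exactly what prevents two distinct query variables from being forced onto the same element when the local isomorphisms are applied. The choice of $\sqrt m$ as the bound on binary atoms is precisely the slack that makes this work: a connected pattern with $\sqrt m$ edges has radius well under $m$, so after transporting along one isomorphism the pattern still lies inside the next $m$-neighbourhood. Everything else --- the greedy colouring bound, the compactness reduction for infinite $\Ii$, the restriction $\Jj \to \Ii$ --- is routine, so I would state it briefly and refer to \cite{GogaczIM18} for the full details.
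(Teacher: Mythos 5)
The paper does not prove this statement at all: it is imported verbatim as a Fact from \citeauthor{GogaczIM18}~(\citeyear{GogaczIM18}), and no argument for it appears in the appendix. So there is no ``paper proof'' to compare against; what you have written is a reconstruction of the coloured-blocking argument from the cited source, and as a reconstruction it captures the right ideas. Part (i) via a greedy/De Bruijn--Erd\H{o}s colouring of the bounded-degree auxiliary graph whose edges join elements at distance at most $m$ is exactly the standard argument. Part (ii) via transporting the match along the neighbourhood isomorphisms, one redirected edge at a time, with the quadratic slack $(\sqrt m)^2 \le m$ absorbing the cumulative drift, is also the right shape.

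One imprecision worth flagging: you justify the use of $m$-properness by saying that ``distinct elements of the matched subgraph keep distinct colours, hence distinct images,'' i.e.\ injectivity of the lift. But a CQ match is only a homomorphism, so injectivity is not what is at stake. The essential use of the colouring is the converse direction: when the query contains a cycle (or, more generally, two walks from the root that are forced to end at the same variable), the lifted walks in $\Ii'$ must end at the \emph{same} element, and this is guaranteed because the two lifted endpoints carry the same colour (local isomorphisms preserve colours and the $\Jj$-endpoints coincide) and lie within distance $m$ of one another, so $m$-properness forces them to be equal. Without this, lifting a cyclic match could produce a match of the query's unfolding rather than of the query itself, and the conclusion would fail. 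Your sketch would go through once this point is stated correctly; everything else is routine, and deferring the full bookkeeping to \cite{GogaczIM18} is what the paper itself does.
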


Let $\Ii$ be an $\ell'$-flat $(\ell,\ell')$-model of $\Kk$ modulo $\Ee$ of bounded degree with bags of size at most $M$. 
In order to make Fact~\ref{fact:coloured-blocking} applicable, we need to express the $\ell'$-consistency condition over $\Ii$ by means of a finite set of conjunctive queries, rather than CRPQs. Towards this end, we show that over $\Ii$ each level-$\ell'$ CRPQ 
%(understood as a non-Boolean query) 
is equivalent to a UCQ. 
%(Note that for infinite interpretations this is not always the case.) 
We rely on the following observation. 

\begin{restatable}{lemma}{lemboundedpaths}
\label{lem:bounded-paths}
In a match of a $\widehat\Bb$-decorated CRPQ in a $\widehat\Bb$-decorated interpretation, each path witnessing an RPQ atom of end level at least $\ell'$ uses at most  $n - \ell'$ edges of level strictly below $\ell'$.
\end{restatable}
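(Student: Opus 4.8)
\textbf{Proof plan for Lemma~\ref{lem:bounded-paths}.}

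The plan is to analyze a path $e_0, e_1, \dots, e_N$ witnessing an RPQ atom $\Bb_{q,q'}(x,x')$ of end level $\ell' \leq \ell_{\mathrm{end}}$, where the word $w = r_1 \dots r_N$ read along the path admits a run of $\Bb$ from $q$ to $q'$. First I would lift the discussion to the expansion: since $\Ii$ is $\widehat\Bb$-decorated with no element having incoming edges over two distinct roles, the states decorating $e_0, e_1, \dots, e_N$ form a run $\pp_0, \pp_1, \dots, \pp_N$ of $\widehat\Bb$ on $w$, and by Lemma~\ref{lem:reach}'s underlying machinery (more precisely by the level-of-edge definition) the level of edge $(e_{i-1}, e_i)$ is exactly the level of the transition $\pp_{i-1} \stackrel{r_i}{\longrightarrow} \pp_i$. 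Now I apply Lemma~\ref{lem:levels} to the existing run of $\Bb$ from $q = q_0$ to $q' = q_k$: it yields positions $j_1 < j_2 < \dots < j_k = N$, strictly decreasing levels $n \geq \ell_1 > \ell_2 > \dots > \ell_k \geq 1$, and in particular $\ell_k$ equals the level of $q_k = q'$ in $\pp_N$. Since the end level of the atom in the $\widehat\Bb$-decorated CRPQ is matched by $e_N \in C_{q',\ell'}^\Ii$, the level of $q'$ in $\pp_N$ is $\ell'$, so $\ell_k = \ell'$.

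The key quantitative step is then to count, for each segment of the thread at a fixed level $\ell_i$, how many edges in that segment can have level strictly below $\ell'$. The third bullet of Lemma~\ref{lem:levels} says every transition taken up to $\pp_{j_i}$ has level at least $\ell_i$; combined with the fact that the levels of the decomposition satisfy $\ell_1 > \dots > \ell_k = \ell'$, I would argue that all transitions from position $1$ up to $j_{k}= N$ that lie in segments indexed $1, \dots, k-1$ have level $\geq \ell_{i} > \ell' $ on the portion preceding $j_i$, so the only places an edge can have level $< \ell'$ are the $k-1$ "drop" transitions $\pp_{j_i} \stackrel{r_{j_i+1}}{\longrightarrow} \pp_{j_i+1}$ for $i = 1, \dots, k-1$ (the transition at position $j_i + 1$ that takes $\delta(q_i, w[j_i+1])$ down to level $\ell_{i+1}$). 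Actually one must be slightly more careful: within the final segment (after $j_{k-1}$) all transitions have level $\geq \ell_k = \ell'$, and within an intermediate segment between $j_{i-1}+1$ and $j_i$ all transitions have level $\geq \ell_i > \ell'$; hence the at-most-$(k-1)$ drop transitions are the sole candidates for level-$<\ell'$ edges. Finally, since $\ell' = \ell_k < \ell_{k-1} < \dots < \ell_1 \leq n$, these are $k$ distinct integers in $\{\ell', \ell'+1, \dots, n\}$, so $k \leq n - \ell' + 1$, giving $k - 1 \leq n - \ell'$ edges of level strictly below $\ell'$, as claimed.

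The main obstacle I anticipate is pinning down precisely \emph{which} transitions can have low level: Lemma~\ref{lem:levels} controls transitions "up to $\pp_{j_i}$", and one must check that the boundary transitions (the ones entering the next segment) are correctly accounted for and not double-counted, and that no transition strictly inside a segment can dip below $\ell'$. This requires carefully reading off from the statement of Lemma~\ref{lem:levels} that, on the interval $(j_{i-1}, j_i]$, the relevant level bound is $\ell_i$ (not $\ell_{i-1}$), which follows because the third bullet is stated for each $i$ separately and $\ell_i > \ell'$ for $i < k$ while $\ell_k = \ell'$ covers the last interval. Once this bookkeeping is settled, the bound $n - \ell'$ is immediate from the strict monotonicity of the $\ell_i$. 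I would also note the degenerate case $k = 1$ (no drops, so zero low-level edges, consistent with the bound) to make the argument complete.
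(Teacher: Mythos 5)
Your argument is correct and is essentially the paper's: both proofs track the thread of the witnessing run of $\Bb$ inside the run of $\widehat\Bb$, use that it is non-increasing, starts at level at most $n$ and ends at level at least $\ell'$, and count that it can strictly drop at most $n-\ell'$ times. The paper reaches the key step more directly — observing from the definition of transition level that every edge of level below $\ell'$ forces all threads at level $\geq\ell'$ down by at least one — whereas you route it through the segment decomposition of Lemma~\ref{lem:levels}; your care in treating the drop transitions separately (rather than applying the third bullet of that lemma literally, which would wrongly exclude low-level edges altogether) is exactly the right bookkeeping.
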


%\begin{proof}
%Suppose a $\widehat\Bb$-decorated CRPQ $\varphi$ is matched in a $\widehat\Bb$-decorated interpretation $\Jj$. Each path in $\Jj$ corresponds to the run $\rho$ of $\widehat B$ obtained by reading the states decorating the elements on the path. Such a path witnesses an RPQ atom $\Bb_{q,q'}(x,x')$ iff the thread of $\rho$ beginning in $q$ ends in $q'$.  If the atom has end level at least $\ell'$, then the level of $q'$ in the last state of $\rho$ must be at least $\ell'$.  Observe however that each edge of level strictly below $\ell'$ brings all threads from levels $\ell'$ and higher at least one level down. Consequently, the path may use at most $n - \ell'$ edges of level strictly below $\ell'$.
%\end{proof}

We say that a CRPQ $\varphi$ is \emph{bounded by $K$} over an interpretation $\Jj$ if for each match of $\varphi$ in $\Jj$ each RPQ atom of $\varphi$ can be witnessed by a path of length at most $K$.

\begin{restatable}{lemma}{lemboundedcrpq}
\label{lem:bounded-crpq}
Let $\Jj$ be $\widehat\Bb$-decorated interpretation made up of disjoint level-$\ell'$ interpretations of size at most $M$ connected by edges of level strictly below $\ell'$. Assuming $\ell' \leq n$, each level-$\ell'$ CRPQ is bounded by $M (n-\ell'+1)^2$ over $\Jj$.
\end{restatable}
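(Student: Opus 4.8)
The plan is to combine Lemma~\ref{lem:bounded-paths} with the structural assumptions on $\Jj$. Fix a level-$\ell'$ CRPQ $\varphi$ and a match $\pi$ of $\varphi$ in $\Jj$. By definition of a level-$\ell'$ CRPQ, every RPQ atom $\Bb_{q,q'}(x,x')$ in $\varphi$ has end level at least $\ell'$ (and begin level at least its end level, hence also at least $\ell'$). Consider a path $e_0, e_1, \dots, e_p$ in $\Jj$ witnessing such an atom under $\pi$, i.e.\ $e_0 = \pi(x)$, $e_p = \pi(x')$, consecutive elements are connected by role edges, and there is a run of $\Bb$ along the corresponding word from $q$ to $q'$. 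I want to bound $p$ by $M(n-\ell'+1)^2$.

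First I would control the number of \emph{low edges} on the path, i.e.\ edges of level strictly below $\ell'$. By Lemma~\ref{lem:bounded-paths}, the path uses at most $n - \ell'$ such edges. In the decomposition of $\Jj$ as disjoint level-$\ell'$ interpretations joined by edges of level strictly below $\ell'$, every inter-bag edge is a low edge, so these $\le n-\ell'$ low edges are the only places where the path can leave one level-$\ell'$ component and enter another; in addition, the path could also traverse low edges that happen to lie inside a component, but those only waste budget from the same bound. Either way, removing these at most $n-\ell'$ low edges splits the path into at most $n-\ell'+1$ contiguous segments, each of which stays within a single level-$\ell'$ component (a bag), so each segment has at most $M$ distinct elements. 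The only remaining point is that a segment visits each element at most once: inside a fixed level-$\ell'$ component the path uses only edges of level at least $\ell'$, and by Lemma~\ref{lem:reach} reachability within such a component along these edges is captured exactly by $\Bb_{q,q'}^\Ii$ between appropriately $C_{q,\ell'}$-decorated elements, so any repeated element can be shortcut without destroying the fact that the overall path still witnesses the RPQ atom. Hence, after shortcutting, each of the $\le n-\ell'+1$ segments uses at most $M-1$ edges (on $\le M$ vertices), and there are $\le n-\ell'$ low edges in between, giving total length at most $(n-\ell'+1)(M-1) + (n-\ell') \le M(n-\ell'+1)^2$ (crudely; $\ell'\le n$ is used so that $n-\ell'+1 \ge 1$ and the bound is meaningful).

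The main obstacle I expect is the shortcutting argument for repeated elements within a single bag: I need to argue that if the witnessing path revisits an element inside a level-$\ell'$ component, the loop can be excised while preserving \emph{both} the existence of a $\Bb$-run from $q$ to $q'$ along the (shortened) word \emph{and} the decoration constraints that make Lemma~\ref{lem:reach} applicable. The clean way is not to manipulate the word directly but to invoke Lemma~\ref{lem:reach}: split the original witnessing path at its low edges into segments $\sigma_0, r_1, \sigma_1, r_2, \dots, r_t, \sigma_t$ with $t \le n-\ell'$, where each $\sigma_j$ lives in one level-$\ell'$ component; by tracing the $\widehat\Bb$-decoration along the path (as in the discussion after Lemma~\ref{lem:levels}) the endpoints of each $\sigma_j$ carry matching $C_{p_j,\ell'}$ decorations, so Lemma~\ref{lem:reach} gives that the two endpoints of $\sigma_j$ are already $\Bb_{p_j, p_j'}$-related \emph{via a shortest path} inside that component, which has at most $M-1$ edges. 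Replacing each $\sigma_j$ by such a shortest path yields a witnessing path of length at most $(n-\ell'+1)(M-1) + (n-\ell') \le M(n-\ell'+1)^2$, as required. Doing this uniformly for every RPQ atom of $\varphi$ and every match shows $\varphi$ is bounded by $M(n-\ell'+1)^2$ over $\Jj$.
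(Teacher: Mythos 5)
Your outer decomposition agrees with the paper's: by Lemma~\ref{lem:bounded-paths} a witnessing path uses at most $n-\ell'$ edges of level below $\ell'$, so it splits into at most $n-\ell'+1$ segments, each confined to a single bag of size at most $M$. The gap is in the shortcutting step that is supposed to make each segment simple. You assert that the endpoints of each segment $\sigma_j$ ``carry matching $C_{p_j,\ell'}$ decorations,'' i.e.\ that the witnessing thread sits at level exactly $\ell'$ at segment boundaries. Nothing guarantees this: the thread is non-increasing and bounded below by the atom's end level, which is only \emph{at least} $\ell'$; for instance an atom with begin and end level $n$ keeps its thread at level $n$ throughout. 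Lemma~\ref{lem:reach} is path-independent precisely because a thread starting at level $\ell'$ cannot drop inside a level-$\ell'$ bag and is therefore pinned; a thread entering a segment at level $\ell_a>\ell'$ can drop inside the bag, and the state it reaches at the segment's endpoint is $\delta(q_a,w)$ for the specific word $w$ of that segment. Replacing the segment by a shortest path changes $w$, so the shortened path need not admit a run from $q$ to $q'$ at all; the shortcut is unsound exactly in the cases your assumption excludes.

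The repair is the paper's weaker loop-excision: you may skip the portion between two visits of the path only when it revisits the \emph{same element with the same thread level} --- then the $\widehat\Bb$-state (determined by the element's decoration) and the $\Bb$-state (determined by that decoration together with the level) both coincide, so the remainder of the run continues unchanged. Since the thread takes at most $n-\ell'+1$ distinct levels, each element is visited at most $n-\ell'+1$ times rather than once, and combining this with the at most $M(n-\ell'+1)$ distinct elements reachable across the at most $n-\ell'+1$ bags yields the bound $M(n-\ell'+1)^2$. This is why the statement has a quadratic factor; your argument, if it worked, would have proved the stronger linear bound $O\big(M(n-\ell'+1)\big)$, which should itself have been a warning sign.
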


%\begin{proof}
%Consider a level-$\ell'$ CRPQ $\varphi$ matched in $\Jj$. Consider a witnessing path $e_0, e_1, \dots, e_k$ in $\Jj$. Let $\pp_0, \pp_1, \dots, \pp_k$ be the run of $\widehat \Bb$ corresponding to the witnessing path and let $\ell_0, \ell_1, \dots, \ell_k$ be the thread in $\rho$ that corresponds to the witnessing run $q_0,q_1,\dots, q_k$ of $\Bb$. We have $\ell_0 \geq \ell_1 \geq \dots \geq \ell_k \geq \ell'$. Suppose that for some $i < j$ we have $e_i = e_j$ and $\ell_i = \ell=j$. It follows immediately that $\pp_i = \pp_j$ and $q_i = q_j$. Thus, we can choose a shorter witnessing path by skipping $e_{i+1}, e_{i+2}, \dots, e_{j}$. Consequently, it is enough to look at witnessing paths that visit each element at most $(n-\ell'+1)$ times. From the assumption on the structure of $\Jj$ and from Lemma~\ref{lem:bounded-paths} it follows that every simple witnessing path has length at most $M(n-\ell'+1)^2$. 
%\end{proof}

For a $\widehat\Bb$-decorated CRPQ $\varphi$, let $\varphi^{(K)}$ be the UCQ obtained by taking the union of all CQs that can be obtained from $\varphi$ by eliminating each RPQ atom $\Bb_{q,q'}(x, x')$ in one of the following ways: either remove the atom and equate variables $x$ and $x'$, or replace the atom with a CQ of the form 
%\[\bigwedge_{i=0}^{N-1} r_i(y_i, y_{i+1}) \]
\[ r_1(x, y_1) \land r_2(y_1,y_2)\land \dots \land r_N(y_{N-1},x') \]
where $N \leq K$, $y_1, \dots, y_{N-1}$ are fresh variables, 
%$y_0 = x$,  $y_{N} = x'$, 
and  there is a run of $\Bb$ on $r_1\dots r_N$ that  begins in $q$ and ends in $q'$.

%$q \xrightarrow{r_1\dots r_N} q'$

\begin{fact} \label{fact:bounded}
If a $\widehat\Bb$-decorated CRPQ $\varphi$ is bounded by $K$ on an interpretation $\Jj$, then $\Jj \models \varphi$ iff $\Jj \models \varphi^{(K)}$.
\end{fact}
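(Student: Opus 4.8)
The plan is to prove Fact~\ref{fact:bounded} by a routine double inclusion argument that exploits the definition of $\varphi^{(K)}$ together with the boundedness hypothesis. The statement concerns a fixed $\Bb$-decorated CRPQ $\varphi$ and a fixed interpretation $\Jj$ over which $\varphi$ is bounded by $K$; since $\varphi$ and $\varphi^{(K)}$ have the same unary atoms and the same variables from $\varphi$ (the UCQ only adds fresh path variables inside each disjunct), the heart of the matter is translating between matches of the RPQ atoms $\Bb_{q,q'}(x,x')$ and matches of the corresponding $\Bb$-recognised paths of bounded length.

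First I would prove the direction $\Jj \models \varphi^{(K)} \Rightarrow \Jj \models \varphi$. Suppose some disjunct $\chi$ of $\varphi^{(K)}$ has a match $\pi$ in $\Jj$. By construction $\chi$ is obtained from $\varphi$ by replacing each RPQ atom $\Bb_{q,q'}(x,x')$ either by the equality $x = x'$ (which corresponds to the empty $\Bb$-run, legitimate only when there is a run of $\Bb$ from $q$ to $q'$ on the empty word, i.e.\ $q = q'$; this is exactly the ``remove the atom and equate'' case) or by a path query $r_1(x,y_1) \wedge \dots \wedge r_N(y_{N-1},x')$ of length $N \le K$ such that $\Bb$ has a run on $r_1 \dots r_N$ from $q$ to $q'$. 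In the first case $\pi(x) = \pi(x')$ witnesses $(\pi(x),\pi(x')) \in \Bb_{q,q'}^{\Jj}$ via the zero-length path. In the second case, the images $\pi(x), \pi(y_1), \dots, \pi(y_{N-1}), \pi(x')$ form a path in $\Jj$ over $r_1 \dots r_N$, and since there is a run of $\Bb$ on this word from $q$ to $q'$, we get $(\pi(x),\pi(x')) \in \Bb_{q,q'}^{\Jj}$ directly from the semantics of RPQ atoms given in the Preliminaries. Restricting $\pi$ to $\var(\varphi)$ and keeping the same images for the original variables, all unary atoms of $\varphi$ are satisfied because $\chi$ retained them, so $\pi{\restriction}_{\var(\varphi)}$ is a match for $\varphi$ in $\Jj$, hence $\Jj \models \varphi$.

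For the converse, suppose $\Jj \models \varphi$ via a match $\pi$. For each RPQ atom $\Bb_{q,q'}(x,x')$ of $\varphi$, the pair $(\pi(x),\pi(x'))$ lies in $\Bb_{q,q'}^{\Jj}$, so there is \emph{some} witnessing path; by the assumption that $\varphi$ is bounded by $K$ over $\Jj$, this witnessing path can be chosen of length $N \le K$, say $\pi(x) = e_0, e_1, \dots, e_N = \pi(x')$ over role names $r_1, \dots, r_N$ with a run of $\Bb$ on $r_1 \dots r_N$ from $q$ to $q'$. If $N = 0$ then $q = q'$ and $\pi(x) = \pi(x')$, and I select the disjunct that removes this atom and equates $x$ with $x'$; otherwise I select the disjunct that replaces the atom by the length-$N$ path query for precisely these $r_1, \dots, r_N$. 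Doing this simultaneously for every RPQ atom of $\varphi$ singles out one disjunct $\chi$ of $\varphi^{(K)}$, and extending $\pi$ by sending the fresh path variables $y_j$ introduced for each atom to the intermediate elements $e_j$ of the corresponding chosen path yields a match for $\chi$ in $\Jj$; the unary atoms are again inherited verbatim. Hence $\Jj \models \varphi^{(K)}$.

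I do not expect any genuine obstacle here: the only subtlety is bookkeeping---making sure that (i) the ``equate and delete'' alternative in the definition of $\varphi^{(K)}$ is precisely what matches the zero-length $\Bb$-run, so that $N=0$ is handled and one never needs a path query with $N \ge 1$ when $q=q'$ is the only available run, and conversely that (ii) when $N \ge 1$ the run condition on $r_1\dots r_N$ required in the definition of $\varphi^{(K)}$ matches exactly the run condition in the semantics of $\Bb_{q,q'}^{\Jj}$. Both are immediate from the definitions, and boundedness by $K$ is used in exactly one place, to guarantee the witnessing path in the $\varphi \Rightarrow \varphi^{(K)}$ direction has length within the cutoff. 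No induction on query structure or interpretation is needed; the argument is atom-by-atom.
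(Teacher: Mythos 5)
Your argument is correct, and it is exactly the routine atom-by-atom translation that the paper leaves implicit: Fact~\ref{fact:bounded} is stated without proof, so there is no alternative approach to compare against. The one place where care is needed is the ``remove the atom and equate $x$ and $x'$'' alternative in the definition of $\varphi^{(K)}$, which as literally written is not restricted to $q=q'$; you correctly read it as corresponding to the empty run (so only available when $q=q'$), which is the reading under which the left-to-right direction of the equivalence goes through, and you use boundedness in precisely the one spot where it is needed.
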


% Use colored blocking to get a finite counter-model for the bounded query.

The final step before we can apply Fact~\ref{fact:coloured-blocking} is to express $\ell'$-consistency as query evaluation. 
Consider a partition of a fragment $\varphi$ of $\Phi$ into a CRPQ $\varphi'$ of level $\ell'$ and fragments $\varphi_1, \varphi_2, \dots, \varphi_k$ with $\var(\varphi_i)\cap \var(\varphi_j) = \emptyset$ for $i\neq j$, $V_i = \var(\varphi_i) \cap \var(\varphi')$, and $\emptyset \neq V\subseteq \var(\varphi) \cap \var(\varphi')$.
Let $\psi$ be the CRPQ obtained from $\varphi'$ as follows. Begin from a copy of $\varphi'$. 
For each $i \in \{1, \dots, k\}$, add to $\psi$ an atom $A_{\varphi_i, V_i}^{\kappa_i}(u)$ for some $\kappa_i$ satisfying $\kappa_i (x) \leq \ell$ for all $x \in \var(\varphi_i)$ and $\kappa_i(x)=\ell$ for all $x \in V_i$, 
and some variable $u$ in $V_i$ ($V_i$ is nonempty, because $\varphi$ is connected), and equate all variables in $V_i$. Similarly, add to $\psi$ the atom $\bar{A}_{\varphi', V}(u)$ 
for some $\kappa$ satisfying $\kappa (x) = \ell$ for all $x \in \var(\varphi) \cap \var(\varphi')$ and $\kappa(x) = \kappa_i(x)$ for all $x \in \var(\varphi) \cap \var(\varphi_i)$,
and some $u\in V$, 
and equate all variables in $V$. 
%%%%%%%%%%%%%%
Let $\Phi_{\ell'}$ be the union of all CRPQs $\psi$  obtained as above for different choices of $\varphi$, $\varphi'$, $\varphi_1, \varphi_2, \dots, \varphi_k$, $V$, and $\kappa_1, \kappa_2, \dots, \kappa_k$.
Note that $\Phi_{\ell'}$ is a union of  level-$\ell'$ CRPQs. If $\ell'>n$, $\Phi_{\ell'}$ is a UCQ. 

\begin{restatable}{lemma}{lemconsistencyaseval}
\label{lem:consistency-as-evaluation}
If $\Jj$ is a $\widehat \Bb$-decorated interpretation, then $\Jj$ is $\ell'$-consistent iff $\Jj \not\models \Phi_{\ell'}$.
\end{restatable}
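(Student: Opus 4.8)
The plan is to unwind the definitions on both sides and show they are contrapositives of each other. Fix a $\widehat\Bb$-decorated interpretation $\Jj$. Recall that, by construction, $\Phi_{\ell'}$ is the union over all relevant choices of data $(\varphi, \varphi', \varphi_1, \dots, \varphi_k, V, \kappa_1, \dots, \kappa_k)$ of a CRPQ $\psi$ which is a copy of $\varphi'$ augmented with unary atoms $A_{\varphi_i, V_i}^{\kappa_i}(u)$ (for $u \in V_i$, with all variables of $V_i$ identified), together with a negated/complement atom $\bar A_{\varphi',V}^{\kappa}(u)$ for $u \in V$ (with all variables of $V$ identified); here $\kappa$ is determined from the $\kappa_i$'s and the requirement $\kappa(x) = \ell'$ on $\var(\varphi)\cap\var(\varphi')$, exactly as in the definition of $\ell'$-consistency. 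So matches of $\psi$ in $\Jj$ correspond bijectively to matches $\pi$ of $\varphi'$ in $\Jj$ such that $\pi(V_i) = \{e_i\} \subseteq (A_{\varphi_i, V_i}^{\kappa_i})^\Jj$ for every $i$ and $\pi(V) = \{e\} \subseteq (\bar A_{\varphi', V}^{\kappa})^\Jj$, i.e. $\pi(V) = \{e\} \not\subseteq (A_{\varphi, V}^{\kappa})^\Jj$ --- using here that $\bar A$ is the complementary concept axiomatized so that membership in $\bar A_{\varphi,V}^\kappa$ is exactly non-membership in $A_{\varphi,V}^\kappa$, which the paper's normal-form assumption guarantees for all concept names including the fresh ones. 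Hence $\Jj \models \psi$ for some such $\psi$ iff there exists a partition of some fragment $\varphi$ as in the definition of $\ell'$-consistency together with a bad match $\pi$; that is, $\Jj \models \Phi_{\ell'}$ iff $\Jj$ is \emph{not} $\ell'$-consistent. Taking contrapositives gives the claim.

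Concretely, I would organize the proof in two directions. For the $(\Leftarrow)$ direction (if $\Jj \not\models \Phi_{\ell'}$ then $\Jj$ is $\ell'$-consistent): assume toward a contradiction that $\Jj$ fails $\ell'$-consistency, so there is a partition of a fragment $\varphi$ into a level-$\ell'$ CRPQ $\varphi'$ and fragments $\varphi_1, \dots, \varphi_k$, sets $V_i, V$, maps $\kappa_i, \kappa$ meeting the three bullet conditions, and a match $\pi$ of $\varphi'$ witnessing the violation. Pick exactly this data when forming $\Phi_{\ell'}$; the resulting $\psi$ is matched in $\Jj$ by extending $\pi$ to send the (identified) representative of each $V_i$ to $e_i$ and the representative of $V$ to $e$ --- this is well-defined precisely because $\pi$ already maps each $V_i$ and $V$ to a singleton --- so $\Jj \models \psi$, contradicting $\Jj \not\models \Phi_{\ell'}$. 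For the $(\Rightarrow)$ direction: suppose $\Jj \models \Phi_{\ell'}$, so $\Jj \models \psi$ for the $\psi$ built from some choice of $(\varphi, \varphi', \dots)$; read off from a match of $\psi$ a match $\pi$ of $\varphi'$ together with the forced singleton images and the complement-atom condition, which is exactly a violation of $\ell'$-consistency for that same choice of partition data. Thus $\Jj$ is not $\ell'$-consistent.

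The one technical point that needs care --- and which I expect to be the main (minor) obstacle --- is checking that the correspondence between ``matching the conjunction of unary atoms $A_{\varphi_i,V_i}^{\kappa_i}(u)$ after identifying all variables of $V_i$'' and ``$\pi(V_i)$ is a singleton contained in $(A_{\varphi_i,V_i}^{\kappa_i})^\Jj$'' is exact in both directions, and likewise that the complement atom $\bar A_{\varphi',V}^\kappa(u)$ with variables of $V$ identified captures exactly ``$\pi(V) = \{e\} \not\subseteq (A_{\varphi,V}^\kappa)^\Jj$''. Identifying the variables of $V_i$ (resp.\ $V$) in $\psi$ is what forces the image to be a single element rather than merely a set all of whose elements satisfy the concept; one should state this explicitly. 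One must also confirm that the admissibility constraints on $\kappa$, $\kappa_i$ (namely $\kappa_i(x)\le\ell'$ on $\var(\varphi_i)$, $\kappa_i(x)=\ell'$ on $V_i$, $\kappa(x)=\ell'$ on $\var(\varphi)\cap\var(\varphi')$, and $\kappa(x)=\kappa_i(x)$ off $\var(\varphi')$) coincide verbatim in the definition of $\Phi_{\ell'}$ and in the definition of $\ell'$-consistency, so that the same tuples of data parametrize both sides; this is immediate from the definitions but worth remarking. Everything else is a direct translation between the semantics of conjunction/atoms in $\Jj$ and the quantifier-free conditions in the definition of $\ell'$-consistency, with no appeal to $\Jj$ being $\widehat\Bb$-decorated beyond the fact that all the concept names $C_{q,\ell}$, $A_{\varphi,V}^\kappa$, $\bar A_{\varphi,V}^\kappa$ are interpreted in $\Jj$ (so that the atoms of $\psi$ are meaningful there).
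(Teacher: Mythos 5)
Your proof is correct and is exactly the intended argument: the paper gives no separate proof of this lemma, treating it as an immediate unwinding of the construction of $\Phi_{\ell'}$ against the definition of $\ell'$-consistency, which is precisely what you carry out in both directions. The two caveats you flag are indeed the only points requiring care --- that equating the variables of each $V_i$ and of $V$ in $\psi$ is what makes a match of $\psi$ correspond exactly to a match $\pi$ of $\varphi'$ with singleton images $\pi(V_i)=\{e_i\}$ and $\pi(V)=\{e\}$, and that $\bar A^{\kappa}_{\varphi,V}$ must be read as the complement of $A^{\kappa}_{\varphi,V}$, which is guaranteed in the interpretations to which the lemma is applied because they are models of $\Kk$ together with its complementation axioms.
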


Let $K=M(n-\ell'+1)^2$. Let $t$ be the maximal number of binary atoms in one CQ in $\Phi_{\ell'}^{(K)}$.
% and let $t = \max(t_0,K+1)$. 
(Note that if $\ell'>n$, the query $\Phi_{\ell'}$ is a UCQ and $\Phi_{\ell'}^{(K)}$ coincides with $\Phi_{\ell'}$.)
Fix $m = t^2$ and let $\Ii'$ be an $m$-proper colouring of $\Ii$. On each infinite branch, select the first bag $\Mm$ such that for some bag $\Mm'$ higher on this branch, the $m$-neighbourhood of the target element $e$ of the edge from the parent of $\Mm$ to $\Mm$ is isomorphic to the $m$-neighbourhood of the target $e'$ of the edge from the parent of $\Mm'$ to $\Mm'$. Because the number of non-isomorphic $m$-neighbourhoods in a structure of bounded degree is bounded, the depth of the selected bags in the tree of bags is also bounded. The set of selected bags is finite and forms a maximal antichain.  Let $\Ff$ be the interpretation obtained by taking the union of all strict ancestors of the selected bags, and for each element $e$ as above, redirect the edge coming from the parent of $\Mm$ to $e'$.

Clearly, $\Ff$ is a finite level-$\ell$ interpretation. It is routine to check that $\Ff\models^\Ee \Kk$. It remains to prove that $\Ff$ is $\ell'$-consistent. We know that $\Ii$ is $\ell'$-consistent. By Lemma~\ref{lem:consistency-as-evaluation}, $\Ii \not \models \Phi_{\ell'}$. By Lemma~\ref{lem:bounded-crpq} and Fact~\ref{fact:bounded}, $\Ii \not \models \Phi_{\ell'}^{(K)}$. By Fact~\ref{fact:coloured-blocking}, $\Ff \not\models \Phi_{\ell'}^{(K)}$. By construction, $\Ff$ satisfies the assumptions of Lemma~\ref{lem:bounded-crpq}. Hence, by Lemma~\ref{lem:bounded-crpq} and Fact~\ref{fact:bounded}, $\Ff \not\models \Phi_{\ell'}$. 
We conclude that $\Ff$ is $\ell'$-consistent using Lemma~\ref{lem:consistency-as-evaluation}.

Thus we have proved the converse of Lemma~\ref{lem:unravelling-flat}.

\begin{restatable}{lemma}{lemfoldingback} 
\label{lem:folding-back}
If there exists an $\ell'$-flat $(\ell,\ell')$-model of $\Kk$ modulo $\Ee$ with bounded degree and bag size then there exists
a finite $(\ell,\ell')$-model of $\Kk$ modulo $\Ee$.
\end{restatable}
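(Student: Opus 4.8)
The plan is to prove this as the converse of Lemma~\ref{lem:unravelling-flat}. I would fix an $\ell'$-flat $(\ell,\ell')$-model $\Ii$ of $\Kk$ modulo $\Ee$ of bounded degree whose bags have size at most $M$, and ``fold it back'' into a finite interpretation, preserving both modelhood and $\ell'$-consistency. The tool is coloured blocking (Fact~\ref{fact:coloured-blocking}), but it preserves only \emph{non}-entailment of conjunctive queries with boundedly many binary atoms, whereas $\ell'$-consistency talks about CRPQs. Hence the crux of the argument is to re-express $\ell'$-consistency --- over $\Ii$ and over the folded structure alike --- as failure of one fixed UCQ whose disjuncts have few binary atoms.

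First I would apply Lemma~\ref{lem:consistency-as-evaluation} to rephrase ``$\Jj$ is $\ell'$-consistent'' as ``$\Jj\not\models\Phi_{\ell'}$'', where $\Phi_{\ell'}$ is the union of level-$\ell'$ CRPQs obtained from the partitions occurring in the definition of $\ell'$-consistency. Next, by Lemma~\ref{lem:bounded-crpq} (which rests on the thread-length bound of Lemma~\ref{lem:bounded-paths}), every level-$\ell'$ CRPQ is bounded by $K := M(n-\ell'+1)^2$ over any $\widehat\Bb$-decorated interpretation built from disjoint level-$\ell'$ pieces of size at most $M$ connected by edges of level below $\ell'$ --- which covers both $\Ii$ and the interpretation to be constructed. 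Fact~\ref{fact:bounded} then lets me replace $\Phi_{\ell'}$ by the finite UCQ $\Phi_{\ell'}^{(K)}$; I set $t$ to be the maximal number of binary atoms in a disjunct of $\Phi_{\ell'}^{(K)}$ and $m := t^2$. (If $\ell'>n$ this step is vacuous, since $\Phi_{\ell'}$ is already a UCQ.)

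Then I would take an $m$-proper colouring $\Ii'$ of $\Ii$, which exists by Fact~\ref{fact:coloured-blocking} as $\Ii$ has bounded degree. On each infinite branch of the tree of bags I select the first bag $\Mm$ whose entry point (the target of the edge from its parent) has, in $\Ii'$, an $m$-neighbourhood isomorphic to that of the entry point $e'$ of some bag $\Mm'$ strictly higher on the same branch; because a bounded-degree structure realises boundedly many $m$-neighbourhood types, these bags sit at bounded depth and form a finite maximal antichain. Let $\Ff$ be the union of all strict ancestors of the selected bags with the edge into each such $\Mm$ redirected to the corresponding $e'$. Then $\Ff$ is finite, level-$\ell$, and still made of disjoint level-$\ell'$ bags of size $\le M$ connected by edges of level $<\ell'$, so Lemma~\ref{lem:bounded-crpq} applies to it too.

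It remains to verify the two model conditions. That $\Ff\models^\Ee\Kk$ is routine: redirection moves edge targets only between elements with isomorphic $m$-neighbourhoods, hence of the same unary type, and the environment as well as the relaxed semantics of existential and universal restrictions depend only on unary types read off the unchanged $m$-neighbourhoods. For $\ell'$-consistency: $\Ii$ is $\ell'$-consistent, so $\Ii\not\models\Phi_{\ell'}$ by Lemma~\ref{lem:consistency-as-evaluation}, hence $\Ii\not\models\Phi_{\ell'}^{(K)}$ by the boundedness argument and Fact~\ref{fact:bounded}; since the disjuncts of $\Phi_{\ell'}^{(K)}$ have at most $t=\sqrt m$ binary atoms and $\Ff$ is obtained from $\Ii'$ by $m$-neighbourhood-preserving redirections, Fact~\ref{fact:coloured-blocking} yields $\Ff\not\models\Phi_{\ell'}^{(K)}$; finally, applying Lemma~\ref{lem:bounded-crpq} and Fact~\ref{fact:bounded} to $\Ff$ gives $\Ff\not\models\Phi_{\ell'}$, so $\Ff$ is $\ell'$-consistent by Lemma~\ref{lem:consistency-as-evaluation}. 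Thus $\Ff$ is a finite $(\ell,\ell')$-model of $\Kk$ modulo $\Ee$. The main obstacle is the second step: securing the uniform path-length bound $K$ so that it is valid \emph{before and after} the fold-back, which is exactly what permits translating to a UCQ, applying coloured blocking, and then translating back.
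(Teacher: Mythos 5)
Your proposal is correct and follows essentially the same route as the paper's own argument: re-express $\ell'$-consistency as non-entailment of $\Phi_{\ell'}$ via Lemma~\ref{lem:consistency-as-evaluation}, bound witnessing paths by $K=M(n-\ell'+1)^2$ to pass to the UCQ $\Phi_{\ell'}^{(K)}$, then fold the tree of bags back by coloured blocking with $m=t^2$ and transfer non-entailment in both directions. The construction of $\Ff$, the choice of the antichain of blocked bags, and the final chain of (non-)entailments all match the paper's proof.
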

 
Combining Lemmas~\ref{lem:consistent},~\ref{lem:unravelling-flat}, and~\ref{lem:folding-back}, we get that  there is a finite $(\ell,\ell')$-model of $\Kk$ modulo $\Ee$ iff there is 
%a regular 
a bounded-degree $\ell'$-flat model of $\Kk$ modulo $\Ee$ whose bags are $\ell'$-consistent and have bounded size. As in an $\ell'$-flat model each bag is a level-$\ell'$ interpretation, when we restrict our search to one-bag models the problem is an instance of the $(\ell', \ell')$-model problem. Models consisting of multiple bags can be built coinductively top-down by means of a greatest fixed point algorithm (similar to type elimination), using  the $(\ell', \ell')$-model problem to check if each bag exists.

\begin{restatable}{lemma}{lemalgoflat}
\label{lem:algo-flat}
%Assume $\ell<\ell'$. 
The $(\ell,\ell')$-model problem for an $\ALC$ KB $\Kk$, a UCRPQ $\Phi$, and an environment $\Ee$ can be solved in time \[2^{O(\|\Kk\|)+ 2^{\poly(\|\Phi\|)}}\] given an oracle for the $(\ell', \ell')$-model problem (with the same UCRPQ and TBox).
\end{restatable}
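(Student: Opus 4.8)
# Proof Proposal for Lemma~\ref{lem:algo-flat}

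The plan is to give an algorithm that searches for a bounded-degree $\ell'$-flat model of $\Kk$ modulo $\Ee$ whose bags are $\ell'$-consistent level-$\ell'$ interpretations of bounded size, and to argue that by the combination of Lemmas~\ref{lem:consistent}, \ref{lem:unravelling-flat}, and~\ref{lem:folding-back} the existence of such an object is equivalent to the existence of a finite $(\ell,\ell')$-model. First I would fix the relevant size bounds extracted from Lemma~\ref{lem:unravelling-flat}: a degree bound $D$ and a bag-size bound $M$, both of which are (at most) $2^{O(\|\Kk\|)+2^{\poly(\|\Phi\|)}}$ since the number of fragments of $\Phi$ is $2^{\poly(\|\Phi\|)}$ and the number of unary types is $2^{\|\Kk\|+2^{\poly(\|\Phi\|)}}$. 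A candidate \emph{bag-with-interface} is a finite level-$\ell'$ interpretation of size at most $M$, together with a designated root element (the target of the incoming inter-bag edge, except at the global root) and a list of at most $D$ outgoing inter-bag edges, each labelled by a role of level strictly below $\ell'$ and by the unary type it is required to point to. There are at most $2^{O(\|\Kk\|)+2^{\poly(\|\Phi\|)}}$ such candidates up to isomorphism, and for each we can check in time $2^{O(\|\Kk\|)+2^{\poly(\|\Phi\|)}}$ whether the bag itself is $\ell'$-consistent, whether it is a model of $\Tt$ under the relaxed semantics induced by the environment obtained from $\Ee$ and from the declared outgoing edges, and whether it realizes only allowed types --- this is exactly one call to the $(\ell',\ell')$-model oracle with a suitably restricted ABox and environment, as anticipated in the text preceding the lemma.

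Next I would run a greatest-fixed-point (type-elimination style) computation over the set of candidate bags-with-interface. A candidate \emph{survives} a round if, for every one of its declared outgoing edges --- demanding a child bag rooted at an element of the prescribed unary type, such that the child's root together with the parent's source element satisfies the $\ell'$-consistency condition for the single edge-bag joining them (by Lemma~\ref{lem:consistentcomp}, consistency of a tree-like interpretation reduces to consistency of all bags and edge-bags) --- there is still a surviving candidate that can serve as that child. Iterating removes all candidates that cannot be the root of an infinite $\ell'$-flat tree of surviving bags; the fixed point is reached after at most $2^{O(\|\Kk\|)+2^{\poly(\|\Phi\|)}}$ rounds, each round doing $2^{O(\|\Kk\|)+2^{\poly(\|\Phi\|)}}$ oracle-free work, so the total cost (excluding the oracle calls used to precompute which candidates are internally valid) is $2^{O(\|\Kk\|)+2^{\poly(\|\Phi\|)}}$. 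The algorithm answers ``yes'' iff after stabilization some surviving candidate is compatible with the trivial ABox of $\Kk$ and with the original environment $\Ee$ at the global root (no incoming edge, root realizing the ABox type).

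For correctness: if the fixed point contains a suitable root candidate, one unfolds it into a (possibly infinite) $\ell'$-flat interpretation whose bags are all internally valid and whose edge-bags are $\ell'$-consistent; by Lemma~\ref{lem:consistentcomp} the whole interpretation is $\ell'$-consistent, it is a level-$\ell$ interpretation (all edges, inside bags and between bags, have level at least $\ell$ because each bag is level-$\ell'$ with $\ell'\ge\ell$ and inter-bag edges are inherited from a level-$\ell$ structure), it is a model of $\Kk$ modulo $\Ee$, and it has bounded degree and bag size --- so Lemma~\ref{lem:folding-back} turns it into a finite $(\ell,\ell')$-model. Conversely, if a finite $(\ell,\ell')$-model exists, Lemma~\ref{lem:unravelling-flat} yields a bounded-degree $\ell'$-flat $(\ell,\ell')$-model with bounded bags, whose bag-and-edge decomposition witnesses that every bag it uses is an internally valid candidate that survives every elimination round (a straightforward induction on the unfolding tree, reading off from the model itself a witness child for each elimination step). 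Hence the fixed point is non-empty and contains a root candidate.

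The main obstacle I expect is bookkeeping the interface data correctly: the environment handed to the $(\ell',\ell')$-oracle for a given bag must simultaneously encode (i) the part of the ambient context coming from the \emph{parent} edge and from $\Ee$, and (ii) the existential demands created by the bag's own outgoing edges, while the $\ell'$-consistency of each edge-bag --- which involves the $A^\kappa_{\varphi,V}$ concepts on \emph{both} endpoints --- has to be checkable from the finite interface alone. Getting the quantifier alternation of the fixed point right (greatest fixed point here, because inter-bag edges may point ``downward'' to level-$\ell$ edges and the tree of bags can be infinite, in contrast to the least-fixed-point bottom-up construction of Lemma~\ref{lem:algo-connected}) is the conceptual crux; once the interface abstraction is pinned down, the complexity accounting is routine and matches the stated bound.
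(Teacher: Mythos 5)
Your overall strategy matches the paper's: reduce to the existence of a bounded-degree $\ell'$-flat model via Lemmas~\ref{lem:unravelling-flat} and~\ref{lem:folding-back}, use Lemma~\ref{lem:consistentcomp} to make $\ell'$-consistency local to bags and edge-bags, run a greatest-fixed-point (co-inductive) elimination because the tree of bags may be infinite, and delegate the existence of each bag to the $(\ell',\ell')$-oracle. However, there is a genuine gap in the complexity accounting, located exactly at the ``interface abstraction'' you flag as the crux but do not resolve. You take as the objects of the fixed point concrete bags-with-interface, i.e.\ explicit level-$\ell'$ interpretations of size at most $M$ together with a root and an outgoing-edge list of length at most $D$, and you claim there are $2^{O(\|\Kk\|)+2^{\poly(\|\Phi\|)}}$ of them up to isomorphism. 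This is false: $M$ and $D$ are themselves of order $2^{O(\|\Kk\|)+2^{\poly(\|\Phi\|)}}$, and the number of interpretations with $M$ elements over the enriched signature (which contains the $2^{\poly(\|\Phi\|)}$ concepts $A^\kappa_{\varphi,V}$ and $C_{q,\ell}$) is at least $2^{M}$, hence doubly exponential in $\|\Kk\|$. Both the enumeration of candidates and the number of elimination rounds then exceed the stated bound. There is also a tension in your use of the oracle: if the bag is enumerated explicitly, the oracle (which only answers existence questions for a given ABox and environment) is not the right tool to validate it, and if the oracle is to find the bag, the bag must not appear in the candidate.

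The repair is to abstract each bag to the \emph{unary type of its root} and push everything else into the environment, which is what the paper does. The sets $\Phi_h$ computed by the fixed point are sets of unary types (so at most $2^{|\CN(\Kk)|+2^{\poly(\|\Phi\|)}}$ of them, giving the bound on the number of rounds). In round $h$ one builds a single environment $\Ee_h=(\Theta,\varepsilon_h)$ by adding $(r,B)$ to $\varepsilon_h(\tau_1)$ whenever some surviving type $\tau_2\in\Phi_{h-1}$ contains $B$ and the two-element edge-bag $\Jj_{(\tau_1,r,\tau_2)}$ is a valid $(\ell,\ell')$-interpretation whose edge has level strictly below $\ell'$ and which satisfies the universal restrictions of $\Tt$; this encodes simultaneously the delegation of existential demands to children and the $\ell'$-consistency of edge-bags, with no per-bag list of outgoing edges ever enumerated. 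A type $\tau$ survives iff the oracle confirms an $(\ell',\ell')$-model of $(\Tt,\Aa_\tau)$ modulo $\Ee_h$. The witnesses returned in the last round are finitely many, which is what yields the bounds on bag size and degree needed to invoke Lemma~\ref{lem:folding-back}; these bounds are an \emph{output} of the construction, not an input to the enumeration. With this change your correctness argument in both directions goes through essentially as you wrote it.
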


%We can construct an automaton $\Dd_{\Kk,\Phi}$ recognizing such $\Phi$-refutations $\Ii'$. This is a straightforward modification of the construction from the previous section. The differences are: 
% \begin{itemize}
% \item bags and edge-bags are checked for consistency, rather than strong consistency;
% \item the automaton additionally checks that all bags contain only edges of level at least $\ell+1$ and all edge-bags contain only edges of level $\ell$.
% \end{itemize}

%\section{Notes}

%OK, so the number of patterns that can be obtained in the way we discussed during the meeting looks single exponential. The reason is that the new variables introduced when the $\ell$-reducts are computed are actually never equated. This is because we do not equate the variables when they are mapped to the same element, but when they are connected by a connected fragment of the fragment. A dangling rpq followed by a single short edge, when cut through, will produce a connected component that only shares with the main part of the fragment a single variable. So, nothing to equate with. 

%This seems to be true even if we allow multi level reducts: different rpqs are reduced to different levels. This is good, because it seems that the most natural way of flattening the decorations with fragments produces such slightly generalized fragments. Essentially, they consist of layers, corresponding to $\ell$-reducts for decreasing $\ell$ (when going down).

\section{Looking Forward (and Back)}
\label{sec:conclusions}

This paper provides first positive results on finite entailment of navigational queries over DLs ontologies. The main technical contribution is  an optimal automata-based \textsc{2ExpTime} upper bound for finite entailment of UCRPQs in $\ALC$.

Let us  take a look back at our journey. We devised an expansion of the semiautomaton used to represent UCRPQs to keep track of  its runs  that begin in all possible states, on all infixes of the input word. By making interpretations and CRPQs knowledgeable of the runs of this expansion,  we are able to associate levels to them as dictated by the transitions of the expansion.  To solve the entailment problem, we use a recursive method eliminating the lowest level from the query and from the interpretation, and solving then the simpler problem. In particular, we look at problem of finding $(\ell, \ell')$ models, and solve it by  recursively  increasing $\ell$ and $\ell'$ in an alternating way, until both reach a maximum level:  Section~\ref{ssec:queries} and~\ref{ssec:interpretations} respectively address the increment of the query level and  of the model. We finally showed what to do when $\ell =\ell' = n+1$, which as argued, is enough to solve the original finite entailment problem. 

As for future work, the first immediate step is to extend our method to deal with test atoms of the form $A?$, which are usually available in UCRPQs. For the ontology language, we believe our method can be adapted to allow inverses, nominals or counting. Regarding more expressive query languages, the natural next step is to consider \emph{two-way} CRPQs. Our current approach relies on the fact that information only flows forward, and it is not clear whether it can be adapted to deal with queries that can go back.

% \textcolor{blue}{Inverses in the logic a problem?}
% \nb{I think we could maybe say something positive about extending the logic. Nominals should be easy, just like non-trivial ABoxes. Inverses do not seem to cause immediate problems. Functionality/counting (in isolation) would require playing with the decompositions, but the tools do not seem to break. Role hierarchies are unclear, but they might just increase complexity. Extending the language to 2CRPQs seems a much deeper  challenge. }

\section*{Acknowledgments}

This work was supported by Poland's National Science Centre grant 2018/30/E/ST6/00042. It also benefited from inspiring discussions with Charles Paperman.

\bibliographystyle{kr}
\bibliography{references}
%\ifappendix
\appendix
\section{Additional Definitions}

\new{Let us fix $\Kk=(\Tt, \Aa)$.} A \emph{homomorphism} from interpretation $\Ii$ to interpretation
$\Jj$, written as $h : \Ii \to \Jj$ is a function $h : \Delta^\Ii \to
\Delta^\Jj$ that preserves roles, concepts, and individual names: that
is, for all $r \in \roles$, $(h(d), h(e)) \in r^\Jj$ whenever $(d,e)
\in r^\Ii$, for all $A \in \concepts$, $h(d) \in A^\Jj$ whenever $d
\in A^\Ii$, and $h(a)= a$ for all \new{$a \in \Ind(\Aa)$}.

\section{Proofs}

\lemabox*
\begin{proof}
Let $\Ii$ be a finite model of $\Kk = (\Tt, \Aa)$ such that $\Ii\not\models\Phi$. We can think of $\Aa$ as an interpretation with domain \new{$\Ind(\Aa)$}. Then, $\Ii$ contains a subinterpretation $\Ii'$ that is an isomorphic copy of $\Aa$, except that the extension of concepts over \new{$\Ind(\Aa)$} is kept as in $\Ii$. Let $\Jj$ be the interpretation obtained by starting from $\Ii'$ and for each \new{$a \in \Ind(\Aa)$}, adding an isomorphic copy $\Ii_a$ of $\Ii$ sharing only $a$ with $\Ii'$.  Clearly, $\Jj$ is a model of $\Kk$ and $\Jj \not\models\Phi$, because $\Ii$ is a homomorphic image of $\Jj$. Note also, that $\Ii_a\models \Tt$ for all \new{$a\in\Ind(\Aa)$}. This shows that it suffices to look for counter models that are unions of a \emph{core} interpretation $\Jj'$ that is a copy of $\Aa$ up to the interpretations of concept names, and a collection of disjoint \emph{peripheric} models $\Jj_a$ of $\Tt$ such that $\Delta^{\Ii_a} \cap \Delta^{\Jj'} = \{a\}$ for \new{$a \in \Ind(\Aa)$}.

The algorithm iterates through all possible core models $\Jj'$. For each $\Jj'$ it needs to decide if there exist peripheric models $\Jj_a$ for \new{$a\in\Ind(\Aa)$} such that no partial match $\pi$ of $\varphi \in \Phi$ in $\Jj'$ can be extended to a full match of $\varphi$ in the whole $\Jj$. For this it is enough to know if $(\Tt,\Aa')\fentails \Phi'$ where $\Aa'$ ranges over trivial ABoxes using a fixed individual $a$ and concept names from $\CN(\Kk)$, and $\Phi'$ ranges over sets of CRPQs $\varphi_U$ for $\varphi\in\Phi$ and $U \subseteq \var(\varphi)$ defined as follows. The CRPQ $\varphi_U$ is obtained from $\varphi$  by 
\begin{itemize}
    \item dropping all atoms that involve no variable from $U$, as well as all edge atoms involving a variable not in $U$;
    \item replacing each $\Bb_{q,q'}(x,x')$ such that  $x\in U$ and $x\notin U$ with $\Bb_{q,p}(x,a)$ for some $p'$, and each $\Bb_{q,q'}(x,x')$ such that $x\notin U$ and $x' \in U$ with $\Bb_{p,q'}(a,x')$ for some $p$.
\end{itemize}  
Note that $|\Phi'| = 2^{\poly(\|\Phi\|)}$ but all CRPQs in $\Phi'$ have size bounded by $\max_{\varphi \in\Phi} |\varphi|$ and the underlying semiautomaton $\Bb$ is not altered.

The number of possible choices of $\Aa'$ and $\Phi'$ is \[2^{\CN(\Kk)} \cdot 2^{2^{\poly(\|\Phi\|)}}\,.\] The number of distinguishable choices for each peripheric model $\Ii_a$ is 
\[2^{2^{\poly(\|\Phi\|)}}\,.\]
This gives up to \[2^{|\CN(\Kk)|\cdot|\new{\Ind(\Aa)}|}\cdot 2^{2^{\poly(\|\Phi\|)}\cdot|\new{\Ind(\Aa)}|} =  2^{\poly(\|\Kk\|)\cdot 2^{\poly(\|\Phi\|)}}\]
choices for the algorithm. For each choice there are $|\Phi|\cdot|\new{\Ind(\Aa)}|^{O(m)}$ partial matches to consider. The cost of verifying a single match is polynomial in the size of $\Jj'$ and the size of a single $\Phi'$; that is, $\poly(\|\Kk\|,2^{\poly(\|\Phi\|)})$. Overall, the complexity of the algorithm is \[2^{2^{\poly(\|\Phi\|)}\cdot\poly(\|\Kk\|)}\,.\qedhere\]
\end{proof}

\lemlevels*
\begin{proof}
($\Rightarrow$) The run of $\Bb$ from $q$ to $q'$ induces a thread in the run of $\widehat\Bb$. We can split the thread into segments that stay at the same level, giving levels $\ell_1, \ell_2, \dots, \ell_k$, separated by transitions that decrease the level.  Clearly, $1 \leq k \leq n$. The last positions on the subsequent levels give $j_1, j_2, \dots, j_k$. It is easy to check that the corresponding states satisfy the conditions specified in the lemma. 

($\Leftarrow$) The first and third condition, combined with the fact that threads are  non-increasing, imply that between indexes $j_i+1$ and $j_{i+1}$ the thread---from $\delta(q_i, w[j_i+1])$ in $\pp_{j_i+1}$ to $q_{i+1}$ (or $q'$ for $i=m-1$) in $\pp_{j_{i+1}}$---stays at the same level; similarly for the prefix and suffix. Combined with the transitions mentioned in the second condition they give a single thread witnessing a run of $\Bb$ from $q$ to $q'$ on $w$.
\end{proof}

\lemaxioma*
\begin{proof}
It is straightforward to express the condition that no element has incoming edges over different roles from $\Rol(\Kk)$. Pick a fresh concept name  $A_r$ for each $r\in\Rol(\Kk)$ and include axioms $\top \sqsubseteq \forall r.A_r$ and $A_r \sqcap A_s \sqsubseteq \bot$ for all $r,s\in\Rol(\Kk)$ with $r\neq s$.

We next provide an alternative axiomatization of 
\begin{eqnarray}
C_\pp &\sqsubseteq &\forall r. C_{\hat \delta(\pp,r)} \label{eq:trans}\\    
C_\pp \sqcap C_{\pp'} &\sqsubseteq &\bot \label{eq:consistent} \\
 \top &\sqsubseteq  &\bigsqcup_{\pp\in \widehat Q} C_{\pp} \label{eq:states}
\end{eqnarray}
 
%We shall use a set of fresh concept names $C_{q,\ell}$, where  for each pair $(q,\ell)$ with  $q\in Q$ and  $\ell\in \{1, 2, \dots, n\}$.

%We start by describing the transitions in $\Bb = (Q, \Gamma, \delta)$. For each $q \in Q$ and each level $\ell\in \{1, 2, \dots, n\}$, we have the following CIs
%\begin{align*}
%C_{q,\ell} &\sqsubseteq \bigsqcup_{\substack{1 \leq \ell' \leq \ell}} \exists r.C_{\delta (q,r),\ell'} 
%\end{align*}
%
%
To encode conditions \eqref{eq:states} we include the following axioms for every  $\ell \in\{1,\dots,n\}$.
\[\top \sqsubseteq \bigsqcup_{q\in Q} C_{q,\ell}\]
and 
\[C_{q,\ell} \sqcap C_{q',\ell} \sqsubseteq \bot\]
with  $q,q' \in Q$ such that $q\neq q'$.  
These, together with the following will enforce condition \eqref{eq:consistent}.
\[C_{q,\ell} \sqcap C_{q,\ell'} \sqsubseteq \bot\]
for each pair $\ell,\ell'$ with $\ell \neq \ell'$ and each $q \in Q$.  
%$q'\in Q$ with $q' \neq q$, we have 
%

%

To ensure that the transitions of $\widehat \Bb$ are faithfully represented, we will use auxiliary concepts $A^r_{i,j}, D^r_{i,j}, B^r_i$ with $i,j\in\{1, \dots ,n\}$, and $r$ a role name. 

Let $\mathbf{p}=(p_1, \dots, p_n) \in \widehat Q$, and let $\delta(\mathbf{p},r)= (p'_1, \dots, p'_n)$, for some arbitrary (but fixed) role name $r$.
We will use concept $A^r_{i,j}$ is to indicate that $\delta(q
p_i,r) = p'_k $  for some $k \in \{1,2,\dots, i\}$. Further,  $D^r_{i,j}$ will indicate that $\delta(p_i,r)= p'_j$.
Finally, $B^r_\ell$ will help to indicate that the level of $\mathbf{p}'$ is equal to $\ell$. More precisely, if an element $d$ in the domain encodes the state $\mathbf{p}$, and the level of $\delta(\mathbf{p},r)$ is $\ell$, then every $r$-successor of $d$  must satisfy $B^r_\ell$ (see \eqref{eq:levels}) below.
%That is, condition \eqref{eq:trans} 

 We have the following axioms:
\begin{align}
%Initial conditions
%C_{q,1} \sqsubseteq \forall r. (C_{\delta(q,r),1}) \\
%The first available position
\top &\sqsubseteq  A^r_{1,1} \label{eq:ini} 
%
%\delta(q,r) can appear at most at the next available cell %
\end{align}
For every triple $q,\ell,\ell'$ with $q\in Q$ and $\ell,\ell' \in \{1,\dots, n\}$
\begin{align}
C_{q,\ell} \sqcap A^r_{\ell,\ell'} &\sqsubseteq \forall r. (\bigsqcup_{1\leq k \leq \ell'}C_{\delta(q,r),k}) \label{eq:t1} \\
C_{q,\ell} \sqcap \exists r.C_{\delta(q,r),\ell'} &\sqsubseteq D^r_{\ell,\ell'} \label{eq:rec1} %% delta(q,r) actually appears on position l'
% Update the next available position 
\end{align}
For every  $\ell,k$ with $1\leq \ell, k < n$:
\begin{align}
D^r_{\ell,k} \sqcap A^r_{\ell,k} &\sqsubseteq A^r_{\ell+1,k+1} \label{eq:updt1} \\
D^r_{\ell,k'} \sqcap A^r_{\ell, k} &\sqsubseteq A_{\ell+1,k} \label{eq:updt12} \quad \text{ for every } k' < k
%For filling up the blank spaces left we add for every $1\leq k \leq n$ and every $q_i \in Q$. We add the following axiom
\end{align}
For every $k \in \{1, \dots, n\}$,  and every $\ell < k$,
\begin{align}
%To fill the positions that might have been left blank
 D^r_{n,\ell} \sqcap A^r_{n,k} &\sqsubseteq  \forall r. B^r_k \label{eq:levels}
%and every $k$ with  $\ell \leq k < n$, we have the following axiom:
    \end{align}
    And for every $\ell < n$,
    \begin{align}
     B^r_\ell &\sqsubseteq (\bigsqcup_{i<j} (C_{q_i,\ell} \sqcap C_{q_j,\ell+1}))    \label{eq:fill}
    \end{align}
%Integrity constraints
Finally, we require that for every $\ell,k,k' \in \{1, \dots , n\}$ such that $k\neq k'$:
\begin{align}
A^r_{\ell,k} \sqcap A^r_{\ell,k'} &\sqsubseteq \bot \label{eq:ic1}\\
D^r_{\ell,k} \sqcap D^r_{\ell,k'} &\sqsubseteq \bot \\
B^r_{k} \sqcap B^r_{k'} & \sqsubseteq \bot \label{eq:ic2}
\end{align}
Intuitively, the axioms encode the listing order of $(\delta(p_1), \dots, \delta(p_n))$ in $\hat \delta(\mathbf{p})= (p'_1, \dots, p'_n)$ as follows.
\eqref{eq:ini} encodes that the first position of the tuple is (the only)  available for $\delta(p_1,r)$. Further, by \eqref{eq:t1} we have that if $q=p_\ell$ and the next available position for  $p_\ell$ is $\ell'$, then $\delta(p_\ell,r) = p'_{k}$ for some $1\leq k \leq \ell'$, which means in particular that 
$\delta(p_1,r) = p'_1$. 

As mentioned above, \eqref{eq:rec1} is used for ``recording" the level of $\delta(p_\ell,r)$ using the concept $D^r_{\ell, \ell'}$. That is, $D^r_{\ell, \ell'}$ holds whenever $\delta(p_\ell,r)= p'_{\ell'}$. 

Clearly, if $\delta(p_\ell,r)= p'_k$, and its next available position was $k$, then the next available position for $\delta(p_{\ell+1})$ is $k+1$. This situation is captured by  \eqref{eq:updt1}. On the other hand, if $\delta(p_\ell,r)$ does not takes position $k$ (which is only possible if $\delta(p_\ell,r) = p'_{k'}$, with $k' < k$) then $k$ is available for $\delta(p_{\ell+1},r)$, as captured by \eqref{eq:updt12}.

Now, we need to account for the positions not taken by any $\delta(p_i,r)$. By the way the positions are taken, it is enough to record the smallest position unused. This information is encoded using the concept $B^r_k$.  
 Thus, if $A^r_{n,k}$ and $D_{n,\ell}$, for some $\ell < k$, are both satisfied then $k$ is the next available position for listing the remaining states ordered as in $Q= q_1, \dots q_n$. This is captured by~\eqref{eq:levels} and \eqref{eq:fill}.  
 
Finally, the role of CIs \eqref{eq:ic1}--\eqref{eq:ic2} is to ensure the consistency of the information encoded.

With this intuition in mind, it is not difficult to see~\eqref{eq:trans} is faithfully encoded. 
\end{proof}

\lemreach*
\begin{proof}
($\Rightarrow$) This is obvious because the definition of $\Bb^\Ii_{q,q'}$ requires a path from $e$ to $e'$ in $\Ii$.

($\Leftarrow$)
Consider a path from $e$ to $e'$ in $\Ii$ and the corresponding run of the automaton $\widehat\Bb$.
We will focus on the thread starting in the state $q$.
It starts on level $\ell$ because $e \in C^\Ii_{q,\ell}$ and cannot drop below level $\ell$ because $\Ii$ is a level-$\ell$ interpretation. Hence, the thread ends in $e'$ on level $\ell$. But the state in $e'$ on level $\ell$ is $q'$.
Thus, this thread corresponds to a correct run of $\Bb$ from $q$ to $q'$.
\end{proof}

\lemdecoratemodel*
\begin{proof}

$\Ii \times \widehat\Bb$ is a $\widehat\Bb$-decorated interpretation by construction. 

The mapping  $(e, r, \pp) \mapsto e$ from $\Ii \times \Bb$ to $\Ii$ is a homomorphism. Should some $\phi\in\Phi$ be matched in $\Ii \times \widehat\Bb$, one could compose the match with the homomorphism above to obtain a match in $\Ii$.

Assume that $\Ii \models \Kk$. Satisfaction of the ABox transfers directly to $\Ii \times \Bb$. Let us see that  $\Ii \times \Bb$ is a model of the TBox of $\Kk$. For CIs of the forms $\bigcap_i A_i \sqsubseteq \bigsqcup_j B_j$ and $A \sqsubseteq \forall r.B$ this follows from the existance of a homomorphic mapping from $\Ii \times \Bb$ to $\Ii$, described above. 
For CIs of the form $A \sqsubseteq \forall r.B$ the reason is that the transition function of $\widehat \Bb$ is defined for each state $\pp$ of $\widehat\Bb$ and each $r\in\Rol(\Kk)$, thus each $r$-edge originating in $e$ will have its counterpart originating in $(e,s,\pp)$ for each $s$ and $\pp$.
\end{proof}

\lemconsistent*
\begin{proof}
By contradiction, suppose that $\Ii\models\Phi$. Then there exists a match $\pi$ of some $\phi \in \tilde \Phi$ in $\Ii$. Take any element $e$ in the image of $\pi$. The definition of consistency applied for the trivial partition  with $\varphi = \varphi'$  
implies that $e \in A^\Ii_{\varphi,V}$ for $V=\pi^{-1}(a)$.
\end{proof}

\lemconsistentcomp*
\begin{proof}

Left to right implication is obvious. For right to left implication assume the contrary, that all bags and edge-bags are $\ell$-consistent, but there is:
\begin{itemize}
    \item a fragment $\varphi$ of $\Phi$,
    \item a partition of $\varphi$ into a CRPQ $\varphi'$ of level $\ell$ and fragments $\varphi_1, \varphi_2, \dots, \varphi_k$, and sets $V$, $V_1, V_2, \dots, V_k$ such that:
    \begin{itemize}
        \item  $\var(\varphi_i)\cap \var(\varphi_j) = \emptyset$ for $i\neq j$,
        \item $V_i = \var(\varphi_i) \cap \var(\varphi')$,
        \item $\emptyset \neq V\subseteq \var(\varphi) \cap \var(\varphi')$;
    \end{itemize}
    \item a match $\pi$ for $\varphi'$ in $\Ii$ and functions $\kappa$, $\kappa_1, \dots, \kappa_k$ such that:
    \begin{itemize}
        \item $\pi(V_i) = \{e_i\} \subseteq \big(A_{\varphi_i, V_i}^{\kappa_i}\big)^{\Ii}$ for all $i$,
        \item $\pi(V) = \{e\} \not\subseteq (A_{\varphi, V}^\kappa)^{\Ii}$,
        \item $\kappa_i(x) \leq \ell$ for all $x \in  \var(\varphi_i)\,$,
        \item $\kappa(x) = \kappa_i(x)$ for all $x \in  \var(\varphi_i) \setminus V_i\,$,
        \item $\kappa (x) = \ell$ for all $x \in \var(\varphi) \cap \var(\varphi')\,$.
    \end{itemize}
\end{itemize}
For each RPQ atom $\Bb_{q,q'}(x, y)$ in $\varphi'$, choose a path from $\pi(x)$ to $\pi(y)$ witnessing that the atom is satisfied.
Pick the parameters above and the witnessing paths for which 
%to guarantee that 
$\pi$ spans through the smallest number of bags and edge-bags (we count a bag if some edge atom in $\varphi'$ is mapped by $\pi$ to an edge of this bag, or if some witnessing path shares an edge with this bag).
Note that the match of $\varphi'$ given by $\pi$ and the witnessing paths is connected. This is because the whole query $\varphi$ is connected and because $\pi(V_i)$ consists of just one element for each $i$ -- the query $\varphi'$ itself might not be connected, although it would be if we equated all variables in each $V_i$.
% (more specifically, we look only at edges, including edges in the paths witnessing RPQs; note that if there is some isolated variable, with no adjacent edges and RPQs, either $\varphi'$ does not contain any edges, and then it's easy to see that the image of $\pi$ consists of just one element, so some bag is inconsistent -- or $\pi$ maps some non-isolated variable to the same element, so we can safely ignore it; that's because $\varphi$ is connected, and one can look at the path in $\varphi$ between the isolated variable and some ``nearest'' other variable from $\varphi'$, and the fragment containing this path, and the variables shared between $\varphi'$ and this fragment -- both must be there, and mapped to the same element by $\pi$; the ``nearest'' variable might also be isolated, but then we can repeat the reasoning, ignoring the isolated variables that were already considered).
The number of bags $\pi$ spans through must be at least two: were it contained in one bag, this bag would be inconsistent. Essentially, we will show that we can derive the fact that $e \in (A_{\varphi, V}^\kappa)^{\Ii}$ from $\ell$-consistency conditions for some matches spanning through smaller number of bags.

Let $b$ be the bag of $e$ (not edge-bag, so it is unique). The match $\pi$ necessarily spans through the bag $b$: otherwise no edge or RPQ is matched inside $b$, so $\varphi'$ consists only of unary atoms, which means that $k = 1$, $\varphi_1 = \varphi$ and $e = e_1$, which easily leads to contradiction.
% Let $\psi', \psi_1, \dots, \psi_m$ be a partition of $\varphi$ in which $\psi'$ is, intuitively, the part of $\varphi'$ that is matched inside $b$, and $\psi_1, \dots, \psi_m$ are fragments.
% More precisely: partition of a CRPQ consists of the \emph{central part} (associated with the set $X'$ in the definition) and external parts (which in our uses are always fragments).
Let $\psi', \psi_1, \dots, \psi_m$ be a partition of $\varphi$ taking into account the bag $b$, match $\pi$ and the chosen witnessing paths, where $\psi_1, \dots, \psi_m$ are fragments. That is, in the definition of a partition:
\begin{itemize}
    \item the initial set $X'$ is the set of variables of $\varphi$ which are mapped by $\pi$ to the bag $b$ (note that $\pi$ is defined only on $\var(\varphi')$);
    \item each RPQ is split (or not) in an appropriate way, depending on whether the corresponding witnessing path has zero, one, or two endpoints in the bag $b$, and the fresh variables are assigned level and state according to the last (or first) elements in $b$ on the corresponding witnessing paths;
    \item the sets $X_i$ are chosen in the way which results in $\psi_i$ being fragments.
\end{itemize}
Let $U_i = \var(\psi_i) \cap \var(\psi')$ and $\pi'$ be a match agreeing with $\pi$ on $\var(\psi')\cap\var(\varphi')$, extended with fresh variables mapped to the appropriate elements on witnessing paths. 
The set $\pi'(U_i)$ consists of exactly one element, call it $e'_i$. Indeed, for some $i$ there is $j$ such that $\psi_i = \varphi_j$ and $U_i = V_j$, and so $\pi'(U_i) = \pi(V_j) = \{e_j\}$; and for other $i$, $\psi_i$ consists of some part of $\varphi'$, possibly merged with some $\varphi_j$, with all the variables shared with $\psi'$ being matched to an endpoint of an edge leaving or entering bag $b$.
% Let $\psi'$ be the central part of this partition.
% can be mapped precisely to the part of the match $\pi$ (and the witnessing paths) inside the bag $b$, by the match $\pi\restriction_{\var(\psi')\cap\var(\varphi')}$ extended with the fresh variables (introduced when splitting RPQs) mapped to the . % (note that, even though $\varphi'$ might be disconnected, we require the match $\pi$ to be preserved, so we can think of variables $V_i$ as equated, and of $\varphi'$ as connected)
% Now consider the partition of $\varphi$ where the central part is $\psi'$ and the match is as above. Let $\psi_1, \dots, \psi_m$ be the rest of this partition,

We claim that for each $i$, $e'_i \in A^{\lambda_i}_{\psi_i, U_i}$ for some appropriate $\lambda_i$, such that we can use $\ell$-consistency for $b$ to show that $e \in (A_{\varphi, V}^\kappa)^{\Ii}$.

To show that, we need to relate all $\varphi_j$ to some $\psi_i$. Specifically, for each $i \in \{1, \dots, m\}$, consider all $j \in \{1, \dots, k\}$ such that $\var(\psi_i) \cap \var(\varphi_j) \neq \emptyset$. Note that variables in all $\psi', \psi_1, \dots, \psi_m$ are exactly the variables of $\varphi$, along with some fresh variables splitting some RPQs, and analogously for $\varphi', \varphi_1, \dots, \varphi_k$ (and each of the fragments $\varphi_j$ shares at least one variable with $\varphi$). Since $\psi'$ can be seen as a part of $\varphi'$ \big(in particular, $\var(\varphi)\cap\var(\psi') \subseteq \var(\varphi)\cap\var(\varphi')$\big), it is easy to see that each $j$ will be assigned to some $i$. % , and since all fragments in a partition have disjoint sets of variables, each $j$ will be assigned to exactly one $i$.
For each $i$, $\psi_i$ can be (yet again) partitioned into $\psi'_i$ (intuitively being the common part of $\varphi'$ and $\psi_i$) and $\varphi_j$ for all $j$ assigned to this $i$. There is also a match $\pi_i$ agreeing with $\pi$ on $\var(\psi_i')\cap\var(\varphi')$, as usual, extended with fresh variables from $U_i$ mapped to the appropriate elements on witnessing paths.
Since this match spans through fewer bags than $\pi$ (it does not span through the bag $b$, as all edges and parts of RPQs that were mapped inside $b$ are in $\psi'$), the $\ell$-consistency condition must be satisfied for this choice of parameters; since $\pi(V_j) = \{e_j\} \subseteq \big(A_{\varphi_j, V_j}^{\kappa_j}\big)^{\Ii}$ for all $j$, we get that $e'_i \in A^{\lambda_i}_{\psi_i, U_i}$ for $\lambda_i$ which agrees with $\kappa_j$ on all $\var(\varphi_j) \setminus V_j$, and is equal to $\ell$ on all other variables.
Using this fact for all $i$ and using $\ell$-consistency in the bag $b$ for the match $\pi'$, we get that $e \in (A_{\varphi, V}^\kappa)^{\Ii}$ for $\kappa$ which agrees with each $\kappa_i$ on all variables from $\varphi_i$ apart from $V_i$ and is equal to $\ell$ on all other variables, which is exactly what needed to be shown.
\end{proof}

\lemunravellingconnected*
\begin{proof}
This is proved by routine unravelling. Let $\Ii$ be a finite $(\ell,\ell')$-model of $\Kk$ modulo $\Ee$. For each element $d$ in $\Ii$ define $\Ii_d$ as the subinterpretation of $\Ii$ obtained by restricting the domain of $\Ii$ to the elements in the maximal strongly connected subset of $\Delta^\Ii$ that contains $d$.
\new{Recall that the ABox $\Aa$ of $\Kk$ is trivial. Let $a$ be the unique element of $\Ind(\Aa)$.}  Begin the construction of a tree-like model $\Jj$ by taking a copy of $\Ii_a$. Then, as long as there exists an element $e$ in $\Jj$ and a CI $A \sqsubseteq \exists r. B$ in the TBox of $\Kk$ such that $e\in A^\Jj$ but there is yet no $e'\in B^\Jj$ such that $(e,e')\in r^\Jj$, find the original $d$ of $e$ in $\Ii$ and an element $d'\in B^\Ii$ such that $(d,d')\in r^\Ii$. Add to $\Jj$ a copy of $\Ii_{d'}$ as a new bag, with an $r$-edge from $e$ to the copy of $d'$. This construction gives a finite interpretation: the height of the tree of bags associated to $\Jj$ is bounded by the height of the DAG of strongly connected components of $\Ii$. It is straightforward to check that $\Jj$ is a level-$\ell$ model of $\Kk$ modulo $\Ee$. It is also clear that $\Jj$ can be mapped homomorphically to $\Ii$ by mapping each element of $\Jj$ to its original in $\Ii$. Because $\ell'$-consistency is defined in terms of forbidden matches it follows immediately that $\ell'$-consistency of $\Ii$ implies $\ell'$-consistency of $\Jj$.
\end{proof}

\crpqSCCreducts*
\begin{proof}
Use Lemma \ref{lem:reach} and the definition of reducts.
\end{proof}

\lemenvreduct*
\begin{proof}[Proof]
The proof will use constructions that are very similar to the ones used in the proof of Lemma~\ref{lem:consistentcomp}.
Let us start with right to left implication. Let $\Ii$ be some $(\ell+1)$-consistent model of $\Kk$ modulo $\Ee'$ for some $(\ell+1)$-reduct $\Ee'$ of $\Ee$. We will show that the exact same model $\Ii$ is a strongly $\ell$-consistent model of $\Kk$ modulo $\Ee$ (note that strong $\ell$-consistency does not mention concept names from $\CN^\Phi_{\ell+1}$, so adjusting them is not necessary for this implication). Assume the contrary; so there is:
\begin{itemize}
    \item a fragment $\varphi$,
    \item a partition of $\varphi$ into a CRPQ $\varphi'$ of level $\ell$ and fragments $\varphi_1, \varphi_2, \dots, \varphi_k$, and sets $V$, $V_1, V_2, \dots, V_k$ such that:
    \begin{itemize}
        \item  $\var(\varphi_i)\cap \var(\varphi_j) = \emptyset$ for $i\neq j$,
        \item $V_i = \var(\varphi_i) \cap \var(\varphi')$,
        \item $\emptyset \neq V\subseteq \var(\varphi) \cap \var(\varphi')$;
    \end{itemize}
    \item a match $\pi$ for some $(\ell+1)$-reduct $\psi'$ of $\varphi'$ in $\Ii$ and functions $\kappa$, $\kappa_1, \dots, \kappa_k$ such that:
    \begin{itemize}
        \item $\pi(V_i) = \{e_i\} \subseteq \big(A_{\varphi_i, V_i}^{\kappa_i}\big)^{\Ii}$ for all $i$,
        \item $\pi(V) = \{e\} \not\subseteq (A_{\varphi, V}^\kappa)^{\Ii}$,
        \item $\kappa_i(x) \leq \ell$ for all $x \in  \var(\varphi_i)\,$,
        \item $\kappa(x) = \kappa_i(x)$ for all $x \in  \var(\varphi_i) \setminus V_i\,$,
        \item $\kappa (x) = \ell$ for all $x \in \var(\varphi) \cap \var(\varphi')\,$.
    \end{itemize}
\end{itemize}
For each RPQ atom $\Bb_{q,q'}(x, y)$ in $\psi'$, choose a path from $\pi(x)$ to $\pi(y)$ witnessing that the atom is satisfied.

When constructing $\Ee'$, the exact same parameters were considered ($\varphi$, its partition into $\varphi'$, $\varphi_1, \varphi_2, \dots, \varphi_k$, $(\ell+1)$-reduct $\psi'$ of $\varphi'$, the set $V$ and the function $\kappa$), sets $U_1, \dots, U_m$ and fragments $\psi_1, \dots, \psi_m$ were defined (note that for a fixed match $\pi$ and witnessing paths, partial matches and partial witnessing paths for $\psi_i$ can also be obtained, for the parts common with $\psi'$), and a choice was made:
\begin{itemize}
\item either pick $i$ such that $U_i = \emptyset$ and remove  all unary types that contain any $A^{\lambda_i}_{\psi_i,W_i}$ with $W_i\subseteq \var(\psi_i) \cap \var(\psi')$, 
$\lambda_i\big(\var(\psi_i) \cap \var(\psi')\big)=\{\ell+1\}$, and 
$\lambda_i(x)=\kappa(x)$ for all $x \in \var(\psi_i)\setminus \var(\psi')$ ;
\item  or remove all unary types 
that contain some $A^{\lambda_i}_{\psi_i,U_i}$ with
$\lambda_i\big(\var(\psi_i) \cap \var(\psi')\big)=\{\ell+1\}$ and $\kappa(x)=\lambda_i(x)$ for all $x \in \var(\psi_i)\setminus \var(\psi')$,
for each $i$ such that $U_i \neq \emptyset$, but do not contain $A^\kappa_{\varphi,V}$.
\end{itemize}
For each $i \in \{1, \dots, m\}$ consider a partition of $\psi_i$ into $\psi'_i$ (intuitively being the common part of $\psi_i$ and $\psi'$) and fragments $\varphi_j$ for all $j$ such that $\var(\varphi_j)\cap\var(\psi_i) \neq \emptyset$, and a match $\pi_i$ of $\psi'_i$ agreeing with $\pi$ and the choice of witnessing paths.

If the first choice was made, take the chosen $i$, choose any element $e'$ in the image of $\var(\psi_i)\cap\var(\psi')$ under $\pi_i$, and let $W_i$ be the preimage of $e'$ under $\pi_i$. By $(\ell+1)$-consistency, we see that $e' \in A^{\lambda_i}_{\psi_i, W_i}$, where $\lambda_i$ agrees with $\kappa_j$ (and $\kappa$) on $\var(\varphi_j)\setminus V_j$ for all $j$ such that $\var(\varphi_j)\cap\var(\psi_i) \neq \emptyset$, and is equal to $\ell+1$ on all other variables. However, all unary types containing this concept were forbidden in $\Ee'$, which is a contradiction with $\Ii$ being a model of $\Kk$ modulo $\Ee'$.

If the second choice was made, use $(\ell+1)$-consistency for each $\psi_i$ with $U_i \neq \emptyset$ and its partition as described above, which proves that $e \in A^{\lambda_i}_{\psi_i, U_i}$ for some $\lambda_i$ agreeing with $\kappa_j$ (and $\kappa$) on $\var(\varphi_j)\setminus V_j$ for all $j$ such that $\var(\varphi_j)\cap\var(\psi_i) \neq \emptyset$, and is equal to $\ell+1$ on all other variables. Thus, because of the environment $\Ee'$, $A^\kappa_{\varphi, V}$ must also be satisfied. (Note that the union of $\var(\psi_i) \setminus \var(\psi')$ for all $i$ with $U_i \neq \emptyset$ is equal to the union of $\var(\varphi_j) \setminus V_j$ for all $\varphi_j$ which are not disjoint with all $\psi_i$ considered here).

Now we will prove left to right implication. Assume that $\Ii$ is a strongly $\ell$-consistent model of $\Kk$ modulo $\Ee$.
% , so it is also $(\ell+1)$-consistent \big($(\ell+1)$-reduct of a fragment with all edges and RPQs of level higher than $\ell$ is just itself\big) \todo{well, if we ignore $\kappa$s}. So we just need to show that there is an appropriate $(\ell+1)$-reduct $\Ee'$ of $\Ee$.
Let $\Ii'$ be an interpretation that agrees with $\Ii$ over all role and concept names except $\CN_{\ell+1}^\Phi$, and in which the interpretation of these concept names is \emph{correct} in the following sense. First, for each $e\in \left( A^{\kappa}_{\varphi,V}\right)^{\Ii'}$ with $\CN_{\ell}^\Phi$  we let $e\in \left( A^{\kappa'}_{\varphi,V}\right)^{\Ii'}$ where $\kappa'(x)=\ell+1$ for $x\in V$ and $\kappa'(x)=\kappa(x)$ for $x \in \var(\varphi)\setminus V$. Then, we add element $e$ to $(A^\kappa_{\varphi, V})^{\Ii'}$ with $(A^\kappa_{\varphi, V})^{\Ii'}\in\CN^{\Phi}_{\ell+1}$ if and only if there exists some partition of $\varphi$ into $\varphi', \varphi_1, \dots, \varphi_k$, sets $V_1, \dots, V_k$ and a match $\pi$ for $\varphi'$ in $\Ii'$ with all requirements exactly as in the definition of $(\ell+1)$-consistency, in which additionally $\kappa_i(x) \leq \ell$ for all $x \in \var(\varphi_i)\setminus V_i$ and $\kappa_i(x) = \ell+1$ for all $x \in V_i$, for $i \in \{1, \dots, k\}$.
%\nb{first I wrote here $x \in \var(\varphi_i) \setminus V_i$, but that wouldn't make sense, right? F: For the current def of $\kappas$, $x \in \var(\varphi_i)$ is correct.}. 
Note that this does not leave many choices regarding the partition: all variables $x$ such that $\kappa(x) = \ell+1$ must be in $\varphi'$, all other variables of $\varphi$ must be outside $\varphi'$, so it is even known which RPQ atoms are split; only the levels and states of the fresh variables might differ between different partitions.

The interpretation $\Ii'$ is $(\ell+1)$-consistent, since any partition of some $\varphi$ and a corresponding match (as in the definition of $(\ell+1)$-consistency) for which $\kappa_i(V_i) = \{\ell+1\}$ for some $i$, can be ``unwrapped'' to ones satisfying the correctness condition, as follows. If $e_i \in A^{\kappa_i}_{\varphi_i, V_i}$ and $\kappa_i(V_i) = \{\ell+1\}$, by the correctness condition, there is a partition of $\varphi_i$ and an appropriate match witnessing that. This partition and match of $\varphi_i$ can me ``merged'' into the partition and match of $\varphi$. Applying this procedure for all $i$ such that $\kappa_i(V_i) = \{\ell+1\}$  results in a partition and a match of $\varphi$ to which the correctness condition can be applied.
% (why this \emph{correct} labelling is $(\ell+1)$-consistent? well, if for some partition $\kappa_i(x) = \ell+1$, we can ``unwrap`` it; the proof would go like this: take some $e$ for which some $A_{\varphi,V}$ is missing, take the partition into $\varphi', ...$ for which $(\ell+1)$-consistency fails; for all $i$ such that $\kappa_i(x) = \ell+1$ there is a match for some $\psi'_i$ (for a partition of $\varphi_i$) with all $\lambda_i(x) \leq \ell$, so let's merge $\psi'_i$ with $\varphi'$ -- now all $\kappa_i(x) \leq \ell$, so $A_{\varphi,V}$ should not have been missing)\\

Now we will construct an appropriate $(\ell+1)$-reduct $\Ee'$ of $\Ee$. Consider some parameters $\kappa$, $\varphi, \varphi', \varphi_1, \dots, \varphi_k, \psi'$, $V$, define $\psi_1, \dots, \psi_m, U_1, \dots, U_m$ as when constructing an $(\ell+1)$-reduct of the environment; recall that $\kappa$ uses levels at most $\ell$, and $\kappa\big(\var(\varphi)\cap\var(\varphi')\big) = \{\ell\}$.
We need to choose one of the two options mentioned in the construction. If for some $i$ such that $U_i = \emptyset$, the interpretation $\Ii'$ does not violate the restrictions imposed by the first choice (that is,  there are no elements in concept $A^{\kappa_i}_{\psi_i,W_i}$ in $\Ii'$ for all $W_i$ and $\kappa_i$ as defined during the construction of a reduct), choose this option and this $i$. Otherwise, choose the second option, knowing that in $\Ii'$ for each $i$ such that $U_i = \emptyset$ there is some element $e_i \in A^{\kappa_i}_{\psi_i,W_i}$ for some $W_i$ and $\kappa_i$ such that $W_i\subseteq \var(\psi_i) \cap \var(\psi')$, 
$\kappa_i\big(\var(\psi_i) \cap \var(\psi')\big)=\{\ell+1\}$, and 
$\kappa_i(x)=\kappa(x)$ for all $x \in \var(\psi_i)\setminus \var(\psi')$.

% \item  or remove  all unary types 
% that contain some $A^{\kappa_i}_{\psi_i,U_i}$ with
% $\kappa_i\big(\var(\psi_i) \cap \var(\psi')\big)=\{\ell+1\}$ and $\kappa(x)=\kappa_i(x)$ for all $x \in \var(\psi_i)\setminus \var(\psi')$,
% for each $i$ such that $U_i \neq \emptyset$, but do not contain $A^\kappa_{\varphi,V}$.

We claim that $\Ii'$ is a model of $\Kk$ modulo the environment $\Ee'$ constructed as above.
Suppose this is not the case. By the construction of $\Ee'$, the interpretation $\Ii'$ satisfies all the requirements imposed by the first choice. Hence, it must be the case that $\Ii'$ does not satisfy the requirements imposed by the second choice; that is, in $\Ii'$ there is an element $e$ in  concepts $A^{\kappa_i}_{\psi_i,U_i}$ where
$\kappa_i\big(\var(\psi_i) \cap \var(\psi')\big)=\{\ell+1\}$ and $\kappa(x)=\kappa_i(x)$ for all $x \in \var(\psi_i)\setminus \var(\psi')$,
for all $i$ such that $U_i \neq \emptyset$,
 but not in $A^\kappa_{\varphi,V}$.

For each $i \in \{1, \dots, m\}$ and the witnessing element $e_i \in \left(A^{\kappa_i}_{\psi_i,W_i}\right)^{\Ii'}$ (if $U_i \neq \emptyset$, we let $e_i = e$ and $W_i = U_i$), take the partition and match $\pi_i$ guaranteed by the correctness condition; all variables from $\var(\psi_i) \cap \var(\psi')$ are guaranteed to be in the domain of $\pi_i$.
Merge all these matches together into one match $\pi'$. It is nearly a match of $\psi'$; since $\kappa_i$ determines for each $i$ which RPQ atoms in $\psi_i$ will be split in the partition, and it agrees with $\kappa$ on this matter, we can identify the splitting variables in $\psi'$ and in the domain of $\pi_i$; but these variables \big($\var(\psi')\setminus \bigcup_i\var(\psi_i)$\big) might have different states and levels assigned in $\pi'$; since $\Ii$ is a level-$\ell$ interpretation, the assigned levels are at most $\ell$.
Let $\tilde\psi'$ be an $(\ell+1)$-reduct of $\psi'$ with states and levels of these variables adjusted to match the ones in $\pi'$.
Let $\tilde\pi$ be $\pi'$ adjusted to $\tilde\psi'$ (it is easy to modify a match for a CRPQ to a match of its $(\ell+1)$-reduct). We get that there is a partition of $\varphi$ into $\tilde\varphi'$, $\tilde\varphi_1, \dots, \tilde\varphi_h$, sets $\tilde{V_1}, \dots, \tilde{V_h}$ and the match $\tilde\pi$ of an $(\ell+1)$-reduct $\tilde\psi'$ of $\tilde\varphi'$ to $\Ii'$, where $\tilde\pi(\tilde{V_i}) = \{\tilde{e_i}\}$, $\tilde{e_i} \in \left(A^{\lambda_i}_{\tilde{\varphi_i}, \tilde{V_i}}\right)^{\Ii'}$, and $\lambda_i$ do not use level $\ell+1$.

Using strong $\ell$-consistency of $\Ii$ with these parameters gives us that $e \in \left(A^\lambda_{\varphi,V}\right)^\Ii$ for some $\lambda$. One just needs to show that $\lambda(x) = \kappa(x)$ for all $x \in \var(\varphi)$ to arrive at a contradiction.
The matches $\pi_i$ obtained from correctness condition for $e_i \in \left(A^{\kappa_i}_{\psi_i, W_i}\right)^{\Ii'}$ guarantee that $\kappa_i(x) = \kappa(x)$ for all $x \in \var(\psi_i) \setminus \var(\psi')$. We also know that all $x \in \var(\psi_i) \cap \var(\psi')$ are in the domain of $\pi_i$, so (by the above use of strong $\ell$-consistency) $\lambda(x) = \ell$ for $x \in \var(\psi_i) \cap \var(\psi')$, and $\lambda(x) = \kappa(x)$ for all other variables of $\varphi$, which is exactly what is needed.
%
% We want to use strong $\ell$-consistency in $\Ii$ for some partition of $\varphi$ into $\xi'$, $\xi_1, \dots, \xi_h$ and $(l+1)$-reduct $\omega$ of $\xi'$, such that all variables $x$ for which $\kappa(x) = \ell+1$ are in $\xi'$.
%
% Then, using that $\Ii$ and $\Ii'$ agree on appropriate concept names, we will be able to use $(\ell+1)$-consistency of $\Ii'$ on the same partition (with $\omega$ ) to infer that $e \in A^\kappa_{\varphi,V}$ after all, which will end the proof.
%
% $\varphi'$, $\varphi_1, \dots, \varphi_k$, the $(\ell+1)$-reduct $\psi'$ of $\varphi'$. To obtain an appropriate match $\pi$ for $\psi'$ in $\Ii$ (and $\Ii'$ at the same time, since they agree on the required concept names), we must use the correctness of the concepts $\CN^\Phi_{\ell+1}$ in $\Ii'$.
% For each $i$ such that $U_i \neq \emptyset$, for the element $e \in A^{\kappa_i}_{\psi_i,U_i}$ consider the partition of $\psi_i$ into $\psi_i'$ and some fragments $\xi_1, \dots, \xi_k$, the sets $V_1, \dots, V_k$ and the match $\pi$ guaranteed to exist by the correctness (so $\pi(V_i) = \{e_i\}$, and $e_i \in A^{\lambda_i}_{\xi_i, V_i}$, where $\lambda_i(x) \leq \ell$ for $x \in \var(\xi_i)\setminus V_i$).
% Choose some paths witnessing that the RPQ atoms are satisfied. Since $\kappa_i\big(\var(\psi_i) \cap \var(\psi')\big)=\{\ell+1\}$, we are guaranteed that $\pi$ is defined on $\var(\psi_i) \cap \var(\psi')$.
%
\end{proof}

\lemalgoconnected*
\begin{proof}
By Lemma~\ref{lem:strongly-consistent} it suffices to decide if there is a finite tree-like level-$\ell$ model of $\Kk=(\Tt,\Aa)$ modulo $\Ee=(\Theta,\varepsilon)$ whose edge-bags are $\ell$-consistent and bags are strongly $\ell$-consistent.  Our algorithm will compute the set of unary types that are realizable in such  interpretations of increasing height. Here, by the height of a tree-like interpretation we mean the number of edges on the longest path from the root bag to a leaf bag.

Th algorithm begins from the empty set of types $\Phi_{0} = \emptyset$. In round $h=1, 2, \dots$, based on the set $\Phi_{h-1}$ it computes the set  $\Phi_{h}$ of types that can be realized in models of height $h-1$. The type $\tau$ is added to $\Phi_h$ iff there exists a finite strongly $\ell$-consistent level-$\ell$ model of $\Kk_\tau=(\Tt,\Aa_\tau)$ modulo $\Ee_h$ where \[\Aa_\tau = \left\{A(a) \bigm| A\in\tau\right\}\] for a designated $a\in\mn{N_I}$ and $\Ee_h$ is defined based on $\Ee$ and $\Phi_{h-1}$ as explained below; for the existence test we use Lemma~\ref{lem:envreduct}.

For unary types $\tau_1, \tau_2$ and a role name $r$ let $\Jj_{(\tau_1,r,\tau_2)}$ be the edge-bag built from an element $e_1$ of type $\tau_1$ and an element $e_2$ of type $\tau_2$ connected by an $r$-edge.
We let $\Ee_h=(\Theta,\varepsilon_h)$ and include $(r,B)$ in $\varepsilon_h(\tau_1)$ for $\tau_1\in\Theta$ iff either $(r,B)\in\varepsilon(\tau_1)$ or there exists $\tau_2\in\Theta_{h-1}$ such that $B\in\tau_2$ and $\Ii_{(\tau_1,r,\tau_2)}$ is an $(\ell,\ell)$-interpretation
and satisfies all CIs of the form $A'\sqsubseteq \forall r.B'$ in $\Tt$.
Recall that the first condition amounts to checking that $\Ii_{(\tau_1,r,\tau_2)}$ is $\widehat\Bb$-decorated  ($\Ii_{(\tau_1,r,\tau_2)}\models\widehat\Tt_\Bb$), level-$\ell$, and $\ell$-consistent.

Because the computed sets satisfy
\[\Phi_0 \subseteq \Phi_1 \subseteq \dots \subseteq \Phi_{h-1} \subseteq \Phi_{h} \subseteq \dots ,\] after at most $2^{|\CN(\Kk)|+2^{\poly(\|\Kk\|)}}$ rounds the sets $\Phi_h$ stabilize. The algorithm should return yes iff the last $\Phi_h$ contains a unary type compatible with the assertions made by the Abox $\Aa$ of $\Kk$ about the unique individual it mentions. 
It is not hard to check that each round can be performed in time $2^{O(\|\Kk\|)+2^{\poly(\|\Kk\|)}}$, yielding the desired complexity upper bound. 
\end{proof}

\lemaldiscrete*

\begin{proof}
Let us first see when a discrete interpretation is $(n+1)$-consistent. Consider a partition of a fragment $\varphi$ into $\varphi', \varphi_1,\dots, \varphi_k$ like in the definition of  $(n+1)$-consistency. Because $\varphi'$ has level $n+1$, it must be a UCQ. If $\varphi'$ contains a binary atom, it cannot be matched in a discrete interpretation.  Hence, we can assume that $\varphi'$ contains no binary atoms. Fragments $\varphi_1,\dots, \varphi_k$  share no variables, but $\varphi$ is connected, so $k=1$. It follows that $\varphi_1 = \varphi$ and $V \subseteq \var(\varphi') \subseteq V_1$. Then,  $(n+1)$-consistency reduces to the condition $A^\kappa_{\varphi,V_1} \sqsubseteq A^\kappa_{\varphi,V}$ for all $\kappa$ such that $\kappa(x) = n+1$ for all $x \in V_1$. We can capture $(n+1)$-consistency of the model by replacing $\Ee$ with the environment $\Ee'=(\Theta',\epsilon')$ obtained from $\Ee$ by filtering out unary types that violate this condition. This can be done in time polynomial in the size of $\Ee$.

It remains to decide if there is a discrete model of $\Kk$ modulo $\Ee'$. 
%If $\Aa$ contains a role assertion, there is no discrete model of $\Kk$ modulo $\Ee'$. Otherwise, a discrete model of $\Kk$ modulo $\Ee'$ exists iff
This is the case iff
for the individual $a$ mentioned in $\Aa$ there is a type $\tau \in \Theta'$ compatible with the assertions on $a$ in $\Aa$ such that for each concept inclusion $A \sqsubseteq \exists r. B$ in $\Tt$ if $A \in \tau$ then $(r,B) \in \varepsilon'(\tau)$. This can be checked in time polynomial in the size of $\Kk$ and $\Ee'$.

Overall,  the existence of an $(n+1,n+1)$-model of $\Kk$ modulo $\Ee$ can be decided in time polynomial in the size of $\Kk$ and $\Ee$; that is, in time $2^{O(\|\Kk\|)+2^{\poly(\|\Phi\|)}}$.
\end{proof}

\lemunravellingflat*
\begin{proof}
This is also proved by routine unravelling, much like Lemma~\ref{lem:unravelling-connected}. The difference is that this time for $\Ii_d$ we take the interpretation $\Ii$ with all edges of level strictly below $\ell'$ removed. This unravelling procedure may pass through the same element multiple times on the same branch, so the resulting tree-like structure $\Jj$ may be infinite.  Because new bags are added to $\Jj$ only when a witness is missing in the parent bag, it follows that all edges between bags have level strictly below $\ell'$ (all edges of level at least $\ell'$ are already copied in the parent bag, together with their targets). Hence, $\Jj$ is $\ell'$-flat. The size of each bag is equal to the size of $\Ii$ and the degree within each bag is bounded by the maximal degree in $\Ii$. The number of child bags connected to the same element in the parent bag is bounded by the size of the TBox. Hence, the degree in $\Jj$ is bounded.  Checking that $\Jj$ is a $(\ell,\ell')$-model of $\Kk$ modulo $\Ee$ is straightforward, just like in Lemma~\ref{lem:unravelling-connected}.
\end{proof}

\lemboundedpaths*
\begin{proof}
Suppose a $\widehat\Bb$-decorated CRPQ $\varphi$ is matched in a $\widehat\Bb$-decorated interpretation $\Jj$. Each path in $\Jj$ corresponds to the run $\rho$ of $\widehat B$ obtained by reading the states decorating the elements on the path. Such a path witnesses an RPQ atom $\Bb_{q,q'}(x,x')$ iff the thread of $\rho$ beginning in $q$ ends in $q'$.  If the atom has end level at least $\ell'$, then the level of $q'$ in the last state of $\rho$ must be at least $\ell'$.  Observe however that each edge of level strictly below $\ell'$ brings all threads from levels $\ell'$ and higher at least one level down. Consequently, the path may use at most $n - \ell'$ edges of level strictly below $\ell'$.
\end{proof}

\lemboundedcrpq*
\begin{proof}
Consider a level-$\ell'$ CRPQ $\varphi$ matched in $\Jj$. Consider a witnessing path $e_0, e_1, \dots, e_k$ in $\Jj$. Let $\pp_0, \pp_1, \dots, \pp_k$ be the run of $\widehat \Bb$ corresponding to the witnessing path and let $\ell_0, \ell_1, \dots, \ell_k$ be the thread in $\rho$ that corresponds to the witnessing run $q_0,q_1,\dots, q_k$ of $\Bb$. We have $\ell_0 \geq \ell_1 \geq \dots \geq \ell_k \geq \ell'$. Suppose that for some $i < j$ we have $e_i = e_j$ and $\ell_i = \ell=j$. It follows immediately that $\pp_i = \pp_j$ and $q_i = q_j$. Thus, we can choose a shorter witnessing path by skipping $e_{i+1}, e_{i+2}, \dots, e_{j}$. Consequently, it is enough to look at witnessing paths that visit each element at most $(n-\ell'+1)$ times. From the assumption on the structure of $\Jj$ and from Lemma~\ref{lem:bounded-paths} it follows that every simple witnessing path has length at most $M(n-\ell'+1)^2$. 
\end{proof}

\lemalgoflat*
\begin{proof}

By Lemma~\ref{lem:unravelling-flat} it suffices to decide if there exists an $\ell'$-flat $(\ell,\ell')$-model of $\Kk$ modulo $\Ee$ with bounded degree and bag size. Using Lemma~\ref{lem:consistentcomp} and the definition of $\ell'$-flatness this is amounts to deciding if there exists a (possibly infinite) tree-like model of $\Kk$ modulo $\Ee$ such that each bag is a finite $(\ell',\ell')$-interpretation, each   edge-bag is an $(\ell,\ell')$-interpretation but not level-$\ell'$, and the size of bags and the degree of elements is bounded.

The algorithms is similar to the one in Lemma~\ref{lem:algo-connected}. The main difference is that the model can now be infinite. Suppose for a while, however, that we are interested in computing only finite models. Then we can proceed just like in Lemma~\ref{lem:algo-connected}, computing sets \[\emptyset= \Phi_0 \subseteq \Phi_1 \subseteq \Phi_2 \subseteq \dots\] but as we are after $\ell'$-consistent bags, rather than strongly $\ell'$-consistent, we reduce directly to the $(\ell',\ell')$-model problem for $\Kk_\tau$ defined like before, and $\Ee_h$ define almost like before, the difference being that we additionally require that the edge in $\Jj_{(\tau_1,r,\tau_2)}$ has level strictly below $\ell'$. We do not need to do anything about the size of the bags and the degree, because in a finite interpretation these are always bounded. 

In order to take into account also infinite models, we replace induction by co-induction.  The algorithm proceeds just like described above but it starts from $\Phi_0$ containing all unary types over $\CN(\Kk)$ and concepts $C_{q,k}$ and $A_{\psi,V}^\kappa$. It follows that \[\Phi_0 \supseteq \Phi_1 \supseteq \dots \supseteq \Phi_{h-1} \supseteq \Phi_h \supseteq \dots\,.\]
Like before, the sequence must stabilize after at most $2^{|\CN(\Kk)|+2^{\poly(\|\Kk\|)}}$ steps. We claim that the algorithm can answers yes iff the last computed $\Phi_h$ contains a type compatible with the ABox of $\Kk$. This is because one can built the potentially infinite model top down plugging in as bags the models witnessing the addition of $\tau$ to $\Phi_h$ in the last round. Importantly, the number of these witnesses is finite, because the number of invoked instances of the $(\ell',\ell')$-model problem is finite. Consequently, the size of bags in the constructed model is bounded. The number of child bags attached to each element is bounded by the number of existential restrictions in the TBox of $\Kk$, so the degree in the constructed model is also bounded. The complexity bound follows like in  Lemma~\ref{lem:algo-connected}.
\end{proof}

\end{document}